%% file: A_main.tex
\documentclass[11pt]{article}
\usepackage[margin=1in]{geometry}
\usepackage{amsmath}
\usepackage{amssymb}
\usepackage{amsthm}
\usepackage{enumitem}
\usepackage{bbold}
\usepackage[toc,page]{appendix}
\usepackage{xcolor}
\usepackage{hyperref}
\usepackage{mathtools}
\usepackage{graphicx}
\usepackage{float}
\usepackage{caption}

\usepackage[
  backend=biber,
  style=alphabetic,
  maxbibnames=99,
  maxalphanames=99
]{biblatex}

\newtheorem{theorem}{Theorem}[section]
\newtheorem{definition}[theorem]{Definition}
\newtheorem{claim}[theorem]{Claim}
\newtheorem{lemma}[theorem]{Lemma}

\newtheorem{corollary}[theorem]{Corollary}
\newtheorem{fact}[theorem]{Fact}

\newtheorem{algorithm}[theorem]{Algorithm}

\newenvironment{subproof}[1][\textup{\large\textsf{Proof:}}]{\begin{proof}[#1]}{\end{proof}}

\newtheoremstyle{sltheorem}
{}                
{}                
{\it}        
{}                
{\sffamily\large}       
{.}               
{ }               
{}                
\theoremstyle{sltheorem}
\newtheorem{subclaim}[theorem]{Claim}

\def\ov{\overline}
\newcommand{\ovbar}[1]{\mkern 1.5mu\overline{\mkern-1.5mu#1\mkern-1.5mu}\mkern 1.5mu}
\def\th{^{\text{th}}}
\def\eqdef{\stackrel{\text{def}}{=}}
\DeclareMathOperator{\E}{\mathbb{E}}

\begin{document}

\title{Near-Optimal Bounds for Testing Residual-String Equality \\ and Parenthesis Languages}
\author{Hadar Strauss \\ \textit{Weizmann Institute of Science}}
\maketitle

\input{B_abstract}
\newpage

\tableofcontents
\newpage

\input{C_intro}

\input{D_lb_overview}

\input{E_ub_overview}

\input{F_preliminaries}

\input{Ga_lb_proof_nonadaptive}
\input{Gb_lb_proof_adaptive}

\input{H_ub_proof}

\input{I_ub_dyck}

\input{J_acknowledgments}

\printbibliography[heading=bibintoc]{}

\newpage
\begin{appendices}

\input{K_equal_length_reduction}

\input{L_sd_fact}

\end{appendices}

\end{document}

%% file: B_abstract.tex
\begin{abstract}

Residual-String Equality, denoted $\texttt{ResStringEq}$, is the property consisting of all pairs of strings over $\{0,1,*\}$ that are equal after deleting all `$*$' symbols from them.
This property was first introduced by Fischer, Magniez, and Starikovskaya (SODA 2018), 
who used it to show a lower bound on testing the $\texttt{Dyck}$ languages, where $\texttt{Dyck}_m$ is the language consisting of balanced sequences of parentheses over $m$ parenthesis types.
They showed that testing $\texttt{ResStringEq}$ on inputs of length $n$ requires $\Omega(n^{1/5})$ queries, and presented a reduction from testing $\texttt{ResStringEq}$ to testing $\texttt{Dyck}_m$ where $m \geq 2$.
Furthermore, they showed that $\texttt{Dyck}_m$ can be tested with $O(n^{2/5+\delta})$ queries for every constant proximity parameter, where $\delta>0$ is an arbitrarily small constant.

In this work, we nearly close the remaining gap, by showing that testing $\texttt{ResStringEq}$, and hence $\texttt{Dyck}_m$ where $m\geq 2$, requires $\Omega(n^{2/5})$ queries.
We also show a stronger lower bound of $\Omega(\sqrt{n})$ for testers that make non-adaptive queries.
We establish that the $\Omega(\sqrt{n})$ bound is nearly tight, by presenting a non-adaptive tester for $\texttt{ResStringEq}$ that uses $O(n^{1/2+\delta})$ queries for an arbitrarily small constant $\delta>0$. 
Furthermore, we extend this non-adaptive tester to the $\texttt{Dyck}$ languages, with the same query complexity.
Finally, we improve the dependence on the proximity parameter $\epsilon$ in the tester of Fischer, Magniez, and Starikovskaya, reducing it from $O(1/\epsilon)^{\mathrm{poly}(1/\delta)}$ to $O(1/\epsilon)^{O(\log(1/\delta))}$.

\end{abstract}

%% file: C_intro.tex
\section{Introduction}

Property testing~\cite{GGR98,RS96} studies a relaxed notion of decision problems. 
Rather than deciding exact membership, a property tester is only required to distinguish (w.h.p.) between objects in the property and objects that are ``\emph{$\epsilon$-far}'' from the property, where $\epsilon>0$ is a given proximity parameter.
An object $x\in \Sigma^n$ is considered $\epsilon$-far from a property $\Pi_n \subseteq \Sigma^n$ if it differs from any object in the property on more than $\epsilon \cdot n$ positions.

Standard decision problems generally require reading the entire input, since flipping a single bit can change the decision.
In contrast, relaxed decision opens the possibility of algorithms that (probabilistically) read only a sub-linear portion of the input. 
Thus, in property testing, the input is viewed as a huge object to which the tester gets only oracle access, and the main focus is on minimizing the query complexity.

Perhaps the most basic property is the \texttt{Equality} property, defined as $\{ (s,s') \in \{0,1\}^n\times\{0,1\}^n : s = s' \}$. This property is easy to test by comparing the two input strings at $O(1/\epsilon)$ random positions.
The focus of this work is on the following natural \emph{variant} of $\texttt{Equality}$, introduced by~\cite{FMS18}, 
which is significantly harder to test.
In this variant, the tester is given two strings over $\{0,1,*\}$ and is required to decide whether the two strings are equal after deleting all `$*$' symbols from them.

\begin{definition}[Residual-String Equality (a.k.a.\ True-String Equality~\cite{FMS18})]\label{def:rse}
    For every string $s \in \{0,1,*\}^n$, let $\mathrm{Res}(s)$ denote the residual string obtained from $s$ after deleting all `$\,*$' symbols from it.
    The property \textup{\texttt{ResStringEq}} is the set of all pairs of strings $(s,s') \in \{0,1,*\}^n\times\{0,1,*\}^{n}$ such that $\mathrm{Res}(s)= \mathrm{Res}(s')$.\footnote{
        We could also allow $s$ and $s'$ to differ in length.
        Note that $\epsilon$-testing this more general definition reduces to $\epsilon/2$-testing the equal-length case, by padding the shorter string with `$*$' symbols. See Appendix~\ref{apdx:equal_length} for details. 
    }
\end{definition}

The property \texttt{ResStringEq}
was first introduced by~\cite{FMS18} as a means towards proving a lower bound on the complexity of testing the family of \texttt{Dyck} languages,
where $\texttt{Dyck}_m$ is the (context-free) language consisting of all balanced sequences of parentheses over $m$ parenthesis types.
The \texttt{Dyck} languages play an important role in the theory of context-free languages (see the Chomsky--Sch{\"{u}}tzenberger representation theorem~\cite{Chomsky63}).
They were first considered in the context of property testing by~\cite{AKNS01}, who initiated a general study of testing formal languages. 
In particular,~\cite{AKNS01} showed that
testing $\texttt{Dyck}_1$ (the set of balanced parentheses with one parenthesis type) can be done using $\widetilde{O}(1/\epsilon^2)$ queries, whereas testing $\texttt{Dyck}_2$ requires at least $\Omega(\log n)$ queries.\footnote{Note that a lower bound on testing $\texttt{Dyck}_2$ implies a lower bound on testing $\texttt{Dyck}_m$ for any $m \geq 2$.}
Testing the $\texttt{Dyck}$ languages then became the focus of~\cite{PRR03}, who showed that, for every $m \geq 2$, testing $\texttt{Dyck}_m$ requires 
$\Omega(n^{1/11})$ queries and can be done using $\widetilde{O}(n^{2/3})$ queries for every constant $\epsilon>0$.
These bounds were then improved by~\cite{FMS18}
to $\Omega(n^{1/5})$ and $O(n^{2/5+\delta})$, respectively, where $\delta > 0$ is an arbitrarily small constant. 
To prove their lower bound,~\cite{FMS18} first showed a simple reduction from testing \texttt{ResStringEq} to testing $\texttt{Dyck}_2$,
and then showed a lower bound of $\Omega(n^{1/5})$ on testing \texttt{ResStringEq}.\footnote{
    Specifically, given input $(s,s')$, the reduction runs the tester for $\texttt{Dyck}_2$ while emulating its queries to a parentheses sequence defined as follows.
    First, concatenate $s$ with the \emph{reverse} of $s'$. 
    Then, replace each `$0$' in $s$ by `$(($' and each `$1$' by `$[[$'. In the reverse of $s'$, replace each `$0$' by `$))$' and each `$1$' by `$]]$'. 
    Finally, replace all `$*$' symbols by `$()$'.
    (Note that each symbol in $(s,s')$ is replaced by two symbols in the parentheses-expression so that the correspondence between locations is oblivious of the input strings.) 
}

\subsection{Our results}\label{subsec:intro:our_results}

Our main result is nearly closing the remaining gap, by showing:

\begin{theorem}\label{thm:main}
    Testing \textup{\texttt{ResStringEq}} requires $\Omega(n^{2/5})$ queries. 
\end{theorem}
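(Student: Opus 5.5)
We will apply Yao's minimax principle. We construct two distributions over input pairs: $\mathcal{Y}$, supported on pairs in \texttt{ResStringEq}, and $\mathcal{N}$, supported (with high probability) on pairs that are $\Omega(1)$-far from it. We then show that any deterministic adaptive algorithm making $q=o(n^{2/5})$ queries distinguishes them with only $o(1)$ advantage.

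The construction follows the spirit of \cite{FMS18}. Split both strings into $k$ blocks of length $\ell$, with $k\ell=n$. In each block, a random number of `$*$' symbols is placed at a random position (for instance, the $*$'s are padded at the block end), so that the residual content of block $i$ of $s$ is shifted in $s'$ by an unknown random offset. This offset is a random walk: the sum of the independent per-block $*$-count fluctuations, each of magnitude about $\sqrt{\ell}$ or a tunable $d$. The underlying residual content is a uniformly random string $r$. In $\mathcal{Y}$ both $s$ and $s'$ encode $r$. In $\mathcal{N}$, $s'$ encodes an independent random string, or $r$ with many blocks rerandomized. A counting argument over the possible $*$-placements shows this is far from \texttt{ResStringEq} with high probability.

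The algorithm gains information only from ``collisions'': a query $j$ in $s$ and a query $j'$ in $s'$ that read the same residual position. Locally, each string on its own is identically distributed under $\mathcal{Y}$ and $\mathcal{N}$. Conditioned on no collision, the answers are independent uniform bits together with a $*$-pattern that has the same distribution in both cases. So the statistical distance is at most the probability of a collision, which we bound via a hybrid over query steps. For a fixed pair of queries at distance $\Delta$ in the block structure, the relative shift has spread roughly $\sqrt{\Delta/\ell}\cdot d$, so the collision probability is about $1/(d\sqrt{\Delta/\ell})$. Summing over $q^2$ pairs against the budget gives the non-adaptive $\sqrt n$ bound.

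Adaptivity is the main obstacle. An adaptive tester can locate $*$-runs and block boundaries, learning offsets exactly and shrinking the uncertainty of the random walk. To handle this, we design the distribution in two or three scales: large blocks whose offsets are hidden by a random walk, and inside them smaller sub-blocks with fresh random $*$-lengths. Learning one offset exactly then costs about $\ell^{\Omega(1)}$ queries, via a binary-search cost to find a boundary within a block. We then prove a lemma: after $q$ adaptive queries, the posterior on the offset at any query point still has min-entropy of at least about $\log(\cdot)$, unless the algorithm has paid for many boundary detections. The posterior collision probability per query is bounded by $1/(\text{residual spread})$, and we sum over queries. Balancing the costs ($q$ queries can learn about $q/\log$ offsets; the remaining spread is $\sqrt{\text{unlearned blocks}}$) and choosing $\ell$ and $d$ as powers of $n$ should give a total collision probability $o(1)$ whenever $q=o(n^{2/5})$. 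The exponent $2/5$ arises from optimizing the two-level parameters.

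I expect the delicate part to be making the conditional-posterior argument rigorous: showing that answers in regions without collisions reveal only $*$-layout information, and bounding how much that layout information concentrates the offset distribution.
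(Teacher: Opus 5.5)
\textbf{There is a genuine gap: the adaptive case, which is the whole content of this theorem, is only a plan.} Your non-adaptive sketch is in line with the paper. For the adaptive case, however, you give no precisely stated lemma and no computation (``should give'', ``the exponent $2/5$ arises from optimizing the two-level parameters''). The plan also aims at the wrong obstacle, in two ways.

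First, the NO distribution must coincide with YES on all ranks near both ends, because ranks there are nearly determined. Your option ``$s'$ encodes an independent random string'' is already distinguished by querying $s_1$ and $s'_1$. Correspondingly, the bad event must be an \emph{interior} match, not any collision; the paper rerandomizes only the ranks in the middle third $M$, since matches near the ends are essentially free.

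Second, the posterior on the offset is sharpened mainly by bit agreements, not by boundary detections. Once a block's star layout is known (binary search, $O(\log b)$ queries), comparing $O(\log n)$ consecutive non-star symbols in the two strings verifies a guessed alignment. The paper simply concedes the layout of every queried block, by fixing $X_h$ for it and its mirror, so no multi-scale design is needed. Every lucky match thus becomes a verified anchor near which further matches are cheap. The spread that matters is therefore measured from the deepest \emph{actually matched} pair. A lemma saying the posterior stays spread ``unless the algorithm has paid for many boundary detections'' is not the right invariant, and summing per-query collision probabilities against such a spread is unjustified.

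The paper supplies exactly the missing machinery, in three steps.
\begin{enumerate}
\item It tracks the progress $P_k=D_k-D_{k-1}$, where $D_k$ is the block-depth (from the nearer end) of the deepest first-string query matched to a second-string query. It cuts off an initial part $(D_{k-1},C_k]$ of each progress interval, of total length fewer than $2q$ blocks, so that at least $\ell$ unqueried blocks precede the $\ell$-th earlier query in the remaining range. It then shows $\Pr[D_k-C_k\ge p]=O(q/(b\sqrt p))$.
\item The conditioning issue you flag as delicate is resolved without computing any posterior. Let $E_1$ be the event that the new query is matched, and $E_2$ the event that no earlier query within the last $p$ blocks is matched. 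For every fixing of the free block variables under which $E_2$ holds, the observed bits $\bar V$ have the same joint law. This yields $\Pr[E_1\wedge E_2\mid \bar V=\bar v]\le \Pr[E_1]/\Pr[E_2]$.
\item Take per-block non-star counts uniform on $\{0,\dots,b\}$; this is your $d=\ell$, and it gives anti-concentration $O(1/(b\sqrt t))$. Then $\Pr[\overline{E}_2]=O(q^{3/2}/b)\le 1/2$ once $b\ge Cq^{3/2}$, hence $\E[D_q]=O(q+q^2\sqrt n/b^{3/2})$. By Markov, $\Pr[\text{interior match}]=O(qb/n+q^2/\sqrt{bn})=O(q^{5/4}/\sqrt n)$ for $b=Cq^{3/2}$.
\end{enumerate}
This single-scale computation is where $2/5$ comes from. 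Without a progress-type potential and a way to decouple the observed bits from the star layout, your argument does not reach the bound.
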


As a corollary, we obtain:

\begin{corollary}\label{cor:dyck_lower_bound}
    For every $m\geq 2$, testing $\textup{\texttt{Dyck}}_m$ requires $\Omega(n^{2/5})$ queries. 
\end{corollary}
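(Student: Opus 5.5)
The statement to prove is Corollary~\ref{cor:dyck_lower_bound}. The plan is to combine Theorem~\ref{thm:main} with the reduction of~\cite{FMS18} described in the footnote above. We do it for $m=2$ first, and then pass to general $m\ge 2$.

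\textbf{The map.} Given $(s,s')\in\{0,1,*\}^n\times\{0,1,*\}^n$, define $P(s,s')\in\Sigma_2^{4n}$ as follows.
\begin{itemize}
\item Take $s$ followed by the reverse of $s'$.
\item In $s$, map $0\mapsto (($ and $1\mapsto [[$.
\item In the reversed $s'$, map $0\mapsto ))$ and $1\mapsto ]]$.
\item Map every $*$ to $()$.
\end{itemize}
Each symbol of $P(s,s')$ depends on exactly one symbol of the input, at a fixed position. So every query of a $\texttt{Dyck}_2$ tester on $P(s,s')$ can be answered with one query to $(s,s')$.

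\textbf{Completeness.} If $\mathrm{Res}(s)=\mathrm{Res}(s')$, the $()$ blocks cancel. What remains is $w$ written with doubled opening brackets, followed by the reverse of $w$ written with doubled matching closing brackets. This is balanced, so $P(s,s')\in\texttt{Dyck}_2$.

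\textbf{Soundness.} This is the only step needing care. We must show that if $(s,s')$ is $\epsilon$-far from $\texttt{ResStringEq}$, then $P(s,s')$ is $\epsilon'$-far from $\texttt{Dyck}_2$ for some $\epsilon'=\Omega(\epsilon)$. Argue by contrapositive. Suppose $P(s,s')$ can be made balanced by changing at most $\epsilon' \cdot 4n$ symbols. Such a change touches at most $\epsilon'\cdot 4n$ of the two-symbol blocks. Replace every touched block in the input by $*$. In the balanced string, the untouched blocks of $s$ have their opening brackets matched against closing brackets. Use this matching to pair untouched non-$*$ blocks of $s$ with untouched non-$*$ blocks of $s'$.
\begin{itemize}
\item A pairing that matches correct types, with the matching non-crossing, gives a common subsequence of the residuals.
\item The remaining unpaired non-$*$ symbols of $s$ and $s'$ are not paired across. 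Their number is bounded by $O(\epsilon' n)$, since each must be matched either to a modified position or within its own half. A within-half match is impossible unless some block was modified, or unless it pairs two halves of a mismatched block; I would verify this accounting.
\end{itemize}
Starring out all unpaired symbols then puts $(s,s')$ within $O(\epsilon' n)$ changes of $\texttt{ResStringEq}$. Choosing $\epsilon'=c\epsilon$ for a small constant $c$ gives the contradiction. I expect the bookkeeping of this matching argument to be the main obstacle. I would rely on~\cite{FMS18} having carried it out, since the corollary is presented as immediate.

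\textbf{Conclusion for $m=2$.} By the above, an $\epsilon'$-tester for $\texttt{Dyck}_2$ on length $N=4n$ with $q(N)$ queries yields an $\epsilon$-tester for $\texttt{ResStringEq}$ with $q(4n)$ queries. Theorem~\ref{thm:main} then gives $q(4n)=\Omega(n^{2/5})$, i.e.\ $q(N)=\Omega(N^{2/5})$. Lengths not of the form $4n$ are handled by padding with $()$.

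\textbf{General $m\ge 2$.} $\texttt{Dyck}_2$ strings are $\texttt{Dyck}_m$ strings. Distance to $\texttt{Dyck}_m$ of a string over the first two types is at least a constant fraction of its distance to $\texttt{Dyck}_2$: in a $\texttt{Dyck}_m$ correction, replace every symbol of types $3,\dots,m$ by the corresponding type-$1$ symbol, which preserves balance and can only reduce the number of changes. Hence a $\texttt{Dyck}_m$ tester is a $\texttt{Dyck}_2$ tester with the same query complexity, and the $\Omega(n^{2/5})$ bound transfers.
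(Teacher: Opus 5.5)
Your proposal is correct and follows the paper's route: Theorem~\ref{thm:main}, combined with the reduction of~\cite{FMS18} (the same doubling map; the paper likewise takes its soundness from~\cite{FMS18} rather than reproving it) and with the transfer of a $\texttt{Dyck}_2$ lower bound to $\texttt{Dyck}_m$, which your type-collapsing argument justifies (it in fact shows the two distances coincide for strings over two types). The bookkeeping you flag does go through: untouched \texttt{()} blocks are always self-matched, so no within-half matches occur among untouched blocks, and split blocks are harmless because keeping every other pair of a common subsequence of length $T$ of the two doubled residual strings yields a common subsequence of length at least $T/2$ of the residual strings themselves.
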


Both of these results are nearly tight, by the $O(n^{2/5+\delta})$-query tester given in~\cite{FMS18} for the $\texttt{Dyck}$ languages, and the reduction they showed from testing \texttt{ResStringEq} to testing $\textup{\texttt{Dyck}}_2$.
In particular, these results show that although \texttt{ResStringEq} is a seemingly easier problem than $\texttt{Dyck}_2$, both problems actually have similar complexity.  
We stress that both lower bounds hold for testers with two-sided error.\footnote{
    Note that~\cite{AKNS01} showed that one-sided error testers for the \texttt{Dyck} languages require $\Omega(n)$ queries, even for $\texttt{Dyck}_1$.
    Furthermore, the same linear lower bound holds for \texttt{ResStringEq}: Let $k\eqdef 2\cdot\epsilon\cdot n+1$, and consider the pair $(s,s')$ where $s=*^n$ and $s'=0^k*^{n-k}$. 
    Clearly $(s,s')$ is $\epsilon$-far from \texttt{ResStringEq}. 
    However, if the tester queries at most $n-k$ positions of $s$, then the query answers are consistent with the {\scriptsize YES}-instance obtained by changing $k$ of the unqueried positions in $s$ from `$*$' to `$0$'. 
}

\paragraph{The non-adaptive setting.}

An important type of testers are testers that make non-adaptive queries. 
A tester makes \textsf{non-adaptive} queries if its queries to the input do not depend on the answers to prior queries. 
Non-adaptive testers allow making all queries to the input in \emph{one batch}, which is useful in settings in which 
each round of accessing the input has a large overhead.

Both the earlier $\widetilde{O}(n^{2/3})$-query tester of~\cite{PRR03} and the later $O(n^{2/5+\delta})$-query tester of~\cite{FMS18} use \emph{adaptive} queries.
We show that non-adaptive testers for \texttt{ResStringEq} (and hence also for the $\texttt{Dyck}_m$ languages, where $m\geq2$) require $\Omega(\sqrt{n})$ queries.

\begin{theorem}\label{thm:non_adaptive_lower_bound}
    Any \emph{non-adaptive} tester for \textup{\texttt{ResStringEq}} must make at least $\Omega(\sqrt{n})$ queries.  
\end{theorem}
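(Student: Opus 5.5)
We prove the bound via Yao's minimax principle: we construct a distribution $\mathcal{D}_{\rm yes}$ over pairs in \texttt{ResStringEq} and a distribution $\mathcal{D}_{\rm no}$ over pairs that are $\epsilon$-far (with high probability) for a constant $\epsilon$. We then show that for every fixed query set $Q$ of size $q=o(\sqrt{n})$, the distribution of answers on $Q$ is nearly the same under both distributions; concretely, the statistical distance is $o(1)$. Because the tester is non-adaptive, it suffices to bound this distance for a fixed $Q$, so averaging over the tester's randomness finishes the argument.

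\emph{Construction.} Split both strings into $n/b$ blocks of length $b=\Theta(\sqrt n)$. In each block of $s$, place a random string $w_i\in\{0,1\}^{b/2}$ padded with `$*$'s. The padding is arranged so that block $i$ carries a random \emph{shift} $\sigma_i\in\{0,\dots,b/2\}$, independent across blocks: the prefix of $\sigma_i$ stars, then $w_i$, then $b/2-\sigma_i$ stars. Do the same in $s'$ with independent shifts $\sigma'_i$.
\begin{itemize}
\item In $\mathcal{D}_{\rm yes}$, block $i$ of $s'$ carries the same $w_i$, so $\mathrm{Res}(s)=\mathrm{Res}(s')=w_1\cdots w_{n/b}$.
\item In $\mathcal{D}_{\rm no}$, block $i$ of $s'$ carries an independent uniform $w'_i$.
\end{itemize}
Farness follows from a counting argument. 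Any edit of $\epsilon n$ positions changes only an $O(\epsilon)$ fraction of the residual content (deletions plus substitutions), so the two residual strings would need edit distance $O(\epsilon n)$. Two independent random binary strings of length $n/2$ have edit distance $\Omega(n)$ with high probability, by a union bound over alignments.

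\emph{Indistinguishability.} In either distribution, a query inside a block of $s$ reveals a symbol that is either `$*$' or the bit $w_i[p-\sigma_i]$. Bits of $w_i$ that are read at known offsets in both $s$ and $s'$ are correlated under $\mathcal{D}_{\rm yes}$ and independent under $\mathcal{D}_{\rm no}$. Correlation can therefore only be detected if there are queries $p\in Q$ in block $i$ of $s$ and $p'\in Q$ in block $i$ of $s'$ with
\[
p-\sigma_i = p'-\sigma'_i ,
\]
that is, $\sigma_i-\sigma'_i$ equals a difference of queried offsets. For a block containing $a_i$ queries in $s$ and $a'_i$ queries in $s'$, this collision has probability at most $a_i a'_i/(b/2)$ over the independent uniform shifts. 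Summing over blocks gives at most
\[
\sum_i a_i a'_i/b \le q^2/b = o(1)
\]
when $q=o(\sqrt n)$. Conditioned on no collision, every revealed bit is a fresh uniform bit in both distributions, and the star pattern depends only on the shifts, which have identical distributions. Hence the answer distributions coincide except on an $o(1)$-probability event. A conditioning or coupling argument then makes this rigorous.

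\emph{Main obstacle.} I expect the delicate step to be the conditioning argument: the star positions themselves leak information about $\sigma_i$, so the claim that revealed bits are fresh must be made conditional on the observed star pattern. The first approach I would try is to couple the shifts directly. Even after conditioning on the star pattern, the difference $\sigma_i-\sigma'_i$ keeps enough entropy unless many queries land near the block's boundary. I would handle boundary queries by making the star padding non-contiguous, for example by randomly interleaving stars, to smooth the leakage. Alternatively, one can bound the probability, given the view, that an aligned pair exists by $O(a_ia'_i/b)$ directly. An additional subtlety is balancing $b$: the block length must be $\Theta(\sqrt n)$ so that shifts hide alignment, while the farness proof still needs the $\Omega(n)$ edit-distance bound, which holds regardless of $b$.
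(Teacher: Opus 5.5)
\paragraph{The gap: the choice $b=\Theta(\sqrt n)$.}
Your collision bound gives only $\sum_i a_ia'_i/b\le q^2/b=\Theta(q^2/\sqrt n)$. This is $o(1)$ only when $q=o(n^{1/4})$, so your final step ``$q^2/b=o(1)$ when $q=o(\sqrt n)$'' is false.

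The problem is the construction itself, not a loose analysis: it is broken by $\tilde O(n^{1/4})$ non-adaptive queries. Work inside one block and set $\ell=\Theta(\log b)$ and $m$ with $m^2\ell=\Theta(b)$. In $s$, query $m$ windows of $\ell$ consecutive positions whose starts are spaced $m\ell$ apart. In $s'$, query $m$ such windows whose starts are spaced $\ell$ apart. Place all windows, say, in the second quarter of the block.
\begin{itemize}
\item The start differences $(jm-j')\ell$ hit every multiple of $\ell$ in a range of length $\Theta(b)$.
\item Hence, with constant probability over $(\sigma_i,\sigma'_i)$, all windows lie in the non-star runs and $\sigma_i-\sigma'_i$ lands in that range.
\item In that event, some window pair reads stretches of $w_i$ that overlap in at least $\ell/2$ indices. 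Under YES these stretches agree at the true alignment.
\item Under NO, every compared pair consists of independent uniform bits. A union bound over the $O(m^2\ell)\le O(b)$ candidate alignments bounds the probability of a false agreement by $O(b\cdot 2^{-\ell/2})=o(1)$.
\end{itemize}
This uses $O(m\ell)=O(\sqrt{b\log b})=O(n^{1/4}\sqrt{\log n})$ queries. So your claim that $b$ ``must be $\Theta(\sqrt n)$'' is backwards. The shift range must be $\omega(q^2)$, i.e.\ linear in $n$.

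\paragraph{The fix, and how it compares to the paper.}
Take $b=\Theta(n)$, e.g.\ a single block $s=*^{\sigma}w*^{n/2-\sigma}$ with $\sigma$ uniform in $\{0,\dots,n/2\}$. Your union bound then gives $O(q^2/n)$, and, as you noted yourself, farness does not depend on $b$.

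The ``main obstacle'' you anticipate does not arise, and no interleaving of stars is needed. The collision event is a function of the shifts alone, so condition on the full shift vector, not on the observed star pattern. For any fixing of the shifts with no collision:
\begin{itemize}
\item the non-star answers read distinct coordinates of $w$ (YES), or of the independent strings $w,w'$ (NO);
\item so in both cases they are i.i.d.\ uniform bits, and the star answers are already fixed;
\item hence the conditional views coincide, and $\Delta\le\Pr[\text{collision}]$ by Fact~\ref{fact:stat_dist}.
\end{itemize}
This is exactly the structure of the paper's Claim~\ref{claim:identically_distributed}.

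The corrected argument is a simplified form of the paper's. The paper also sets $b=\alpha n$, with random per-block non-star counts playing the role of your shifts. It needs Fact~\ref{fact:bernoulli:generalized} to get $\Pr[R(i)=r]=O(1/\sqrt{bn})=O(1/n)$, and only for interior positions, which is why its NO distribution decorrelates only the middle third of ranks. Your uniform shift gives a $2/n$ collision bound for every pair of positions directly, so full decorrelation in NO is harmless. The paper's block structure is needed only for the adaptive bound, where a single global shift is found by binary search.
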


Theorem~\ref{thm:non_adaptive_lower_bound} shows the necessity of adaptivity in the $O(n^{2/5+\delta})$-query tester of~\cite{FMS18}. 
We show that the $\Omega(\sqrt{n})$ bound is nearly tight, by designing a nearly matching non-adaptive tester for \texttt{ResStringEq}:

\begin{theorem}\label{thm:upper_bound}
    For any \textup{(}arbitrarily small\textup{)} constant $\delta>0$, and every constant $\epsilon>0$, there exists a \emph{non-adaptive} $\epsilon$-tester for \textup{\texttt{ResStringEq}} that makes $O(n^{1/2+\delta})$ queries.
\end{theorem}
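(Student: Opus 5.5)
We will design a non-adaptive tester by reducing the problem to approximately locating, at sampled positions of $s$, the matching positions in $s'$. For every position $i$ of $s$, let $r(i)=|\mathrm{Res}(s_{[1,i]})|$ denote the number of non-$*$ symbols in the prefix of length $i$; define $r'$ analogously for $s'$. Then $(s,s')\in\texttt{ResStringEq}$ exactly when $r(n)=r'(n)$ and, for every non-$*$ position $i$ of $s$, the position $j$ of $s'$ with $r'(j)=r(i)$ and $s'_j\neq *$ carries the same symbol.

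The first step is a structural characterization of farness. If $(s,s')$ is $\epsilon$-far, then after removing an $O(\epsilon n)$-size ``correctable'' set there remain $\Omega(\epsilon n)$ witnesses. A witness is an interval $I$ of $s$ together with an aligned interval $J$ of $s'$, both of length $\ell$ at some scale $\ell=n^{c}$, such that the local density of non-$*$ symbols (the count differences $r$ and $r'$ across the intervals) disagree by $\Omega(\epsilon\ell)$ or more, or such that, under the induced alignment, a constant fraction of the residual symbols mismatch. The argument is a greedy repair: walk through $s$ left to right, maintain an offset between $r$ and $r'$, and charge each unit of cost in a repair that converts the pair into a yes-instance to some violated interval at some scale. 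To bound the total cost, we will use a decomposition into $O(\log(1/\delta))$ or $O(1/\delta)$ scales $\ell_k=n^{k\delta}$.

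The second step is estimating counts non-adaptively. For an interval of length $\ell$, sampling $O(1/\epsilon^2)$ random positions estimates its number of non-$*$ symbols to within $\pm\epsilon\ell$. To find the location in $s'$ aligned to a sampled position $i$ of $s$, we need $r(i)$ and $r'(\cdot)$ to accuracy finer than the witness scale. We will obtain these by a multi-scale sampling scheme. We pick $\sqrt n$ uniformly random anchor points and query, around each anchor, a full block of length about $n^{\delta}$. Then, hierarchically, we sample blocks at geometrically increasing scales, so that the prefix-count differences between anchors are estimated with additive error $o(\text{scale})$. The heart of the non-adaptive $\sqrt{n}$ barrier is a birthday-paradox argument: with $n^{1/2+\delta}$ random fully-read short blocks in $s$ and in $s'$, with constant probability some block of $s$ aligns, within the estimation error, to a block in $s'$ at the matching residual index. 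Comparing the residual strings of the two blocks, allowing shifts up to the estimated error, then detects a symbol mismatch. Since the queries are all chosen upfront, the tester remains non-adaptive. It accepts unless it sees either a count inconsistency, compared against the global estimate $r(n)$ versus $r'(n)$, or an aligned block pair whose residuals disagree under every admissible shift.

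The main obstacle is the alignment precision. Count estimates from sampling have error about $\sqrt{\ell}$ at scale $\ell$, which can exceed the block length. We therefore need the recursive scales and the structural lemma to guarantee that a near-far instance contains regions where $*$-density is locally uniform enough that alignment errors stay within the block slack, and otherwise a density discrepancy that is directly detectable. We will first try to prove a dichotomy at each scale: either many intervals have detectably different residual counts, or the alignment is determined to within $n^{\delta}$ using $n^{O(\delta)}$ overhead, and the collision argument applies. Completeness is immediate, since any admissible-shift comparison on a yes-instance can be made consistent, as long as we set the thresholds with slack above the estimation errors. Amplification over $O(1)$ repetitions gives the final $O(n^{1/2+\delta})$ bound for constant $\epsilon$.
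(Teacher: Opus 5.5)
There is a genuine gap. It sits exactly at what you call the main obstacle, and the proposed fix cannot work.

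Your detection step compares fully read blocks of length about $n^{\delta}$ in $s$ and $s'$ that lie at the same residual index. So it needs the alignment between $r$ and $r'$ to within roughly $n^{\delta}$.

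\emph{Counting cannot give this.} With $T$ uniform samples, $r(i)$ is known only to within about $n/\sqrt{T}$, which is $n^{3/4-O(\delta)}$ for your budget. In a hierarchical scheme, the $o(n)$ error at the coarsest scale dominates the absolute position.

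\emph{Nor can any other mechanism, in general.} Take the NO distribution of Section~\ref{subsec:the_two_distributions} with $b=1$. The middle bits of the two strings are independent, so alignment there can only come from counting stars. After $o(n)$ queries, the unqueried star indicators still leave a spread of $\Theta(\sqrt n)\gg n^{\delta}$. This instance is far, has no usable local density signal, and admits no fine alignment, so the dichotomy you plan to prove is false.

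Under coarse alignment, two short blocks at matching estimated ranks typically do not overlap in true residual index at all. Comparing them ``under every admissible shift'' then either accepts vacuously or rejects YES-instances. Your first kind of witness is also not evidence of farness. A YES-instance may place its stars in $s$ and $s'$ in unrelated ways, so rejecting on local density discrepancies breaks completeness.

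The missing idea is to never align finely in absolute terms, and to push the unknown misalignment into a recursion. The paper proceeds as follows:
\begin{enumerate}
\item It cuts each string into rank-segments of $L$ residual symbols, using estimates with error $\Delta=\Theta(\epsilon L)$. This costs $\widetilde{O}(N^2/\Delta^2)$ samples, and corresponding segments then only match up to boundary slack $2\Delta$.
\item It pre-samples fixed covering blocks of length $O(L/\epsilon)$ with probability roughly $\sqrt{L/N}$, up to $\epsilon$ and logarithmic factors. A birthday argument over the $\Omega(\epsilon n/L)$ bad pairs then puts some bad pair inside sampled blocks. These bad pairs exist for every ordering and offset, by a repair argument: if there were fewer, $\epsilon n$ changes would suffice.
\item Instead of reading the sampled blocks, it recurses inside them, where only ranks relative to the segment's start are estimated.
\item It absorbs the unknown offset of up to $2\Delta$ by making the recursive invariant ``residual strings match up to boundary slack $0.1\epsilon n$''. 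It tries every offset $h$, for both orderings, with the same samples. It union-bounds over the offsets by driving the per-call failure probability down to $\eta/(2N)$, at only logarithmic cost.
\end{enumerate}
This yields $Q_r(N)=\widetilde{O}\big(N^2/L^2+\sqrt{N/L}\cdot Q_{r-1}(O(L))\big)$, hence exponent $1/(2-(3/4)^r)$ and $n^{1/2+\delta}$ after $r=O(\log(1/\delta))$ levels.

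Your birthday-paradox and non-adaptive pre-sampling ingredients are the right ones. Without this recursion-with-offsets mechanism, though, the alignment barrier is not circumvented.
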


Building on this tester, we obtain:

\begin{theorem}\label{thm:upper_bound:dyck}
    For every $m\in \mathbb{N}$, every \textup{(}arbitrarily small\textup{)} constant $\delta>0$, and every constant $\epsilon>0$, there exists a \emph{non-adaptive} $\epsilon$-tester for $\textup{\texttt{Dyck}}_m$ that makes $O(n^{1/2+\delta})$ queries.
\end{theorem}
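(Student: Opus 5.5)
The plan is to build the $\texttt{Dyck}_m$ tester by reducing, in a query-preserving and non-adaptive way, to many instances of the \texttt{ResStringEq}-type comparison handled by the tester of Theorem~\ref{thm:upper_bound}. First I will run a non-adaptive $\texttt{Dyck}_1$-style check on the height profile. Project every opening parenthesis to `$($' and every closing one to `$)$'; the resulting string must be $\epsilon/4$-close to $\texttt{Dyck}_1$, and this can be tested with $\widetilde{O}(1/\epsilon^2)$ queries following~\cite{AKNS01}, made non-adaptive since their sampling of intervals does not depend on answers. Conditioned on passing, I can assume, after modifying few positions, that the height function $h$ is nonnegative and ends at zero. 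The matching of parentheses is then determined by $h$.

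Next I decompose the matching structure into a hierarchy at scales $n^{1/2},n^{1/2+\delta},\dots$, i.e.\ $O(1/\delta)$ levels, mirroring the structure of the \texttt{ResStringEq} tester. Fix a random offset and partition $[n]$ into blocks of length $L$ at each level. For a pair of blocks $(B,B')$ with $B$ before $B'$, the opening parentheses in $B$ that are matched inside $B'$ form a contiguous subsequence. Likewise, the closing parentheses in $B'$ matched in $B$ form a contiguous subsequence. Matched-type correctness for that pair is then exactly a residual-string equality: replace unmatched or elsewhere-matched symbols by `$*$' and compare the type sequence of $B$ with the reversed type sequence of $B'$. The key claim is a distance decomposition. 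If the input is $\epsilon$-far from $\texttt{Dyck}_m$ but close to $\texttt{Dyck}_1$ in projection, then the total mismatch, summed over pairs across levels, is $\Omega(\epsilon n)$. Hence for some level, a constant fraction of that level's mass lies in block pairs that are $\Omega(\epsilon)$-far from the corresponding equality instance. I would prove this by the greedy repair argument: fixing mismatched types within each pair costs at most the sum of the local distances.

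The tester then samples, non-adaptively at each level, random block pairs weighted by the number of matched pairs between them. It runs the \texttt{ResStringEq} tester of Theorem~\ref{thm:upper_bound} on them, with a constant proximity parameter and with amplification.

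The main obstacle is that the `$*$' positions in these derived instances depend on $h$, which is unknown to a non-adaptive tester. I will handle this by having the \texttt{ResStringEq} tester's query set be a fixed function of positions. To evaluate whether a queried symbol belongs to the relevant contiguous subsequence, I additionally query dense enough samples to estimate prefix heights within $\pm n^{1/2}$ error. The remaining subtlety is to show that this estimation error, affecting only $O(n^{1/2})$ boundary symbols per block, perturbs the instance by $o(L)$ positions for $L\ge n^{1/2+\delta}$. I expect this to require re-examining the \texttt{ResStringEq} tester's internal structure rather than using it as a black box, in particular its own multi-scale sampling of aligned substrings. That is the step I expect to dominate the effort. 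The query count is then $O(1/\delta)$ levels times $O(n^{1/2+\delta})$, matching the claimed bound after rescaling $\delta$.
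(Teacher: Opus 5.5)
Your reduction of each block pair to comparing the excess opening parentheses of $B$ with the reversed excess closing parentheses of $B'$ is the right intuition, and the paper's excess-parentheses matching procedure implements it. However, both mechanisms you propose for making this non-adaptive fail, and the soundness step has a gap.

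\textbf{The pair sampling is adaptive.} Sampling ``block pairs weighted by the number of matched pairs between them'' cannot be done non-adaptively. The weights $w(i,j)$, and hence the partner $B'$ of a block $B$, are determined by the unknown height profile. The missing idea is birthday-paradox presampling:
\begin{enumerate}
\item sample every fixed block of length $b$ independently with probability $p=\Theta\big(\sqrt{b\log(1/\eta)/(\epsilon^2 n)}\big)$;
\item run the position-oblivious Query procedure (Algorithm~\ref{alg:query}) on each sampled block;
\item only afterwards, using the estimates, decide which pairs to compare.
\end{enumerate}
Since a block lies in only $O(1/\epsilon)$ bad pairs, there are $\Omega(\epsilon^2 n/b)$ disjoint bad pairs. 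So some bad pair is fully sampled except with probability $e^{-\Omega(p^2\epsilon^2 n/b)}$.

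\textbf{Per-position star status is unavailable.} You cannot feed the \texttt{ResStringEq} tester per-position star status. Estimating all prefix heights to within $\pm n^{1/2}$ needs about $(n/n^{1/2})^2=n$ samples. With $n^{1/2+\delta}$ queries the global accuracy is only about $n^{3/4}$, which is not $o(L)$ at your scales $L=n^{1/2+\delta}$. Moreover, whether an individual parenthesis is excess depends on height fluctuations below the estimation error, and these can occur throughout a block. So the misclassified symbols are not confined to $O(n^{1/2})$ boundary positions.

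The paper never needs per-position status:
\begin{enumerate}
\item The top level uses $b=n^{3/4}$ and $\Delta=\Theta(\epsilon^2 b)$, so estimation costs $\widetilde{O}(n^{1/2})$.
\item Finer scales are reached by recursing on $O(1/\epsilon)$ random blocks (three rounds, since $(3/4)^3<1/2$) and by the local estimates inside the Query procedure.
\item The Decision procedure is replaced by Algorithm~\ref{alg:decision_dyck}. It uses only estimated excess \emph{counts} of substrings (Claim~\ref{claim:successful_estimates_excess}), and picks endpoints with $\hat{e}_1(s_{>\ell_k})\le\Delta$ so segments carry only $O(\Delta)$ spurious excess parentheses. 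It absorbs the $\Theta(\Delta)$ errors by two-sided boundary slack over offsets, and computes excess status exactly only at the base level, where whole segments are read.
\end{enumerate}

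\textbf{The distance decomposition fails.} Being close to \texttt{ResStringEq} is weaker than being close to a valid matching. \texttt{ResStringEq} lets you realign the residual strings by one position at $O(1)$ cost: replace one symbol in each string by `$*$'. In $\texttt{Dyck}_m$, matched parentheses must have an even number of symbols between them. Pairing the $k$-th excess opening of $B$ with the $(k+1)$-st excess closing of $B'$ always produces an odd gap, and substitutions do not move positions, so no repair can use this alignment.

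Concretely, take a pair with alternating types whose sequences agree only under a shift by one. Every even-shift alignment mismatches, so it needs $\Omega(w(B,B'))$ modifications. Yet it is within $O(1)$ modifications of \texttt{ResStringEq}, and a \texttt{ResStringEq} tester (which, like the paper's, accepts boundary-slack matches) would accept it. This is why Definition~\ref{def:excess_parentheses_approx_match} includes the parity condition, which is checked at the base level.

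\textbf{A smaller point.} You cannot ``assume, after modifying few positions,'' that the height is nonnegative. The tester sees only the original string, and one flipped symbol shifts all later heights and changes the matching globally. The paper instead reduces to $\texttt{Dyck}_m$-consistency (Lemma~\ref{lem:dyck_consistency_reduction}) and works with $M(x)$ and the locally excess parentheses of an arbitrary $x$.
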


\paragraph{The dependence of the query complexity on the proximity parameter.}
The $O(n^{2/5+\delta})$ query complexity achieved by~\cite{FMS18} comes at a high cost in terms of the dependence on $\epsilon$, namely, the dependence on $\epsilon$ in their tester is super-exponential in $1/\delta$. 
More precisely, their tester has query complexity
$O(\left(1/\epsilon\right)^{\textup{\textrm{poly}}(1/\delta)} \cdot n^{2/5 + \delta})$. 
We show that this super-exponential dependence can be avoided. 
Specifically, through a more careful analysis and parameter choice, we achieve query complexity $O(\left(1/\epsilon\right)^{O(\log(1/\delta))} \cdot n^{1/2 + \delta})$
for the \emph{non-adaptive} testers in Theorems~\ref{thm:upper_bound} and~\ref{thm:upper_bound:dyck}.
We show that this improved $\epsilon$ dependence can be carried over to the tester of~\cite{FMS18}, 
yielding \emph{adaptive} testers for both \texttt{ResStringEq} and the $\texttt{Dyck}$ languages with query complexity $O(\left(1/\epsilon\right)^{O(\log(1/\delta))} \cdot n^{2/5 + \delta})$ (see Theorems~\ref{thm:upper_bound:adaptive} and~\ref{thm:upper_bound_dyck:adaptive}, respectively).

\subsection{Organization}
A technical overview of our results is presented in Section~\ref{sec:technical_overview}, where Subsection~\ref{sec:overview:lb} contains an overview of the $\Omega(n^{2/5})$ lower bound (Theorem~\ref{thm:main}) and Subsection~\ref{sec:overview:up} contains an overview of the non-adaptive tester for \texttt{ResStringEq} (Theorem~\ref{thm:upper_bound}). 
Basic preliminaries are presented in Section~\ref{sec:preliminaries}.
The proof of the $\Omega(n^{2/5})$ lower bound is presented in Section~\ref{sec:lb_proof}, which also includes a proof of the $\Omega(\sqrt{n})$ lower bound for the non-adaptive case (see Subsection~\ref{subsec:non_adaptive_proof}).
The non-adaptive tester for \texttt{ResStringEq} and its analysis are presented in Section~\ref{sec:ub_proof}, where Subsection~\ref{subsec:ub_proof:warmup} begins with a warm-up towards this tester, and the actual tester is presented in Subsection~\ref{subsec:upper_bound:proof:actual}.
Finally, in Section~\ref{subsec:dyck_extension} we extend the non-adaptive tester to the \texttt{Dyck} languages.

%% file: D_lb_overview.tex
\section{Technical overview}\label{sec:technical_overview}

\subsection{The lower bound}\label{sec:overview:lb}

We prove the lower bound for \texttt{ResStringEq} by the standard indistinguishability method.
That is, we construct two distributions over input strings: a {\small YES} distribution, which is over inputs in \texttt{ResStringEq}, and a {\small NO} distribution, which is concentrated over inputs \emph{far} from \texttt{ResStringEq}. 
We then show that any tester making $o(n^{2/5})$ queries cannot distinguish between inputs drawn from these two distributions.

The proof builds on ideas developed in~\cite{FMS18} and~\cite{PRR03}.\footnote{
    Although the lower bound in~\cite{PRR03} is presented for the \texttt{Dyck} languages, their proof implicitly establishes the lower bound for \texttt{ResStringEq}.
} 
We give an overview of the proof in two parts. In the first part (presented in Section~\ref{subsec:overview:1/4}), we explain how to obtain a lower bound of $\Omega(n^{1/4})$.\footnote{Recall that the previous best known lower bound was~$\Omega(n^{1/5})$~\cite{FMS18}.} 
This improved lower bound is obtained by a tighter analysis of (essentially) the same construction of the {\small YES} and {\small NO} distributions used in~\cite{FMS18} and~\cite{PRR03}.
In the second part of the overview, presented in Section~\ref{subsec:overview:blocks}, we explain how we \emph{modify} the construction to further improve the bound to $\Omega(n^{2/5})$.

\subsubsection{An $\Omega(n^{1/4})$ lower bound}\label{subsec:overview:1/4}

\paragraph{The two distributions.}
We construct two random pairs of strings over $\{0,1,*\}^n$: a {\small YES} pair $(S,S')$ and a {\small NO} pair $(\widetilde{S},\widetilde{S}')$.
Roughly speaking, in both distributions, we choose in each string independently $n/2$ random positions for the star (i.e.,\ `$*$') symbols.
We then fill the remaining positions with uniform random bits, such that in the {\small YES} distribution we use the same values in the two strings, 
whereas in the {\small NO} distribution, we use the same values only in the ``extremities'' of the two strings, and use independent values in the ``middle'' of the strings 
(specifically, in the middle third of the non-star positions).

More precisely, we first choose the locations of the star symbols identically in the two distributions, as follows.
We begin with the first string in the pair (i.e., $S$ in the {\small YES} pair and $\widetilde{S}$ in the {\small NO} pair).
We draw $n$ values $X = (X_1,\ldots,X_n) \in\{0,1\}^n$, and place a star symbol at every position $i\in[n]$ where $X_i=0$.
To ensure that exactly $n/2$ positions contain star symbols,\footnote{
    Note that in order for the {\scriptsize YES} pair to be in \texttt{ResStringEq}, we require that both strings in the pair contain exactly the same number of star positions.
} we draw the first $n/2$ values $X_1,\ldots,X_{n/2}$ independently and uniformly from $\{0,1\}$, and then set the other $n/2$ mirror positions to the complementary values; that is, we set $X_{n/2+i} = 1-X_{n/2+1-i}$ for every $i\in [n/2]$. 
Independently, we do the same for the second string (i.e., $S'$ and $\widetilde{S}'$), using fresh values $X' = (X'_1,\ldots,X'_n) \in \{0,1\}^n$. 
We stress that we use the same values $X$ and $X'$ to construct both the {\small YES} and {\small NO} pairs.

For every $i\in[n]$ such that $X_i = 1$ (i.e., $i$ is not a star position), we define its \textsf{rank} in the first string, denoted $R(i)$, as the number of non-star positions up to and including $i$; that is, $R(i) \eqdef \sum_{j\in[i]}X_j$. 
If $i$ is a star-position (i.e., $X_i = 0$), we define $R(i) = \bot$. 
Similarly, we let $R'(i)$ denote the rank of $i$ in the second string, defined analogously with respect to $X'$.

We now fill the remaining positions in the strings.
In the {\small YES} pair $(S, S')$, for every $r\in[n/2]$, we draw a uniform bit $V_r\in\{0,1\}$ and place it in the two positions with rank $r$; that is, we set $S_i=S'_{i'}=V_r$, where $i$ and $i'$ are the positions such that $R(i) = R'(i') = r$.
In the {\small NO} pair $(\widetilde{S}, \widetilde{S}')$, we do the same, except that on the middle block of ranks we use independent bits in the two strings. 
Specifically, let $M= \left(\frac{n}{6}, \frac{n}{3}\right]$ be the middle third of the ranks. Then, for ranks $r \in [n/2] \setminus M$ we use a shared uniform bit $V_r$ exactly as before (i.e., we set $\widetilde{S}_i=\widetilde{S}'_{i'}=V_r$, where $R(i) = R'(i')=r$), while for ranks $r \in M $ we draw two independent uniform bits $V_r, V_r' \in \{0,1\}$, and use $V_r$ in the first string and $V_r'$ in the second (i.e., we set $\widetilde{S}_i=V_r$ and $\widetilde{S}'_{i'}=V'_r$, where $R(i) = R'(i')=r$).

\paragraph{The indistinguishability claim.}
We say that $i\in[n]$ and $i'\in[n]$ are \textsf{matching positions} if $R(i) = R'(i') \neq \bot$.
We say that $i$ and $i'$ are \textsf{interior} matching positions if $R(i) = R'(i') \in M$.
First, we argue that as long as the tester does not query interior matching positions, it will not be able to distinguish between the two distributions. 
Roughly speaking, this holds because, if the tester did not query interior matching positions, then \emph{in both distributions} the answer to any query whose rank is in $M$ is a uniform bit, independent of all other answers.
On the other hand, we show that the probability of querying interior matching positions is at most $O\!\left(\frac{q^{2}}{\sqrt{n}}\right)$, where $q$ is the query complexity of the tester. 
This implies that as long as $q=o(n^{1/4})$, the two views are indistinguishable, establishing the lower bound. 
We proceed to explain how we bound the probability of querying interior matching positions.
We will rely on a standard anti-concentration bound that states that for a sum of $t$ independent unbiased Bernoulli random variables $X_1,\ldots,X_t$ it holds that 
$\max_{j\in\mathbb{N}}\left\{\Pr\!\big[ \sum_{i\in[t]}{X_i} = j \big]\right\} = O\Big(\frac{1}{\sqrt{t}}\Big)$.

\paragraph{The non-adaptive case.}
Let us first consider the case where the tester uses \emph{non-adaptive} queries.
First, note that if $R(i) \in M = \left(\frac{n}{6}, \frac{n}{3} \right]$, then it must hold that $i \in \left( \frac{n}{6}, \frac{5n}{6} \right]$ (because there must be at least $n/6$ positions before $i$ and at least $n/6$ positions after $i$ that have the remaining ranks). 
We claim that for every $i\in\left(\frac{n}{6},\frac{5n}{6}\right]$ and every fixed $r\in[n/2]$, it holds that $\Pr[R(i)=r]=O\!\left(\frac{1}{\sqrt{n}}\right)$. 
If $i\leq n/2$, then $R(i)=\sum_{t\in[i]}X_t$ is a sum of $i$ independent Bernoulli random variables, and hence by the anti-concentration bound stated above $\Pr[R(i)=r]=O\!\left(\frac{1}{\sqrt{i}}\right)=O\!\left(\frac{1}{\sqrt{n}}\right)$, where we use the fact that $i > n/6 = \Omega(n)$. 
The case $i>n/2$ is symmetric: since the total sum of $X_1,\ldots,X_n$ is $n/2$, we can write $R(i)=n/2-\sum_{t=i+1}^n{X_t}$, where $X_{i+1},\ldots,X_n$ are $n-i \geq n/6$ independent Bernoulli random variables.

Thus, for any pair of positions $(i,i')$, the probability that $R(i) = R'(i') \in M$ is at most $O\!\left(\frac{1}{\sqrt{n}}\right)$ (because $R(i)$ and $R'(i')$ are independent).
Since we assume the queries are non-adaptive, we can take a union bound over all pairs of queried positions, one from each string, and obtain that the probability that the tester queries interior matching positions is at most $O\!\left(\frac{q^2}{\sqrt{n}}\right)$, as claimed.

\paragraph{The adaptive case.}

Handling \emph{adaptive} queries is more challenging.
When queries are adaptive, the probability of querying matching positions is influenced by whether we have queried matching positions (not necessarily interior ones) in previous query rounds. 
Specifically, if we have previously queried matching positions $(i,i')$, then positions in the vicinity of $i$ and $i'$ are more likely to match as well (similarly to how there is a high chance of finding matching positions close to the extremities of the strings).
However, this probability decreases with the distance from $i$ and $i'$. 
To illustrate, suppose that we are given a pair of matching positions $(i, i')$,\footnote{
    We stress that this is for illustration purposes only; the actual tester does not necessarily know
    whether previous positions it has queried are matching or not.
    However, we mention that our lower bound would still hold even if the tester were given this information for free.
} and now consider a new pair of positions $(\hat{i}, \hat{i}')$ that lie on the same side of $(i,i')$; that is, $\hat{i} > i$ and $\hat{i}' > i'$, or $\hat{i} < i$ and $\hat{i}' < i'$.
Since $i$ and $i'$ are matching, the new positions $\hat{i},\hat{i}'$ are matching if and only if the number of non-star positions between $i$ and $\hat{i}$ is equal to the number of non-star positions between $i'$ and $\hat{i}'$.
By the anti-concentration bound referenced above, the probability of this event is at most $O\Big(\frac{1}{\sqrt{|i-\hat{i}|}}\Big)$.

For this reason, we will argue that in each query we can make only a limited amount of ``\emph{progress}'' from the extremities towards the interior of the string.
In order to make this more precise, we focus (for the remainder of this overview) on the following simplified setting, which will convey the main idea in a cleaner form. 
We will assume that the tester proceeds in two stages: In the first stage, it adaptively queries the second string, at positions denoted $i'_1,\ldots,i'_q \in [n]$. In the second stage, it adaptively queries the first half of the first string at monotonically increasing positions, denoted $i_1, \ldots,i_q \in [n/2]$ where $i_j < i_{j+1}$ for every $j\in[q-1]$.

Now, for each query $k\in[q]$, we let $D_k$ denote the maximal depth of a matched first-string query so far.
More precisely, $D_k$ is the maximum position among $i_1,\ldots,i_k$ that matches one of the queried positions in the second string $i'_1,\ldots,i'_q$.   
If there is no such position, we define $D_k \eqdef 0$.
We define the \textsf{progress} of query $k$ as $P_k \eqdef D_k - D_{k-1}$.

We first bound the probability of making a large amount of progress in a single query. More specifically, for each $p\in [n/2]$, we consider the probability that $P_k \geq p$.
Notice that if $P_k \geq p$, then the following two events must hold: 
\begin{enumerate}
    \item The $k\th$ first-string query position $i_k$ matches with some queried position in the second string. Let us denote this event by $E_1$.
    
    \item In the interval $(i_k - p, i_k]$ (i.e., the $p$ positions going back from $i_k$) there are \emph{no} previous first-string queries that match with a queried position in the second string. Let us denote this event by $E_2$.
\end{enumerate}
In order to analyze the probability that both events $E_1$ and $E_2$ hold,
we first condition on the answers to all queries made prior to $i_k$, which fixes the query positions $i'_1,\ldots,i'_q$ and $i_1,\ldots,i_{k}$.
Furthermore, we fix the star-positions in the first string up to position $i_k-p$
as well as \emph{all} star-positions in the second string 
(i.e., we fix the variables $X_i$ for all $i\leq i_k-p$, and all variables $X'_i$).

Suppose first that there are no previous queries that lie in the interval $(i_k - p, i_k]$ (i.e., including queries that do not match any queried position in the second string).
In this case, the variables $X_i$ in the interval $(i_k-p,i_k]$ are independent of the previous query answers that we conditioned on 
(because these answers concern only positions up to $i_k-p$ in the first string or positions in the second string).
In other words, fixing the previous query answers does not change the distribution of the variables in the interval $(i_k-p,i_k]$, and thus we still have $p$ independent Bernoulli variables in the sum defining the rank of $i_k$.
Therefore, the probability that $i_k$ matches any specific query position in the second string is at most $O\!\left(\frac{1}{\sqrt{p}}\right)$. 
By a union bound, the probability that there exists some second-string query position that matches with $i_k$ (i.e., event $E_1$ holds) is at most $O\!\left(\frac{q}{\sqrt{p}}\right)$. 

In general, there could be up to $q$ unmatched queries in the interval $(i_k-p,i_k]$ (meaning queries that do not match any queried position in the second string).
First, note that conditioning on the previous query answers fixes the variables $X_i$ of queried positions inside $(i_k-p,i_k]$. 
More importantly, this conditioning fixes also the \emph{bit answers} to previous queries at non-star positions in the interval, and fixing these bit answers may affect the distribution of the remaining variables $X_i$ at unqueried positions in the interval.
To see why, recall that in the {\small YES} distribution, for every $r$, the two positions with rank $r$ contain a shared uniform bit $V_r$.
Suppose that the tester previously queried some position $i$ in the interval $(i_k-p,i_k]$, and that the answer to the query was not a star symbol. 
Then the answer fixes the bit $V_{R(i)}$, where the rank $R(i)$ depends on the still-unfixed star-positions that lie before $i$ in the interval. 

Nevertheless, we rely on the fact that the queries in the interval $(i_k-p,i_k]$ are all \emph{unmatched} (i.e., event $E_2$ holds), 
to show that fixing the previous bit answers has only a limited effect.
Specifically, we first show that after fixing all the relevant star-positions 
(i.e., all the variables $X_i$ at positions $i\leq i_k-p$ and at previously queried positions, as well as all second-string variables $X'_i$), we have:
\begin{align}\label{eq:overview:dependence}
    \Pr\left[\, E_1 \wedge E_2 \;|\; \ovbar{V} = \ovbar{v} \,\right] \leq \frac{\Pr[E_1]}{\Pr[E_2]}
\end{align}
where $\ovbar{V} = \ovbar{v}$ denotes the fixing of the previous bit answers. 
The main observation needed to prove Eq.~\eqref{eq:overview:dependence} is that no matter how we fix the remaining variables $X_i$ such that all queries in the interval $(i_k-p,i_k]$ are unmatched, the query answers in the interval at non-star positions are always uniform bits, independent of all other answers. 
In other words, in any such fixing, the distribution of the previous bit answers \emph{remains the same}.\footnote{
    We stress that we do not \emph{actually} fix the remaining variables, we only consider all such fixings as part of proving Eq.~\eqref{eq:overview:dependence}.
}
This implies the conditional probability bound in Eq.~\eqref{eq:overview:dependence}; see Claim~\ref{claim:same_dist} for details. 

Thus, we will aim to show that $\Pr[E_2] = \Omega(1)$. 
Intuitively, we can use the same anti-concentration argument that we have used so far to upper bound the probability of the complement event $\ov{E}_2$
(i.e., the event that some query in the interval $(i_k-p,i_k]$ is matched with a second-string query).
However, to apply the anti-concentration argument, we need to have sufficiently many unfixed variables $X_i$ prior to each previous query in the interval.
We encounter a problem when many previous queries are ``crowded together'' near to the start of the interval 
(recall that the variables $X_i$ are fixed at positions $i\leq i_k-p$, as well as at all previously queried positions).

To handle this, we use the following combinatorial observation. 
For any $\tau\in \mathbb{N}$, it is possible to partition the progress interval $(D_{k-1}, D_{k}]$ into two parts, $(D_{k-1}, C_k]$ and $(C_k, D_{k}]$, that satisfy the following two properties:

\begin{enumerate}
    \item Order the first-string query positions that lie in the second part $(C_k, D_{k}]$ from left to right, and denote the $\ell\th$ query position by $j_\ell$.
    Then, for every $\ell$, there are at least $\tau\cdot \ell$ unqueried positions strictly between $C_k$ and $j_\ell$.

    \item The size of the first part $(D_{k-1}, C_k]$ is less than $(\tau+1)\cdot N_k$, where $N_k$ is the number of first-string query positions that lie in $(D_{k-1}, D_k]$. 
\end{enumerate}

We can always obtain such a partition as follows. Start with $C_k=D_{k-1}$.
If the first property fails, move $C_k$ to an arbitrary query position for which it fails, and repeat until the property holds.
Notice that each time we move $C_k$, the interval that is skipped must be of length less than $(\tau+1)$ times the number of queried positions in that interval.
Since the skipped intervals are disjoint, the second property follows.

Keeping $\tau$ as a parameter to be set later, we denote the progress corresponding to the second part $(C_k,D_k]$ by $P'_k \eqdef D_k - C_k$, and continue our analysis with $P'_k \geq p$ instead of $P_k\geq p$.
Using the first property of the partition, we can now upper bound the probability of the event $\ov{E}_2$.
Specifically, we may assume that, for every $\ell$, there are at least $\tau\cdot\ell$ unqueried positions in $(i_k-p,i_k]$ before the $\ell\th$ query in this interval
(if this condition fails, we can increase $p$ until the condition holds; if no such $p$ exists, then the event $P'_k\geq p$ is impossible).
Hence, the probability that the $\ell\th$ query point in the interval $(i_k - p, i_k]$ matches with some second-string query point is at most $O\left(\frac{q}{\sqrt{\tau\cdot \ell}}\right)$. 
Taking a union bound over the at most $q$ queries in the interval, the probability that one of them matches some second-string query point (i.e., event $\ov{E}_2$ holds) is at most $\sum_{\ell \in [q]} O\left(\frac{q}{\sqrt{\tau\cdot \ell}}\right) = O\left(\frac{q}{\sqrt{\tau}}\cdot \sqrt{q}\right)$, where we use the fact that for every $t\in\mathbb{N}$ it holds that $\sum_{\ell\in[t]}{\frac{1}{\sqrt{\ell}}} = O(\sqrt{t})$.
Thus, choosing $\tau\geq C\cdot q^3$ for a sufficiently large constant $C$, 
we get $\Pr\left[\,\ov{E}_2\right] \leq 1/2$, implying $\Pr[E_2] = \Omega(1)$.

We are left to upper bound the probability of the event $E_1$ (i.e., the event that $i_k$ matches some second-string query point).
Notice that there are at least $\Omega(p)$ unqueried points in the interval $(i_k - p, i_k]$ (because for every queried point in the interval there are at least $\tau\geq 1$ unqueried points). 
Therefore, we have $\Pr[E_1] = O\left(\frac{q}{\sqrt{p}}\right)$. 
Using Eq.~\eqref{eq:overview:dependence}, we obtain $\Pr[P'_k \geq p] = O\left(\frac{q}{\sqrt{p}}\right)$.

We use the foregoing probability bound to bound the \emph{expected} amount of progress made across all queries.
First, for the second part of the progress, we have $\E[P'_k] = \sum_{p\in[n/2]}\Pr[P'_k \geq p] = \sum_{p\in[n/2]}O\!\left(\frac{q}{\sqrt{p}}\right) = O\left(q\cdot \sqrt{n}\right)$ (where we again use the bound $\sum_{\ell\in[t]}{\frac{1}{\sqrt{\ell}}} = O(\sqrt{t})$ for every $t\in\mathbb{N}$).
Summing across all queries, we get $\E\big[ \sum_{k\in[q]} P'_k \big] = O\left(q^2\cdot \sqrt{n}\right)$.

Next, consider the contribution to the progress coming from the first parts (i.e., $(D_{k-1}, C_k]$).
By the second property of the partition, this contribution across all queries is at most $(\tau+1) \cdot \sum_{k\in[q]} N_k \leq (\tau+1)\cdot q$.
Recall that we have required that $\tau\geq C\cdot q^3$.
Therefore, we set $\tau = C\cdot q^3$, making this contribution $O(q^4)$. Overall, we obtain
$
    \E\left[\sum_{k\in[q]}P_k\right] = O\left(q^4 + q^2 \cdot \sqrt{n}\right).
$

Now, recall that our goal is to bound the probability that the tester queries interior matching positions, and that this event implies that the total progress exceeds $n/6$.
By Markov's inequality, we obtain that the probability of the latter event is at most $O\!\left(\frac{q^4}{n} + \frac{q^2}{\sqrt{n}} \right) = O\!\left(\frac{q^2}{\sqrt{n}}\right)$, establishing the desired $\Omega(n^{1/4})$ bound.

\subsubsection{Improving the lower bound to $\Omega(n^{2/5})$}\label{subsec:overview:blocks} 

To obtain the $\Omega(n^{2/5})$ lower bound, we modify the above construction of the {\small YES} and {\small NO} distributions as follows. We partition the strings into $m\eqdef n/b$ consecutive blocks of length $b$, where $b$ is a parameter that we will specify later.
Instead of choosing for each position individually whether it is a star-position or not, we choose for every block $h\in [m]$ a uniformly random value $X_h\in\{0,1,\ldots,b\}$, where $X_h$ will be the number of non-star positions in the $h\th$ block.
(As before, to ensure that there are exactly $n/2$ star-positions, we choose the first half of the block values $X_h$ independently, and then set complementary values for the mirror blocks.)
Within each block, the locations of the prescribed number of non-star positions can be chosen arbitrarily.

As before, we define the \textsf{rank} of each non-star position $i\in[n]$ to be the number of non-star positions up to and including position $i$. 
Under the new construction, the rank of position $i$ is the sum of all block values $X_h$ prior to the block of $i$, plus the number of non-star positions in the block of $i$ up to and including position $i$.

Intuitively, the gain of this new construction comes from making the number of star positions in each block \emph{less concentrated}.
Specifically, whereas in the original construction the number of star positions in each block was highly concentrated around $b/2$, here this number is spread evenly across all possible values in $\{0,1,\ldots,b\}$.
Intuitively, making the number of star-positions in the blocks less concentrated makes it harder to find positions of equal rank.

More concretely, for $t$ independent random variables $X_1,...\,,X_t$ that are uniform over $\{0,1,...\,,b\}$, it holds that $\max_{j\in\mathbb{N}}\left\{\Pr\left[ \sum_{i\in[t]}{X_i} = j \right] \right\} = O\left(\frac{1}{b\cdot\sqrt{t}}\right)$
(as opposed to $O\left(\frac{1}{\sqrt{t}}\right)$ that we had in the case of $t$ Bernoulli random variables).
Let us first reconsider the case of non-adaptive testers that we analyzed in the previous section.
For every $i \in \big(\frac{n}{6}, \frac{5n}{6}\big]$, the rank $R(i)$ is now composed of $\Omega(n/b)$ variables $X_h$ that are uniform over $\{0,1,...\,,b\}$.
Therefore, for every fixed $r$, the probability $\Pr[R(i) = r]$ is now only $O\Big(\frac{1}{b \cdot \sqrt{n/b}}\Big) = O\!\left(\frac{1}{\sqrt{b\cdot n}}\right)$ (compared to the $O\Big(\frac{1}{\sqrt{n}}\Big)$ bound we had before).
We set $b \eqdef \alpha\cdot n$ where $\alpha>0$ is a sufficiently small constant, and get that $\Pr[R(i) = r] = O\!\left(\frac{1}{n}\right)$. 
Taking a union bound over the at most $q^2$ query pairs, it follows that the probability of querying interior matching positions is at most $O\big(\frac{q^2}{n}\big)$, establishing an $\Omega(\sqrt{n})$ lower bound for non-adaptive testers.

Returning to the adaptive case, note that each previous query can now affect an \emph{entire block}, because star-positions inside the same block are now dependent. 
We therefore measure progress in \emph{blocks}; that is, the event $P_k \geq p$ now stands for making at least $p$ blocks of progress. 
We use a decomposition of the progress similar to the one we used before, with $\tau=1$.
The decomposition ensures that in the second part of the progress there are at least $\ell$ unqueried blocks prior to the $\ell\th$ query, 
whereas the first parts of the progress consist of up to $2q$ blocks in total.

We apply the stronger anti-concentration inequality to bound the expected contribution of the second parts of the progress $P'_k$.
Specifically, we now have $\Pr\left[\,\ov{E}_2\right] = \sum_{\ell \in [q]}O\left(\frac{q}{b \cdot \sqrt{\ell}}\right) = O\left(\frac{q}{b}\cdot \sqrt{q}\right)$,
which is at most $1/2$ provided that $b\geq C\cdot q^{3/2}$ for a sufficiently large constant $C$.
Furthermore, we have $\Pr[E_1]=O\left(\frac{q}{b\cdot\sqrt{p}}\right)$.
This gives us $\E\left[\sum_{k\in[q]} P'_k\right] = O\left(q^2 \cdot \frac{\sqrt{n}}{b^{3/2}}\right)$.

On the other hand, we need to add the contribution of the first part of the progress, which can now be up to $2q$ blocks in total.
Thus, we get $ \E\left[\sum_{k\in[q]}P_k\right] = O\left(q + q^2 \cdot \frac{\sqrt{n}}{b^{3/2}}\right)$, which in units of string positions is $O\left(q\cdot b + q^2 \cdot \sqrt{\frac{n}{b}}\right)$.
We set $b \eqdef C\cdot q^{3/2}$, and by Markov's inequality, we get that the probability of reaching beyond position $n/6$ is at most $O\left(\frac{q^{5/2}}{n} + \frac{q^{5/4}}{\sqrt{n}}\right) = O\left(\frac{q^{5/4}}{\sqrt{n}}\right)$, giving us the desired $\Omega(n^{2/5})$ bound.

%% file: E_ub_overview.tex
\subsection{A non-adaptive tester}\label{sec:overview:up}

Our non-adaptive tester for \texttt{ResStringEq} is an adaptation of the tester of~\cite{FMS18}. 
Recall that~\cite{FMS18} showed a tester for the \texttt{Dyck} languages with query complexity $O(n^{2/5+\delta})$, where $\delta>0$ is an arbitrarily small constant.
As a sub-protocol used in their tester, 
they presented a procedure named ``Substring $\epsilon$-matching'' (see~\cite[Algorithm~3]{FMS18}). 
This procedure can essentially be viewed as a tester for \texttt{ResStringEq}, and it has our target query complexity of $O(n^{1/2+\delta})$.
However, it is \emph{adaptive}. 
We show that it is possible to modify their procedure to be non-adaptive, with essentially no additional overhead.

We next give a high-level overview of the tester. The overview will be structured as follows:
As a warm-up, we first present a basic tester, which is \emph{adaptive} and uses $\widetilde{O}(n^{2/3})$ queries.
We then explain how we can modify this basic tester to be non-adaptive, 
at the cost of increasing the query complexity to $\widetilde{O}(n^{4/5})$.
Finally, we show how we can improve the query complexity by using recursion, and obtain the desired $O(n^{1/2+\delta})$ bound.

\paragraph{The basic tester.}

For a string $x\in\{0,1,*\}^n$ and a position $i\in[n]$, we define the \textsf{rank} of $i$ in $x$, denoted $r_x(i)$, as the number of non-star symbols in $x$ up to and including position $i$; that is, $r_x(i)\eqdef |\{j\leq i : x_j\neq *\}|$.
The basic thing we would like to do is check that at (random) non-star positions of equal rank the two strings match.
However, the issue is that we do not know the ranks of positions we query.\footnote{
    Note that if each query told us, in addition to the value of the string at the queried position, also the rank of that position, then we could easily get the $O(\sqrt{n})$ bound: 
    We would query $O(\sqrt{n})$ random positions in each of the two input strings. 
    By the birthday paradox, w.h.p.\ the queried positions will contain two positions of matching ranks, and we can compare the corresponding values for consistency.
}

The first important observation is that by sampling random positions in the strings, we can obtain \emph{estimates} of the various ranks.
Specifically, for $\Delta \gg \sqrt{n}$, if we query each string at $O\big((\frac{n}{\Delta})^2\cdot \log(n)\big)$ random positions, we can obtain an estimate of the rank of each position $i\in[n]$ up to an additive deviation of $\Delta$, where the failure probability of each estimate individually is at most $o\!\left(\frac{1}{n}\right)$ (see Claim~\ref{claim:successful_estimates}). 
By a union bound, all the estimates are successful simultaneously with high probability.

This gives rise to the following basic tester: For a parameter $L\in [n]$ to be determined later,\footnote{
    We will set $L=n^{2/3}$ in the basic adaptive tester. 
}
we first obtain estimates of all ranks up to an additive deviation of $\Delta \eqdef 0.1 \cdot \epsilon \cdot L$.
Based on these estimates, we partition the strings into ``\textsf{rank-segments}'' that each contain approximately $L$ non-star symbols.
Specifically, the $k\th$ rank-segment of each string consists of the positions whose estimated ranks are between $(k-1)\cdot L$ and $k\cdot L$.

We now know that, for every $k$, 
the $k\th$ rank-segment of the first input string should ``approximately match'' with the $k\th$ rank-segment of the second input string.
Specifically, assuming the estimates are successful, each such pair of corresponding rank-segments contains the same range of ranks, up to an offset of at most $2\Delta$ at each endpoint.
Thus, in a {\small YES}-input, for each pair of corresponding rank-segments, the residual strings in the segments \textsf{match up to boundary slack $2\Delta$}, meaning that the two residual strings agree after deleting at most $2\Delta$ symbols from the beginning of one of them, and possibly leaving at most $2\Delta$ extra symbols at the end of one of them.
On the other hand, if the input is $\epsilon$-far from \texttt{ResStringEq}, then at least an $\Omega(\epsilon)$ fraction of the corresponding rank-segments do not satisfy this boundary-slack matching condition (see Claim~\ref{claim:many_bad_blocks}).

Therefore, after partitioning the strings into rank-segments, we uniformly select $O(1/\epsilon)$ pairs of corresponding rank-segments, query the entire segments, and check that their residual strings match up to boundary slack $2\Delta$.
Importantly, we can avoid querying any ``long'' rank-segments, where we call a rank-segment \textsf{long} if its length is greater than $L/(\alpha\cdot \epsilon)$, for a sufficiently small constant $\alpha > 0$.
Roughly speaking, this is because long rank-segments are necessarily \emph{sparse} 
(i.e., they each contain at most an $O(\alpha\cdot \epsilon)$ fraction of non-star symbols),
and hence they cannot contribute many mismatches between the strings.

Recall that we set $\Delta = \Theta(\epsilon\cdot L)$. 
Thus, for constant $\epsilon>0$, we query $\widetilde{O}\big(\frac{n^2}{L^2}\big)$ positions for obtaining the estimates, and $O(L)$ positions when querying the (non-long) rank-segments.
To balance these two terms, we set $L = n^{2/3}$ and obtain a tester of query complexity $\widetilde{O}(n^{2/3})$.\footnote{
Note that this reestablishes the query complexity bound obtained by combining the reduction of~\cite{FMS18} with the tester for the \texttt{Dyck} languages given in~\cite{PRR03}.
}
(In Section~\ref{subsec:upper_bound:proof:warm_up1} we provide a formal construction and analysis of this basic tester, as a warm-up towards our actual tester.)

\paragraph{Transforming the basic tester to be non-adaptive.}

Note that the basic tester is inherently adaptive: it must first obtain the rank estimates before it can query the corresponding rank-segments.
Hence, rather than directly querying the rank-segments (whose locations are unknown to us), we proceed as follows.
We consider a covering of the strings by \emph{fixed} \textsf{blocks} of equal length $b \eqdef 2L/(\alpha\cdot \epsilon) = O(L)$ that are positioned at shifts of size $L/(\alpha\cdot \epsilon)$ from one another. 
Note that this choice of parameters ensures that any non-long rank-segment is necessarily contained in some block of the covering (recall that long rank-segments can be discarded).
In addition, note that there are $O(n/L)$ blocks. 

Rather than sampling corresponding rank-segments, we uniformly sample (in advance of estimating the ranks) $O\big(\sqrt{\frac{n}{L}}\big)$ blocks from the covering (and query the selected blocks in their entirety).
After obtaining the estimates, we search for pairs of corresponding rank-segments such that each segment is contained in one of the sampled blocks. For each such pair, we check that the residual strings of the two rank-segments match up to boundary slack $2\Delta$ (as described above).
Since we sample $O\big(\sqrt{\frac{n}{L}}\big)$ blocks, a birthday-paradox-type argument shows that, in the {\small NO}-case, with high probability, at least one of the pairs that do not match up to boundary slack $2\Delta$ will have both rank-segments contained in sampled blocks, and will be detected (see Claim~\ref{claim:birthday}).\footnote{
    We note that the ``Substring $\epsilon$-matching'' procedure of~\cite{FMS18} also utilizes the birthday paradox, although for a different reason that will become clearer when we present the final recursive tester.
    The procedure of~\cite{FMS18} directly samples a square-root number of (their analog of the) \emph{rank-segments}, whereas here we sample the fixed blocks.
} 

The total query complexity of the foregoing non-adaptive tester is $\widetilde{O}\big(\frac{n^2}{L^2} + \sqrt{\frac{n}{L}}\cdot L\big)$, where the first term accounts for the query complexity of obtaining the estimates (recall that $\Delta = \Theta(\epsilon\cdot L)$), and the second term accounts for the query complexity of querying $O(\sqrt{\frac{n}{L}})$ blocks, each of size $O(L)$.
Balancing the two terms, we set $L = \widetilde{O}(n^{3/5})$, which gives us query complexity $\widetilde{O}(n^{4/5})$. 
(In Section~\ref{subsec:upper_bound:proof:warm_up2} we present a formal construction and analysis of this basic non-adaptive tester, as a warm-up towards our actual tester.)

\paragraph{Improving the tester by using recursion.}
The basic tester reads the sampled blocks in their entirety, in order to compare the corresponding rank-segments and check if their residual strings match up to boundary slack.
Rather than reading the entire blocks, we would like to continue the comparison of the corresponding rank-segments by \emph{recursion}.
However, in contrast to the original input strings, whose residual strings need to match exactly, the residual strings of the corresponding rank-segments only need to match up to boundary slack $2\Delta$.
This means that there can be an offset of at most $2\Delta$ initial symbols that should be skipped in one of the two residual strings before starting the comparison.
The issue is that we do not know what the offset is and therefore do not know which positions should be compared.

The solution is to try \emph{all} offsets, and accept if and only if for one of them all recursive calls accepted. 
We take a union bound over the failure probability of each offset trial; note that we have only $O(n)$ possible offsets, whereas we can reduce the failure probability exponentially with repeated invocations. 
Crucially, we do not need to sample a separate collection of blocks for each offset. \emph{We sample $\widetilde{O}\big(\sqrt{\frac{n}{L}}\big)$ blocks once, independently of the offset, and use the same sampled blocks across all offset trials.}\footnote{
    In other words, pre-sampling a square-root number of the fixed blocks (rather than sampling corresponding rank-segments, as in the basic adaptive tester) now serves \emph{two purposes}: it allows the tester to be non-adaptive, and allows the same samples to be reused across the different offset trials.
}

More concretely, the final tester proceeds as follows. 
It is composed of two sub-procedures that are each recursive: a \textsf{Query} procedure, and a \textsf{Decision} procedure.
The Query procedure is run separately on each of the input strings and makes all queries. The Decision procedure receives the outputs of the Query procedure and decides whether to accept or reject the input pair.

On query access to a string $s \in \{0,1,*\}^n$, the Query procedure first queries $s$ at uniformly random positions, where these queries will later be used for obtaining the rank estimates.
It then samples $\widetilde{O}\big(\sqrt{\frac{n}{L}}\big)$ blocks of $s$, and recursively invokes the Query procedure on each sampled block.
In the final recursion round, it queries the entire input string.

We next give only a very high-level description of the Decision procedure; the full details appear in Section~\ref{subsec:upper_bound:proof:actual}.
When comparing strings $s$ and $s'$, the Decision procedure first estimates the ranks of all positions in the strings (using the estimation queries made by the Query procedure).
Then, \emph{at the first recursion level}, the procedure searches for pairs of corresponding rank-segments that are both contained in blocks that were sampled by the Query procedure.
For each such pair, it checks if the residual strings of the two rank-segments match up to boundary slack $2\Delta$, by recursively invoking the Decision procedure on these rank-segments.
The recursive call receives the outputs of the Query procedure on the two sampled blocks containing these rank-segments, together with the position of each rank-segment within its respective block.

In the next recursion levels, after estimating the ranks, the Decision procedure considers both possible ways of ordering the two input strings, and for each ordering it considers all relevant offsets $h$.
For each offset, it partitions the two strings into rank-segments as before, except that when partitioning the first string (w.r.t.\ the current ordering of the two strings), it discards the first $h$ (estimated) ranks. 
It then searches, as before, for pairs of corresponding rank-segments that are contained in sampled blocks, and recursively checks whether their residual strings match up to boundary slack.
The procedure accepts if and only if there was some inspected offset $h$ for which all recursive calls accepted.\footnote{
Actually, for convenience, in the actual implementation we do not treat the first recursive round separately. 
Instead, already in the first round, we use the more general procedure (that tries all offsets) described for the second round onward.
Hence, our tester accepts inputs even if their residual strings only match up to boundary slack $0.1\cdot \epsilon \cdot n$.
}

Denoting the query complexity of the tester after $r$ recursion rounds by $Q_r(n)$, we have $Q_r(n) = \widetilde{O}\left(\frac{n^2}{L^2} + \sqrt{\frac{n}{L}} \cdot Q_{r-1}\left(O(L)\right)\right)$, where $Q_0(n) = n$.
We write $Q_r(n) = \widetilde{O}(n^{\gamma_r})$, where $\gamma_r > 0$, and we choose $L$ so as to balance the terms $\frac{n^2}{L^2}$ and $\sqrt{\frac{n}{L}}\cdot L^{\gamma_{r-1}}$, which gives $L \eqdef n^{3/(3+2\gamma_{r-1})}$.
With this choice of $L$, we get $\gamma_r = 2-2\cdot 3/(3+2\gamma_{r-1})$. 
Together with $\gamma_0 = 1$, the solution to the recursion yields $\gamma_r = \frac{1}{2-(3/4)^r}$.
Thus, the final query complexity after $r$ recursion rounds is $\widetilde{O}\big(n^{\frac{1}{2-(3/4)^r}}\big)$.  
(Indeed, the basic non-adaptive $\widetilde{O}(n^{4/5})$-query tester discussed above is a special case corresponding to $r=1$.)
Setting $r=O(\log(1/\delta))$, we get the desired $O(n^{1/2+\delta})$ bound.

\paragraph{A remark on obtaining an
$O(n^{2/5+\delta})$-query \emph{adaptive} tester.}
Composing the basic adaptive tester with the foregoing recursive procedure directly yields an $O(n^{2/5+\delta})$-query \emph{adaptive} tester for \texttt{ResStringEq}, analogous to the tester of~\cite{FMS18} for the \texttt{Dyck} languages. 
Recall that the basic adaptive tester selects $O(1/\epsilon)$ pairs of corresponding rank-segments, and if both segments are short, it queries the segments in their entirety, and checks whether their residual strings match up to boundary slack $2\Delta$. 
Rather than reading the selected rank-segments entirely, we can instead perform each of the boundary-slack matching checks by invoking the recursive procedure described above. 
For constant $\epsilon>0$, this will require us to query only $O(L^{1/2+\delta})$ points in each of the sampled pairs of corresponding rank-segments.
To balance this term with the $\widetilde{O}(n^2/L^2)$ queries needed for the rank estimates,  we set $L=n^{4/5}$, giving us the claimed $O(n^{2/5+\delta})$ query complexity.

%% file: F_preliminaries.tex
\section{Preliminaries}\label{sec:preliminaries}

\paragraph{Property testing.}

For an alphabet $\Sigma$, a \textsf{property} is a collection of sets $\Pi = \bigcup_{n \in \mathbb{N}}\Pi_n$ such that $\Pi_n$ is a set of strings in $\Sigma^n$.
The relative hamming distance between two strings $x,x'\in\Sigma^n$ is the fraction of positions on which they differ. 
We say $x$ is $\epsilon$-\textsf{far} from $x'$ if the relative hamming distance between $x$ and $x'$ is greater than $\epsilon$, and otherwise we say they are $\epsilon$-\textsf{close}.
A string $x\in\Sigma^n$ is $\epsilon$-far from a property $\Pi = \bigcup_{n \in \mathbb{N}}\Pi_n$ if it is $\epsilon$-far from any $x'\in \Pi_n$; otherwise, it is $\epsilon$-close to $\Pi$.
A \textsf{tester} for a property $\Pi$ is a probabilistic algorithm that, on input parameters $n\in \mathbb{N}$, $\epsilon>0$ and oracle access to $x\in\Sigma^n$, outputs $1$ with probability at least $2/3$ if $x$ is in $\Pi$ (``completeness''), and outputs $0$ with probability at least $2/3$ if $x$ is $\epsilon$-far from $\Pi$ (``soundness'').
The \textsf{query complexity} of the tester is $q : \mathbb{N} \times [0, 1] \rightarrow \mathbb{N}$ if, on input $n$, $\epsilon$ and oracle access to any $x\in\Sigma^n$, the tester makes at most $q(n, \epsilon)$ queries to $x$.

\paragraph{Statistical distance and indistinguishability.}

The \textsf{statistical distance} between random variables $X$ and $X'$ is defined as
$$
    \Delta(X, X') \eqdef \frac{1}{2} \cdot \sum_{x}{|\Pr[X=x]-\Pr[X'=x]|} = \max_{f:\{0,1\}^*\to\{0,1\}}{\{\Pr[f(X)=1] - \Pr[f(X')=1]\}}
$$
We say that an algorithm $A$ \textsf{distinguishes} between two sequences of random variables $\{X_n\}_{n\in \mathbb{N}}$ and $\{X'_n\}_{n\in \mathbb{N}}$ if
$$
    |\Pr[A(X_n)=1]-\Pr[A(X'_n)=1]| = \Omega(1),
$$
which implies that $\Delta(X_n,X'_n) = \Omega(1)$.
We say that two sequences of random variables $\{X_n\}_{n\in \mathbb{N}}$ and $\{X'_n\}_{n\in \mathbb{N}}$ are \textsf{indistinguishable} if $\Delta(X_n,X'_n) \neq \Omega(1)$.

%% file: Ga_lb_proof_nonadaptive.tex
\section{Lower bound}\label{sec:lb_proof}

In this section we prove an $\Omega(n^{2/5})$ lower bound for testing \texttt{ResStringEq} (Theorem~\ref{thm:main}), as well as a stronger $\Omega(\sqrt{n})$ lower bound for testers that make non-adaptive queries (Theorem~\ref{thm:non_adaptive_lower_bound}). 
As described in Section~\ref{sec:overview:lb}, the proofs proceed by an indistinguishability argument.
We first present the two distributions that are used for both proofs.
We then prove the non-adaptive lower bound in Section~\ref{subsec:non_adaptive_proof}, before turning to the general adaptive case in Section~\ref{subsec:adaptive_proof}.

\subsection{The two distributions}\label{subsec:the_two_distributions}

We construct two random pairs of strings in $\{0,1,*\}^n$: a {\small YES} pair, denoted $(S,S')$, and a {\small NO} pair, denoted $\big(\widetilde{S},\widetilde{S}'\big)$.
We first choose the locations of the star symbols identically in the two distributions, as follows.
We begin with the first string in the pair (i.e., $S$ in the {\small YES} pair and $\widetilde{S}$ in the {\small NO} pair).
We partition the string into $m\eqdef n/b$ consecutive blocks of length $b$, for a parameter $b\in \mathbb{N}$ to be specified later.
We draw $m$ values $X = (X_1,\ldots,X_m)$, such that each $X_h$ is uniform over $\{0,1,\ldots,b\}$, 
where the value $X_h$ indicates that block $h$ in the constructed string will contain $X_h$ non-star symbols. 
To ensure that exactly $n/2$ positions contain star symbols, we draw the first $m/2$ values $X_1,\ldots,X_{m/2}$ independently, and then set the other $m/2$ mirror blocks to the complement value; that is, we set $X_{m/2+h} = b-X_{m/2+1-h}$ for every $h\in [m/2]$. 
Within each block, the locations of the prescribed number of non-star symbols can be chosen arbitrarily.
For concreteness and for convenience of presentation, we place the star symbols at the \emph{end} of each block.
Independently, we repeat this process in the second string (i.e., $S'$ and $\widetilde{S}'$), but using fresh values $X' = (X'_1,\ldots,X'_m)$.\footnote{
    We note that the proof will not use the fact that $X'$ is drawn in the same way as $X$, only that $X'$ and $X$ are independent.
    We could have chosen arbitrary $n/2$ positions for the star symbols in the second string, as long as they are independent of $X$. 
}  
We stress that we use the same values $X$ and $X'$ to construct both the {\small YES} and {\small NO} pairs.

For every $i\in[n]$, we define the rank of $i$ in the first string, denoted $R(i)$, as follows. If $i$ is not a star position (according to $X$), we define $R(i)$ to be the number of non-star positions up to and including $i$; that is, $R(i) \eqdef \sum_{h \leq \lfloor i/b \rfloor} X_h + (i \bmod b)$ (we use here our convention that the star symbols are placed at the \emph{end} of each block).
Otherwise (i.e., $i$ is a star position), we define $R(i) = \bot$.
Similarly, we let $R'(i)$ denote the rank of position $i$ in the second string, defined analogously with respect to $X'$.

We now fill the remaining positions in the strings (this is done exactly as described in the overview in Section~\ref{sec:overview:lb}). 
In the {\small YES} pair $(S,S')$, for every $r\in[n/2]$, we draw a uniform bit $V_r\in\{0,1\}$ and place it in the two positions with rank $r$; that is, we set $S_i=S'_{i'}=V_r$, where $i$ and $i'$ are the positions such that $R(i) = R'(i') = r$.
In the {\small NO} pair $\big(\widetilde{S},\widetilde{S}'\big)$, we do the same, except that on the middle third of the ranks we use independent bits in the two strings. 
That is, let $M \eqdef \left(\frac{n}{6}, \frac{n}{3}\right]$ be the middle third of the ranks.
Then, for ranks $r \in [n/2] \setminus M$ we use a shared uniform bit $V_r$ exactly as before (i.e., we set $\widetilde{S}_i=\widetilde{S}'_{i'}=V_r$, where $R(i) = R'(i')=r$), while for ranks $r \in M $ we draw two independent uniform bits $V_r, V'_r \in \{0,1\}$, and use $V_r$ in the first string and $V'_r$ in the second (i.e., we set $\widetilde{S}_i=V_r$ and $\widetilde{S}'_{i'}=V'_r$, where $R(i) = R'(i')=r$).

\begin{claim}[implicit in the proof of{~\cite[Lem.~5.4]{FMS18}}; see also{~\cite[Lem.~14]{PRR03}}]
    For any sufficiently small $\epsilon>0$, the probability that $(\widetilde{S},\widetilde{S}')$ is $\epsilon$-far from \textup{\texttt{ResStringEq}} is at least $1-o(1)$.
\end{claim}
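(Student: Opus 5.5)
The plan is to reduce $\epsilon$-farness to a lower bound on the edit distance between the two residual strings, and then get that lower bound from a counting argument over the independent middle bits.

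\emph{Step 1: from Hamming closeness to edit distance.} Suppose $(s,s')\in\{0,1,*\}^n\times\{0,1,*\}^n$ is $\epsilon$-close to \texttt{ResStringEq}. Then changing at most $2\epsilon n$ symbols in total (over both strings) produces a pair with equal residual strings. Each single-symbol change affects the residual string by at most one edit operation:
\begin{itemize}
\item a bit changed to `$*$' deletes one symbol of the residual string;
\item `$*$' changed to a bit inserts one symbol;
\item a bit changed to the other bit substitutes one symbol.
\end{itemize}
Hence $\mathrm{ed}(\mathrm{Res}(s),\mathrm{Res}(s'))\le 2\epsilon n$, where $\mathrm{ed}$ is the unit-cost insertion/deletion/substitution distance. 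It therefore suffices to show that, with probability $1-o(1)$,
\[
\mathrm{ed}\bigl(\mathrm{Res}(\widetilde{S}),\mathrm{Res}(\widetilde{S}')\bigr)>2\epsilon n .
\]

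\emph{Step 2: structure of the residual strings.} By construction, both residual strings have length exactly $n/2$. They have the form $\mathrm{Res}(\widetilde{S})=A\,u\,B$ and $\mathrm{Res}(\widetilde{S}')=A\,v\,B$, where:
\begin{itemize}
\item $A$ consists of the shared bits of ranks $\le n/6$;
\item $B$ consists of the shared bits of ranks $>n/3$;
\item $u=(V_r)_{r\in M}$ and $v=(V'_r)_{r\in M}$ are independent and uniform over $\{0,1\}^{N}$, with $N\eqdef n/6$.
\end{itemize}
Note that the star locations (that is, $X$ and $X'$) do not affect these residual strings at all. We condition on arbitrary values of $A$, $B$ and $v$, which fixes $y\eqdef AvB$. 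The remaining randomness is $u$, and the map $u\mapsto AuB$ is injective, so $AuB$ is uniform over a set of $2^N$ distinct strings.

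\emph{Step 3: counting.} We bound the number of strings $z$ with $\mathrm{ed}(z,y)\le d$, where $d\eqdef 2\epsilon n$. An edit script of cost at most $d$ on a string of length $n/2$ is determined by:
\begin{itemize}
\item the set of at most $d$ positions it touches in $y$ (or in the $O(n)$ slots between symbols), giving at most $\binom{O(n)}{\le d}$ choices;
\item the type of each operation and the new symbol, giving $O(1)^d$ choices.
\end{itemize}
So the number of such $z$ is at most $2^{H(O(\epsilon))\cdot O(n)+O(\epsilon n)}=2^{c(\epsilon)n}$, where $H$ is the binary entropy function and $c(\epsilon)\to 0$ as $\epsilon\to 0$. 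Consequently,
\[
\Pr_u\bigl[\mathrm{ed}(AuB,y)\le 2\epsilon n\bigr]\le 2^{c(\epsilon)n-n/6}.
\]
For $\epsilon$ small enough that $c(\epsilon)<1/6$, this is $2^{-\Omega(n)}=o(1)$. Averaging over the conditioning on $A$, $B$ and $v$ then completes the proof.

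The main obstacle I expect is keeping the edit-script count honest: insertions can be placed at any slot, and $d$ is linear in $n$. This is why the bound must be stated via binary entropy rather than as a crude polynomial. However, only the qualitative conclusion $c(\epsilon)\to 0$ is needed, so any standard estimate such as $\binom{m}{k}\le (em/k)^k$ suffices.
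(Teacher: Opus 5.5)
Your proof is correct. The paper itself does not prove this claim; it only records it as implicit in \cite[Lem.~5.4]{FMS18} (see also \cite[Lem.~14]{PRR03}). So your proposal supplies a self-contained argument rather than paralleling one in the text.

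Your route is the standard counting argument for such statements. First, Hamming $\epsilon$-closeness of the pair forces $\mathrm{ed}(\mathrm{Res}(\widetilde{S}),\mathrm{Res}(\widetilde{S}'))\le 2\epsilon n$. Then you condition on $A$, $B$, $v$ and compare two quantities: the size $2^{c(\epsilon)n}$ of the edit ball around $y=AvB$, and the $2^{n/6}$ equally likely, pairwise distinct values of $AuB$.

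Conditioning on the shared parts has two advantages. You never need to relate $\mathrm{ed}(AuB,AvB)$ to $\mathrm{ed}(u,v)$, which is not immediate because an alignment may shift $A$ or $B$. You also never need any fact about the LCS of two random strings. It also makes transparent that the star positions play no role.

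One small bookkeeping fix is needed in Step~3. Several insertions can land in the same slot, so the touched slots form a multiset, not a set. There are two easy repairs:
\begin{itemize}
\item Count insertion multiplicities by stars and bars, which gives at most $\binom{n/2+1+d}{d}$ choices.
\item Index inserted symbols by their positions in $z$. First fix $|z|$, which has at most $2d+1$ possible values. The inserted positions are then a subset of at most $d$ of the at most $n/2+d$ positions of $z$. Now $z$ is determined by the deleted, substituted and inserted position sets together with the new bits.
\end{itemize}
Either way the count is still $2^{c(\epsilon)n}$ with $c(\epsilon)\to 0$. So your conclusion $\Pr_u[\mathrm{ed}(AuB,y)\le 2\epsilon n]\le 2^{-\Omega(n)}$ for sufficiently small $\epsilon$ stands.
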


Thus, to establish Theorem~\ref{thm:main} (resp., Theorem~\ref{thm:non_adaptive_lower_bound}), we show that for any tester that makes $o(n^{2/5})$ queries (resp., $o(\sqrt{n})$ non-adaptive queries), the view of the tester on input $(S,S')$ is indistinguishable from its view on input $(\widetilde{S},\widetilde{S}')$. 

\subsection{The non-adaptive case}\label{subsec:non_adaptive_proof}

We begin with the non-adaptive case, proving the $\Omega(\sqrt{n})$ lower bound stated in Theorem~\ref{thm:non_adaptive_lower_bound}.
Consider an arbitrary non-adaptive tester for \texttt{ResStringEq} with query complexity $q$. 
Fix the randomness of the tester, and let $Q$ and $Q'$ be the sets of queries the tester makes to the first and second strings in the input pair, respectively.
Let $A \eqdef (S_Q,S'_{Q'})$ be the view of the tester on input $(S,S')$; that is, $A$ is the restriction of $(S,S')$ to the respective positions in $Q$ and $Q'$.
Similarly, let $\tilde{A} \eqdef \big( \widetilde{S}_Q, \widetilde{S}'_{Q'} \big)$ be the view of the tester on input $(\widetilde{S},\widetilde{S}')$.

To bound the statistical distance between $A$ and $\tilde{A}$ we will use the following fact:

\begin{fact}\label{fact:stat_dist}\!\!\!\footnote{See Appendix~\ref{apdx:proof_of_stat_dist_fact}, which provides a proof of a more general statement.}
    For any two random variables $X$ and $X'$, and any event $\mathcal{E}$ over their joint probability space, if $\left(X \,|\, \ovbar{\mathcal{E}}\,\right)$ and $\left(X' \,|\, \ovbar{\mathcal{E}}\,\right)$ are identically distributed,\footnote{
    The notation $\ovbar{\mathcal{E}}$ denotes the complement of the event $\mathcal{E}$.} then $\Delta(X, X') \leq \Pr[\mathcal{E}]$.
\end{fact}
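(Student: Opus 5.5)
We prove Fact~\ref{fact:stat_dist} directly from the definition of statistical distance, by splitting every outcome probability according to whether $\mathcal{E}$ occurs. For any value $x$ we write
\[
\Pr[X=x] = \Pr[X=x \wedge \mathcal{E}] + \Pr[\ovbar{\mathcal{E}}]\cdot\Pr[X=x \,|\, \ovbar{\mathcal{E}}\,],
\]
and similarly for $X'$. Here $\mathcal{E}$ lives on the joint space, so $\Pr[\ovbar{\mathcal{E}}]$ is the same number in both expansions. If $\Pr[\ovbar{\mathcal{E}}]=0$ the claim is trivial, since $\Delta\le 1=\Pr[\mathcal{E}]$. Otherwise the hypothesis says the two second terms coincide for every $x$. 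Hence
\[
\Pr[X=x]-\Pr[X'=x] = \Pr[X=x\wedge\mathcal{E}] - \Pr[X'=x\wedge\mathcal{E}].
\]

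Next we sum absolute values over $x$ and use the triangle inequality:
\[
\sum_x|\Pr[X=x]-\Pr[X'=x]| \le \sum_x \Pr[X=x\wedge\mathcal{E}] + \sum_x\Pr[X'=x\wedge\mathcal{E}] = 2\Pr[\mathcal{E}].
\]
Dividing by $2$ gives $\Delta(X,X')\le\Pr[\mathcal{E}]$.

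An equivalent route uses the $\max_f$ formulation. For any distinguisher $f$, the quantity $\Pr[f(X)=1]-\Pr[f(X')=1]$ equals $\Pr[f(X)=1\wedge\mathcal{E}]-\Pr[f(X')=1\wedge\mathcal{E}]$, because the $\ovbar{\mathcal{E}}$ contributions cancel. This difference lies between $-\Pr[\mathcal{E}]$ and $\Pr[\mathcal{E}]$.

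No step is a genuine obstacle. The only point needing care is that $\mathcal{E}$ is a single event on a common probability space, so the weight $\Pr[\ovbar{\mathcal{E}}]$ multiplying the conditional distributions really is identical for $X$ and $X'$. In the intended application, $\mathcal{E}$ is the event that the tester queries interior matching positions. It is defined in terms of $X,X'$ and the query sets, which are shared between the {\small YES} and {\small NO} constructions, so $\Pr[\mathcal{E}]$ is indeed common to both views.
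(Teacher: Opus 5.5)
Your proof is correct and follows the same route as the paper's appendix proof: you split each outcome probability according to $\mathcal{E}$, cancel the $\overline{\mathcal{E}}$ parts using the hypothesis, and bound the remainder by the triangle inequality to get $\frac{1}{2}\cdot 2\Pr[\mathcal{E}]$. The paper actually proves a slightly more general version, with two events on possibly separate probability spaces and a hypothesis stated directly on the joint probabilities $\Pr[X=x,\overline{\mathcal{E}}]=\Pr[X'=x,\overline{\mathcal{E}}']$; your two remarks — that the common weight $\Pr[\overline{\mathcal{E}}]$ turns equality of the conditional distributions into exactly this hypothesis, and that the degenerate case $\Pr[\overline{\mathcal{E}}]=0$ is trivial — are precisely the specialization step.
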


We say that positions $(i, i') \in [n]\times[n]$ are \textsf{matching} positions if $R(i) = R'(i') \neq \bot$.
We say that $(i, i')$ are \textsf{interior} matching positions if $R(i) = R'(i') \in M$.
Let $\mathcal{E}$ be the event that there exist $i\in Q$ and $i'\in Q'$ such that $(i, i')$ are interior matching positions.

\begin{claim}\label{claim:identically_distributed}
    It holds that $(A \,|\, \ovbar{\mathcal{E}}\,)$ and $(\tilde{A} \,|\, \ovbar{\mathcal{E}}\,)$ are identically distributed. 
\end{claim}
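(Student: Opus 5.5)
The plan is to condition first on the star structure, i.e.\ on the values $X$ and $X'$, and prove the distributional identity pointwise. Since $Q,Q'$ are fixed (the tester's randomness is fixed and the queries are non-adaptive), and $\mathcal{E}$ is determined by $(X,X')$, it suffices to show the following. For every fixing $(x,x')$ of $(X,X')$ that lies in $\ovbar{\mathcal{E}}$, the conditional distributions of $A$ and $\tilde{A}$ given $(X,X')=(x,x')$ coincide. Averaging over $(x,x')$ drawn from the distribution of $(X,X')$ conditioned on $\ovbar{\mathcal{E}}$ then gives the claim, because $(X,X')$ is distributed identically in the two constructions.

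Fix such $(x,x')$. Then the star positions, and hence the ranks $R(i)$ and $R'(i')$, are fixed. In both views, every queried star position has the deterministic answer `$*$'. Each queried non-star position $i\in Q$ has answer $V_{R(i)}$. Each queried non-star $i'\in Q'$ has answer $V_{R'(i')}$ in the {\small YES} case, and in the {\small NO} case it has answer $V_{R'(i')}$ if $R'(i')\notin M$ and $V'_{R'(i')}$ otherwise. Let $\mathcal{R}=\{R(i):i\in Q\}\setminus\{\bot\}$ and $\mathcal{R}'=\{R'(i'):i'\in Q'\}\setminus\{\bot\}$. Within one string, distinct positions have distinct ranks, so each view is a deterministic function (the same placement map in both cases) of a vector of bits indexed by the queried ranks.

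The key step is to identify the joint distribution of these bits. In the {\small YES} case, the answers are $(V_r)_{r\in\mathcal{R}}$ for the first string and $(V_r)_{r\in\mathcal{R}'}$ for the second, where the $V_r$ are i.i.d.\ uniform; equivalently, we have i.i.d.\ uniform bits indexed by $\mathcal{R}\cup\mathcal{R}'$, with shared coordinates on $\mathcal{R}\cap\mathcal{R}'$. In the {\small NO} case, the only difference concerns ranks in $\mathcal{R}'\cap M$, where the second string uses a fresh bit $V'_r$ instead of $V_r$. This changes the joint distribution only if some such $r$ is also in $\mathcal{R}$, i.e.\ only if $r\in\mathcal{R}\cap\mathcal{R}'\cap M$. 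That would mean some $i\in Q$, $i'\in Q'$ satisfy $R(i)=R'(i')\in M$, which is exactly $\mathcal{E}$. Under $\ovbar{\mathcal{E}}$, each $r\in\mathcal{R}'\cap M$ lies outside $\mathcal{R}$, so $V_r$ is used only once in the {\small YES} view. Replacing it by the independent uniform $V'_r$ leaves the family of used bits i.i.d.\ uniform with the same sharing pattern. Hence the two answer vectors are identically distributed, and so are the views.

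The only point requiring care is the conditioning order. The $V$ and $V'$ bits are independent of $(X,X')$, so conditioning on $(X,X')=(x,x')$ leaves them i.i.d.\ uniform. Also, $\ovbar{\mathcal{E}}$ is a union of such fixings, so it does not bias the bits. I expect no real obstacle; the main thing is to state the mixture argument cleanly.
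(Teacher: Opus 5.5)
Your proposal is correct and follows the paper's proof. Like the paper, you fix $(X,X')$ inside $\ovbar{\mathcal{E}}$ and observe that only queried ranks in $M$ are treated differently; under $\ovbar{\mathcal{E}}$ no such rank is shared between the two strings, so the answers coincide in distribution, and you then average over the fixings. Your version simply spells out the bit-sharing pattern and the mixture step that the paper leaves implicit.
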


\begin{proof}
    Fix the star positions arbitrarily (by fixing $X$ and $X'$) such that $\ovbar{\mathcal{E}}$ holds (note that the event $\mathcal{E}$ is determined by the star positions).
    We show that under this fixing, the conditional distributions of $A$ and $\tilde{A}$ are identical.
    First, note that we only need to consider the answers in $A$ and $\tilde{A}$ to the queries (among $Q$ and $Q'$) that are made to non-star positions, because the remaining answers are fixed star symbols.
    Clearly, the answers to queries whose ranks are outside $M$ are identically distributed in $A$ and $\tilde{A}$.
    Now, since $\ovbar{\mathcal{E}}$ holds, any query whose rank is in $M$ does not match any queried position in the other string. 
    Thus, in both $A$ and $\tilde{A}$, the answer to any query whose rank is in $M$ is a uniform bit independent of all other query answers.
    The claim follows.
\end{proof}

Thus, to bound $\Delta(A, \tilde{A})$, it suffices to bound $\Pr\left[\mathcal{E}\right]$.

\begin{claim}\label{claim:coll_prob}
    If $n/b$ is sufficiently large,\footnote{I.e., larger than some sufficiently large constant.} it holds that $\Pr\left[\mathcal{E}\right] = O\left( \frac{q^2}{\sqrt{b\cdot n}} \right)$.
\end{claim}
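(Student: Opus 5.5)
Since the tester is non-adaptive and its randomness is fixed, the query sets $Q$ and $Q'$ are fixed and independent of $X$ and $X'$. I will therefore take a union bound over the at most $q^2$ pairs $(i,i')\in Q\times Q'$. This gives
\[
\Pr[\mathcal{E}] \le \sum_{(i,i')\in Q\times Q'} \Pr\big[R(i)=R'(i')\in M\big],
\]
so it suffices to show that for every fixed pair $(i,i')$ this probability is $O(1/\sqrt{b\cdot n})$.

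Because $X$ and $X'$ are independent, I condition on $X'$, which fixes $R'(i')=r$. It then suffices to show that $\max_{r\in M}\Pr[R(i)=r]=O(1/\sqrt{bn})$ for each fixed position $i$.

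Some positions can never have rank in $M$. Since every block holds at most $b$ non-star symbols, $R(i)>n/6$ forces $i>n/6$. Since $n/2-R(i)$ is at most the number of positions after $i$, $R(i)\le n/3$ forces $i\le 5n/6$. So only $i\in(n/6,5n/6]$ matter, and for these the block index $h_i$ of $i$ lies roughly in $(m/6,5m/6]$.

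\textbf{Case $h_i\le m/2$.} Here $R(i)=\sum_{h<h_i}X_h+c$, where $c=i-(h_i-1)b$ is fixed once we also require $i$ to be a non-star position. The sum consists of $h_i-1\ge m/6-1=\Omega(m)$ independent variables, each uniform on $\{0,\ldots,b\}$, using that $n/b$ is large.

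\textbf{Case $h_i>m/2$.} I use the mirror structure. The total of all block values is $\sum_h X_h = (m/2)\cdot b = n/2$. Hence $R(i)=n/2-\sum_{h>h_i}X_h - (X_{h_i}-c)$. Writing $\sum_{h>h_i}X_h$ as a sum of $b-X_{h'}$ over mirror indices $h'<m+1-h_i$ gives again a sum of $\Omega(m)$ independent uniform variables. I will fix (condition on) the remaining variables, including the mirror of $h_i$, which only shifts the target value.

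In either case the main ingredient is the anti-concentration bound
\[
\max_j \Pr\Big[\sum_{h\in[t]} X_h = j\Big] = O\!\left(\frac{1}{b\sqrt t}\right)
\]
for $t$ i.i.d.\ variables uniform on $\{0,\ldots,b\}$. With $t=\Theta(n/b)$ this gives $O\big(1/(b\sqrt{n/b})\big)=O(1/\sqrt{bn})$, as required.

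The anti-concentration bound is the step needing a real argument. I would prove it by Fourier analysis: the characteristic function of one variable is $\phi(\theta)=\frac{1}{b+1}\sum_{k=0}^b e^{ik\theta}$, so
\[
\Pr[\text{sum}=j]\le \frac{1}{2\pi}\int_{-\pi}^{\pi}|\phi(\theta)|^t\,d\theta .
\]
On $|\theta|\le 1/b$ one has $|\phi(\theta)|\le e^{-cb^2\theta^2}$, which contributes $O(1/(b\sqrt t))$. On $|\theta|>1/b$ one has $|\phi(\theta)|\le \frac{C}{(b+1)|\theta|}$, and this is bounded away from $1$ after splitting the range suitably, giving an exponentially small or comparable contribution.

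A more elementary alternative, which I would try first, is the local limit theorem for the lattice variable $X_h$ with variance $\Theta(b^2)$. Another route is to write $X_h$ as the sum of a uniform multiple of a coarse step and a fine uniform part, then condition on the coarse parts. Either should also handle the degenerate small-$t$ issues, which are excluded by the assumption that $n/b$ is large.
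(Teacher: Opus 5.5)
Your proposal is correct and follows essentially the same route as the paper. Both arguments take a union bound over $Q\times Q'$, use the independence of $R(i)$ and $R'(i')$, and restrict attention to $i\in(n/6,5n/6]$. Both then split at the midpoint, using the mirror complement in the second half to write the rank as $n/2$ minus a sum of $\Omega(n/b)$ independent uniform variables, and finish with the $O(1/(b\sqrt{t}))$ anti-concentration bound. The only difference is that the paper simply cites that bound (Fact~\ref{fact:bernoulli:generalized}) rather than proving it, so your Fourier-analytic sketch is a valid but unnecessary extra.
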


To derive the $\Omega(\sqrt{n})$ lower bound, we set $b = \alpha \cdot n$ for a sufficiently small constant $\alpha>0$.
By Claim~\ref{claim:coll_prob}, we get $\Pr\left[\mathcal{E}\right] = O\left(\frac{q^2}{\sqrt{b \cdot n}}\right) = O\left(\frac{q^2}{n}\right)$, which is $o(1)$ if $q=o(\sqrt{n})$. 
Thus, unless $q = \Omega(\sqrt{n})$, the views $A$ and $\tilde{A}$ are indistinguishable, establishing the lower bound.

 \smallskip
It remains to prove Claim~\ref{claim:coll_prob}.
To do so, we will make use of the following fact: 

\begin{fact}[cf.{~\cite{MR08}}]\label{fact:bernoulli:generalized}
    Let $t,b\in\mathbb{N}$, and let $X_1,\ldots,X_t$ be $t$ independent random variables, each distributed uniformly over $\{0,1,\ldots,b\}$. 
    Then, for every $j\in \mathbb{N}$, it holds that $\Pr\left[\sum_{k\in[t]}{X_k} = j\right] = O\left( \frac{1}{b \cdot \sqrt{t}} \right)$.
\end{fact}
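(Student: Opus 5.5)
The plan is a Fourier-analytic (characteristic-function) argument. Let $S\eqdef\sum_{k\in[t]}X_k$ and let $\varphi(\theta)\eqdef \E[e^{\mathrm{i}\theta X_1}]=\frac{1}{b+1}\sum_{x=0}^{b}e^{\mathrm{i}x\theta}$. Since $S$ is integer-valued, Fourier inversion gives
$$\Pr[S=j]=\frac{1}{2\pi}\int_{-\pi}^{\pi}\varphi(\theta)^t e^{-\mathrm{i}j\theta}\,d\theta\le\frac{1}{2\pi}\int_{-\pi}^{\pi}|\varphi(\theta)|^t\,d\theta .$$
Summing the geometric series gives $|\varphi(\theta)|=\frac{|\sin((b+1)\theta/2)|}{(b+1)|\sin(\theta/2)|}$. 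So the task reduces to showing $\int_{-\pi}^{\pi}|\varphi|^t=O\big(\frac{1}{b\sqrt t}\big)$. It is natural to substitute $u=(b+1)\theta$, which makes the factor $\frac1{b+1}$ explicit: it remains to bound $\int_{-(b+1)\pi}^{(b+1)\pi}|\varphi(u/(b+1))|^t\,du$ by $O(1/\sqrt t)$. The cases $b=0$ (trivial) and $t$ bounded by a constant are handled by the trivial bound $\Pr[S=j]\le\frac{1}{b+1}$, obtained by conditioning on $X_2,\ldots,X_t$.

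I would split the range of $u$ into three regions, using $\sin(\theta/2)\ge\theta/\pi$ on $[0,\pi]$.

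\emph{Small region} $|u|\le c_0$, for a small absolute constant $c_0$. A second-order Taylor expansion of $|\varphi|$ around $0$ uses the variance $\frac{b(b+2)}{12}=\Theta(b^2)$. It gives $|\varphi(\theta)|\le 1-c\,b^2\theta^2\le\exp(-c\,b^2\theta^2)$, that is, $|\varphi|\le\exp(-c'u^2)$. Hence this region contributes at most $\int e^{-c'tu^2}du=O(1/\sqrt t)$.

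\emph{Tail region} $|u|\ge K\eqdef 2\pi$. Here $|\varphi|\le\frac{1}{(b+1)\cdot|\theta|/\pi}=\pi/|u|\le 1/2$. The contribution is therefore at most $2\int_K^\infty(\pi/u)^t\,du=O(2^{-t})$, which is $O(1/\sqrt t)$.

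\emph{Middle region} $c_0\le|u|\le K$. This region has width $O(1)$, so it suffices to show that $|\varphi(u/(b+1))|\le 1-\eta$ there, for an absolute constant $\eta>0$ independent of $b$. Then its contribution is $O((1-\eta)^t)=O(1/\sqrt t)$.

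I expect the main obstacle to be this uniformity in $b$ in the middle region and at the boundary of the small region. My first attempt would compare $|\varphi(u/(b+1))|$ with $\left|\frac{\sin(u/2)}{(b+1)\sin(u/(2(b+1)))}\right|$ and use that $(b+1)\sin(u/(2(b+1)))\ge \frac{u}{2}\cdot(1-O(u^2/b^2))$. This should show that the ratio is bounded by $\max\big(|\sin(u/2)|/(u/2)\cdot(1+o(1)),\,\pi/|u|\big)$, and this is bounded away from $1$ for $u\in[c_0,K]$ once $b$ exceeds a constant. The finitely many small values of $b$ follow from continuity, since $|\varphi|<1$ on $(0,2\pi)$ for each fixed $b\ge1$. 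The Taylor bound in the small region should be checked in the same uniform way, via $1-|\varphi|^2=\frac{2}{(b+1)^2}\sum_{x<y}(1-\cos((y-x)\theta))$ and $1-\cos z\ge z^2/5$ for $|z|\le c_0$.
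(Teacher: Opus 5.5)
Your proof is correct. It necessarily takes a different route from the paper, because the paper gives no proof of this fact; it only points to \cite{MR08}, which treats exactly these convolution powers of the discrete uniform distribution and gives an explicit, asymptotically sharp constant.

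Your characteristic-function argument proves the $O(1/(b\sqrt t))$ order from scratch, and it handles the one delicate point, uniformity in $b$, correctly:
\begin{itemize}
\item In the small region, your identity together with $\sum_{x<y}(y-x)^2=(b+1)^2\cdot\frac{b(b+2)}{12}$ gives $|\varphi(\theta)|\le\exp(-b(b+2)\theta^2/60)$ whenever $(b+1)|\theta|\le c_0$.
\item In the tail, Jordan's inequality gives $|\varphi|\le\pi/|u|$, and $t\ge 2$ makes that integral converge. The case $t=1$ is covered by your trivial $1/(b+1)$ bound.
\item In the middle region, the bound $\frac{\sin(u/2)}{u/2}\cdot\bigl(1-\frac{u^2}{24(b+1)^2}\bigr)^{-1}$ works for all large $b$. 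Compactness handles the finitely many small $b$.
\end{itemize}

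There is one slip in the middle region. You want the \emph{minimum} of your two bounds, not the maximum, since $\pi/|u|>1$ near $u=c_0$. In fact the first bound alone suffices on $[c_0,2\pi]$, because $\sin x/x$ is decreasing on $(0,\pi]$.

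As for what each route buys: the citation gives a sharp constant, which the paper never needs, since the fact is only used inside $O(\cdot)$ estimates. Your route gives self-containment. An even shorter self-contained proof extends your conditioning trick: $\Pr[S=j]=\frac{1}{b+1}\Pr\bigl[\sum_{k\ge 2}X_k\in[j-b,j]\bigr]$. Berry--Esseen then bounds the probability of a window of $b+1$ consecutive integers by $O(1/\sqrt{t})$, because $\sigma^2=b(b+2)/12=\Theta(b^2)$ and $\E|X_1-b/2|^3/\sigma^3\le (b/2)/\sigma=O(1)$. This avoids the three-region analysis entirely.
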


\begin{proof}[\textup{\textbf{Proof of Claim~\ref{claim:coll_prob}}}:]
    We first claim that for every $i \in \left( \frac{n}{6}, \frac{5n}{6} \right]$, and every $r\in[n/2]$, it holds that $\Pr[R(i) = r] = O\left( \frac{1}{\sqrt{b\cdot n}} \right)$. 
    Consider first an index $i\in \left( \frac{n}{6}, \frac{n}{2} \right]$. In this case, assuming $R(i) \neq \bot$, we have that $R(i) = \sum_{h \leq \lfloor i/b \rfloor}{X_h} + (i \bmod b)$,\footnote{
        Recall that we use the expression $(i \bmod b)$ since under our convention the star positions are placed at the \emph{end} of each block.  
    } where $X_1,\ldots,X_{\lfloor i/b \rfloor}$ are independent random variables uniformly distributed over $\{0,1,\ldots,b\}$. 
    Therefore, by Fact~\ref{fact:bernoulli:generalized} we have:
    \begin{align*}
        \Pr\!\left[R(i) = r\right] \leq \Pr\!\left[ \sum_{h \leq \lfloor i/b \rfloor}{X_h} = r - (i \bmod b)\right] = O\bigg(\frac{1}{b \cdot \sqrt{n/b}}\bigg) = O\!\left(\frac{1}{\sqrt{b\cdot n}}\right)
    \end{align*}
    where we relied on the fact that $\lfloor i/b \rfloor = \Omega(n/b)$ (because $i>n/6 = \Omega(n)$), as well as the fact that $n/b$ is sufficiently large (by the claim's hypothesis), ensuring that $\lfloor i/b \rfloor \geq 1$.  
    Now, the case where $i \in \left( \frac{n}{2}, \frac{5n}{6} \right]$ is symmetric: Recall that $\sum_{h\in[m]}{X_h} = n/2$. 
    Therefore, we can write $R(i) = \sum_{h \leq \lfloor i/b \rfloor}{X_h} + (i \bmod b) = n/2 - \sum_{h > \lfloor i/b \rfloor}{X_h} + (i \bmod b) $, where $X_{\lfloor i/b \rfloor + 1},\ldots,X_m$ are independent (and uniformly distributed over $\{0,1,\ldots,b\}$). 
    Thus, similarly to the previous case, we have $\Pr\!\left[R(i) = r\right] \leq \Pr\!\big[ \sum_{h>\lfloor i/b \rfloor}{X_h} \allowbreak = \allowbreak n/2 + (i \bmod b) - r \big] = O\left(\frac{1}{\sqrt{b\cdot n}}\right)$, as claimed.

    Recall that our aim is to bound $\Pr\left[\mathcal{E}\right] = \Pr\left[\exists\, (i,i')\in Q\times Q' \textit{ s.t. } R(i) = R'(i') \in  M \right]$. 
    Note that if $R(i) \in M = \left(\frac{n}{6}, \frac{n}{3} \right]$ then it must hold that $i \in \left( \frac{n}{6}, \frac{5n}{6} \right]$ (because there must be at least $n/6$ positions before $i$ and at least $n/6$ positions after $i$ with the remaining ranks). 
    By the previous paragraph, it follows that for each pair $(i,i')\in Q\times Q'$, the probability that $R(i) = R'(i') \in M$ is at most $ O\left( \frac{1}{\sqrt{b\cdot n}} \right)$, where we rely on the fact that $R(i)$ and $R'(i')$ are independent. 
    Taking a union bound over all pairs $(i,i')\in Q\times Q'$, we get that $\Pr\left[ \mathcal{E} \right] = O\left(\frac{q^2}{\sqrt{b\cdot n}}\right)$, as desired.
\end{proof} 

%% file: Gb_lb_proof_adaptive.tex
\subsection{The adaptive case}\label{subsec:adaptive_proof}

We turn to the general adaptive case, proving the $\Omega(n^{2/5})$ lower bound stated in Theorem~\ref{thm:main}.
Consider an arbitrary (possibly adaptive) tester with query complexity $q$. 
As in the previous section, fix the randomness of the tester.  

\paragraph{Notation.}
As a convention, we write $x_{[k]}$ as a shorthand for $(x_1, \ldots, x_{k})$.
We next introduce notation for the adaptive queries.
For each query $k\in[q]$, let $Q_k(a_{[k-1]})$ be the position queried by the tester at the $k\th$ query, where $a_{[k-1]} = (a_1,\ldots, a_{k-1}) \in \{0,1,*\}^{k-1}$ denotes the first $k-1$ query answers. 
Let $Q_{<k}(a_{[k-2]})$ and $Q'_{<k}(a_{[k-2]})$ be the sets of positions queried in the first and second strings, respectively, prior to the $k\th$ query, and let $Q_{\leq k}(a_{[k-1]})$ and $Q'_{\leq k}(a_{[k-1]})$ be the sets of positions queried up to and {including} the $k\th$ query.
The \textsf{view} of the tester is the full sequence of answers $a = (a_1,\ldots, a_q)\in \{0,1,*\}^q$.
Let $Q(a)$ and $Q'(a)$ denote the full sequences of queries to the first and second strings, respectively, given the full sequence of answers $a\in \{0,1,*\}^q$.

 \medskip
Let $A = (A_1, \ldots, A_q)$ and $\tilde{A} = (\tilde{A}_1, \ldots, \tilde{A}_q)$ be the views of the tester when given $\big(S,S'\big)$ and $(\widetilde{S},\widetilde{S}')$ as inputs, respectively. 
For each $a \in \{0,1,*\}^{q}$, let $\mathcal{E}_a$ be the event that there exist $i\in Q(a)$ and $i'\in Q'(a)$ such that $(i,i')$ are interior matching positions (i.e., $R(i) = R'(i') \in M$).

The subsequent Claim~\ref{claim:identically_distributed_adaptive} together with Fact~\ref{fact:stat_dist_adaptive}
shows that $\Pr\left[\mathcal{E}_{A}\right] = \Pr\left[\mathcal{E}_{\tilde{A}}\right]$ and $\Delta(A,\tilde{A}) \leq \Pr\left[\mathcal{E}_{A}\right]$.
First, similarly to Claim~\ref{claim:identically_distributed}, we have:

\begin{claim}\label{claim:identically_distributed_adaptive}
    For each $a \in \{0,1,*\}^{q}$, it holds that $\Pr\big[ A=a, \ovbar{\mathcal{E}}_{A} \big] = \Pr\big[ \tilde{A} = a, \ovbar{\mathcal{E}}_{\tilde{A}} \big]$ \textup{(}where $ \ovbar{\mathcal{E}}_{A}$ and $\ovbar{\mathcal{E}}_{\tilde{A}}$ denote the complement events of $\mathcal{E}_{A}$ and $\mathcal{E}_{\tilde{A}}$, respectively\textup{)}.
\end{claim}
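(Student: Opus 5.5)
The plan is to reduce the adaptive statement to the non-adaptive argument of Claim~\ref{claim:identically_distributed}. The point is that once $a$ is fixed, the query sequences $Q(a)$ and $Q'(a)$ are fixed, so the event $\{A=a\}$ becomes a statement about the answers at a predetermined set of positions.

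Fix $a\in\{0,1,*\}^q$. Since the tester's randomness is fixed, the queries are determined by the answers. Hence the event $\{A=a\}$ coincides with the event that, for every $k\in[q]$, the input at position $Q_k(a_{[k-1]})$ (in the appropriate string) equals $a_k$. This holds by induction on $k$: if the first $k-1$ answers are $a_{[k-1]}$, then the $k\th$ query is $Q_k(a_{[k-1]})$.

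Consequently $\{A=a\}\cap\ovbar{\mathcal{E}}_A=\{S_{Q(a)}=a|_{Q(a)},\ S'_{Q'(a)}=a|_{Q'(a)}\}\cap\ovbar{\mathcal{E}}_a$. The same identity holds with tildes: $\{\tilde A=a\}\cap\ovbar{\mathcal{E}}_{\tilde A}$ is the event that $(\widetilde S,\widetilde S')$ restricted to the fixed position sets $(Q(a),Q'(a))$ equals the corresponding entries of $a$, intersected with $\ovbar{\mathcal{E}}_a$. Here I use that on $\{A=a\}$ we have $\mathcal{E}_A=\mathcal{E}_a$. The event $\mathcal{E}_a$ depends only on $X,X'$, and these are shared between the two distributions.

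It therefore remains to show that, for the fixed sets $(Q(a),Q'(a))$, we have $\Pr[(S,S')|_{Q(a),Q'(a)}=\cdot\,,\ovbar{\mathcal{E}}_a]=\Pr[(\widetilde S,\widetilde S')|_{Q(a),Q'(a)}=\cdot\,,\ovbar{\mathcal{E}}_a]$. This is the content of the proof of Claim~\ref{claim:identically_distributed}, applied with the non-adaptive query sets $Q(a),Q'(a)$. Condition on any fixing of $(X,X')$ for which $\ovbar{\mathcal{E}}_a$ holds; this fixing has the same probability in both distributions. Under it:
\begin{itemize}
\item the star answers are determined;
\item answers at ranks outside $M$ are built identically from the shared bits $V_r$;
\item each queried non-star position with rank in $M$ has no queried partner in the other string, so its answer is a fresh uniform bit independent of everything else in both distributions.
\end{itemize}
Hence the joint law of the restricted answers is the same under both distributions. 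Summing over the fixings gives the claimed equality.

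The main point requiring care is the first reduction. One must rewrite the adaptive event $\{A=a\}$ as a non-adaptive event on fixed positions, and notice that $\mathcal{E}_A$ may be replaced by $\mathcal{E}_a$ on $\{A=a\}$. Without this step, one might worry that adaptivity correlates the query locations with the bits. After the reduction, the rest is routine.
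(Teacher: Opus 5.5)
Your proposal is correct and takes the same approach as the paper, which proves this claim by rerunning the argument of Claim~\ref{claim:identically_distributed} with $\ovbar{\mathcal{E}}$ replaced by $\ovbar{\mathcal{E}}_a$ and $Q,Q'$ replaced by $Q(a),Q'(a)$. Your observations that $\{A=a\}$ is the event that the answers at the fixed positions $Q(a),Q'(a)$ equal the entries of $a$, and that $\mathcal{E}_A=\mathcal{E}_a$ on this event, are exactly the justification that the paper leaves implicit for this substitution.
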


The proof is identical to that of Claim~\ref{claim:identically_distributed}, except that $\ovbar{\mathcal{E}}$ is replaced by $\ovbar{\mathcal{E}}_a$, and the query sets $Q,Q'$ are replaced by $Q(a),Q'(a)$.
Next, we use the following fact, which generalizes Fact~\ref{fact:stat_dist}:

\begin{fact}[see Appendix~\ref{apdx:proof_of_stat_dist_fact}]\label{fact:stat_dist_adaptive}
    Let $X$ and $X'$ be two discrete random variables supported over a domain $D$, and let $\mathcal{E}$ and $\mathcal{E}'$ be two events over the probability spaces of $X$ and $X'$, respectively.
    If for every $x\in D$ it holds that $\Pr[X=x, \ovbar{\mathcal{E}}] = \Pr[X' = x, \ovbar{\mathcal{E}}']$, 
    then $\Pr\left[\mathcal{E}\right] = \Pr\left[\mathcal{E}'\right]$ and $\Delta(X,X') \leq \Pr\left[\mathcal{E}\right]$.
\end{fact}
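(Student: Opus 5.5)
The plan is a direct computation from the definition of statistical distance as half the $\ell_1$ distance between the probability mass functions. Since $X$ and $X'$ are discrete on $D$, each $\Pr[X=x]$ splits as $\Pr[X=x,\mathcal{E}]+\Pr[X=x,\ovbar{\mathcal{E}}]$, and similarly for $X'$ with $\mathcal{E}'$. The hypothesis says the $\ovbar{\mathcal{E}}$-parts coincide pointwise, so only the $\mathcal{E}$-parts can contribute to the distance.

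The first step is the equality $\Pr[\mathcal{E}]=\Pr[\mathcal{E}']$. I would sum the hypothesis over all $x\in D$. Because $X$ takes values in $D$ almost surely, and all terms are nonnegative so the countable sum may be taken freely, this gives $\Pr[\ovbar{\mathcal{E}}]=\sum_{x\in D}\Pr[X=x,\ovbar{\mathcal{E}}]=\sum_{x\in D}\Pr[X'=x,\ovbar{\mathcal{E}}']=\Pr[\ovbar{\mathcal{E}}']$. Taking complements yields $\Pr[\mathcal{E}]=\Pr[\mathcal{E}']$.

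The second step is the distance bound. For each $x\in D$, the decomposition above together with the hypothesis gives
\begin{align*}
\bigl|\Pr[X=x]-\Pr[X'=x]\bigr| = \bigl|\Pr[X=x,\mathcal{E}]-\Pr[X'=x,\mathcal{E}']\bigr| \leq \Pr[X=x,\mathcal{E}]+\Pr[X'=x,\mathcal{E}'].
\end{align*}
Summing over $x\in D$ and multiplying by $\frac{1}{2}$ then gives $\Delta(X,X')\leq\frac{1}{2}\left(\Pr[\mathcal{E}]+\Pr[\mathcal{E}']\right)=\Pr[\mathcal{E}]$, where the last equality uses the first step.

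I do not expect any real obstacle here. The only point that needs care is that $\mathcal{E}$ and $\mathcal{E}'$ live on possibly different probability spaces, so the argument must never compare them jointly. It only compares the marginal quantities indexed by $x$, which the steps above do. Finally, Fact~\ref{fact:stat_dist} is recovered as the special case in which $X$ and $X'$ share a probability space and $\mathcal{E}=\mathcal{E}'$. In that case "identically distributed conditioned on $\ovbar{\mathcal{E}}$" is exactly the pointwise hypothesis, after multiplying both conditional probabilities by $\Pr[\ovbar{\mathcal{E}}]$.
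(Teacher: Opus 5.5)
Your proposal is correct and follows the paper's proof in Appendix~\ref{apdx:proof_of_stat_dist_fact} essentially step for step. Like the paper, you sum the hypothesis over $D$ to get $\Pr[\ovbar{\mathcal{E}}]=\Pr[\ovbar{\mathcal{E}}']$, cancel the $\ovbar{\mathcal{E}}$-parts pointwise, and then apply the triangle inequality to obtain $\Delta(X,X')\leq\frac{1}{2}(\Pr[\mathcal{E}]+\Pr[\mathcal{E}'])=\Pr[\mathcal{E}]$.
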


Thus, in order to upper bound $\Delta(A, \tilde{A})$, it suffices to upper bound $\Pr[\mathcal{E}_{A}]$. 

\begin{lemma}\label{claim:coll_prob_adaptive}
    If we set $b = C \cdot q^{3/2}$, for a sufficiently large constant $C$, then
    $\Pr[\mathcal{E}_{A}] =
    O\!\left(\frac{q^{5/4}}{\sqrt{n}}\right)
    $.
\end{lemma}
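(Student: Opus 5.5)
We follow the progress argument sketched in Section~\ref{subsec:overview:blocks}, now for a fully general adaptive tester that interleaves queries to both strings. We work in the {\small YES} distribution. By symmetry, it suffices to bound the probability that an interior match is reached by going from the left extremity toward the middle. The right half is handled by writing ranks as $n/2$ minus a suffix sum, exactly as in the proof of Claim~\ref{claim:coll_prob}, and the roles of the two strings can be swapped. An interior matching pair $(i,i')$ with $R(i)\in M$ forces $i>n/6$.

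Measuring in blocks, let $D_k$ be the rightmost block (in the first string, say) containing a queried position that matches a queried position in the other string after $k$ queries. Define the progress $P_k\eqdef D_k-D_{k-1}$. The event $\mathcal{E}_A$ then implies either $\sum_k P_k\geq n/(6b)$, or a ``jump'' match far from any previous match. The jump case is itself a large progress step, so the progress bound covers it. We then bound $\E[\sum_k P_k]$ and apply Markov's inequality.

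For each $k$, we split the progress interval as in the overview with $\tau=1$. We choose $C_k$ so that, in $(C_k,D_k]$, the $\ell\th$ queried block is preceded by at least $\ell$ unqueried blocks. The skipped part $(D_{k-1},C_k]$ then has total length at most $2N_k$ blocks, which sums to $O(q)$ blocks over all $k$. For the second part, fix $p$ and bound $\Pr[P'_k\geq p]$. This event requires $E_1$, meaning the new query matches some query of the other string, and $E_2$, meaning none of the earlier queries in the $p$ blocks preceding it is matched. We condition on all other-string variables $X'$, on $X_h$ for blocks before the window, on the $X_h$ of queried blocks, and on the query history. The key step is the analogue of Eq.~\eqref{eq:overview:dependence}:
\[
\Pr[E_1\wedge E_2\mid \ovbar{V}=\ovbar{v}]\leq \Pr[E_1]/\Pr[E_2].
\]
It holds because, under any fixing of the remaining $X_h$ in which all window queries are unmatched, the non-star answers there are fresh uniform bits. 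So the answers' likelihood is the same constant on $E_2$, and Bayes' rule gives the inequality. This is the analogue of the not-yet-seen Claim~\ref{claim:same_dist}, and I would prove it directly.

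With this in hand, Fact~\ref{fact:bernoulli:generalized} gives, for the $\ell\th$ window query (at least $\ell$ free uniform block values before it), a match probability of $O(q/(b\sqrt{\ell}))$. Summing over $\ell\leq q$ gives $\Pr[\ov{E}_2]=O(q^{3/2}/b)\leq1/2$ once $b=Cq^{3/2}$. Similarly, $\Pr[E_1]=O(q/(b\sqrt p))$, since $\Omega(p)$ free blocks lie in the window. Hence $\E[P'_k]\leq\sum_{p\leq m}O(q/(b\sqrt p))=O(q\sqrt{n/b}/b)$. Summing over $k$ and converting to positions, the expected total progress is
\[
O\!\left(qb+q^2\sqrt{n/b}\right)=O\!\left(q^{5/2}+q^{5/4}\sqrt n\right).
\]
By Markov's inequality, the probability of exceeding $n/6$ is $O(q^{5/2}/n+q^{5/4}/\sqrt n)=O(q^{5/4}/\sqrt n)$. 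This absorbs the first term, since the bound is trivial when $q^{5/4}>\sqrt n$.

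The main obstacle is making the conditioning rigorous for truly interleaved adaptive queries. Two issues arise. First, queries to the other string also occur within rounds, and future positions depend on answers. Second, the window's free variables must remain independent of everything conditioned on; the mirror coupling $X_{m/2+h}=b-X_{m/2+1-h}$ threatens this. To avoid the coupling, I would restrict the analysis to the half $i\leq n/2$ (and symmetrically the other half), where the relevant $X_h$ are independent. I would also allow a union over which other-string query is being matched, at the cost of the factor $q$.
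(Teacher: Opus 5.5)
Your overall architecture matches the paper's: progress counted in blocks, the $\tau=1$ split, the Bayes-type inequality, Fact~\ref{fact:bernoulli:generalized}, and Markov with $b=Cq^{3/2}$. The gap is the reduction to one side ``by symmetry''. Restricting to $i\le n/2$ does not neutralize the mirror coupling you flagged. Since $X_{m+1-h}=b-X_h$, first-string queries in the right half determine left-half block variables. Moreover, a non-star position in block $m+1-d$ has rank $n/2-db+\sum_{h\le d}X_h$ plus its offset in the block. So it is governed by the same prefix sums as left-half positions at depth about $d$, and progress made from one end is progress at the other.

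As a result, your key bound $\Pr[P'_k\geq p]=O(q/(b\sqrt{p}))$ is false for a left-only measure. Since the stars sit at the end of each block, a tester can binary-search the star boundary in the last $p=\Theta(q/\log b)$ blocks of the first string, which reveals $X_1,\ldots,X_p$. It does the same in the first $O(p)$ blocks of the second string. It then queries a first-string position in block $p$ together with the second-string position of the same, now exactly known, rank. No left-half first-string block was queried before, so under your definitions $C_k=D_{k-1}=0$ is good, and $P'_k\geq p$ holds with probability close to $1$.

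The coupling also breaks your Bayes step, even if mirror blocks are counted as queried. Your $E_2$ only asks that the window queries be unmatched. But first-string queries in blocks $m+1-d$, with $d$ beyond the start of the window, have ranks depending on the free window variables. A left-only progress event does not prevent them from being matched, and then the likelihood of $\ovbar{V}=\ovbar{v}$ is not constant on $E_2$.

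The missing idea is to treat both ends at once, as the paper does:
\begin{itemize}
\item Measure progress by the depth $d(i)=\min\{h,m+1-h\}$, so that $D_k$ is the largest depth of a matched first-string query on \emph{either} side.
\item Call a depth unqueried only if neither of its two blocks contains a first-string query, and phrase goodness via unqueried depths.
\item Fix $X'$, the variables at depths $\leq d(i_k)-p$, and every block whose own or mirror block contains a previous query. The fixed set is then mirror-closed, and one independent value remains per free depth.
\item Let $E_2$ say that no previous first-string query of depth $>d(i_k)-p$, on either side, is matched.
\end{itemize}
With the two-sided $D_k$, this $E_2$ is implied by $P'_k\geq p$. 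Every query whose rank involves a free variable is covered by $E_2$, so Claim~\ref{claim:same_dist} and the Bayes inequality hold. A position of depth $d$ on either side has rank depending on the free variables at all depths below $d$, so your anti-concentration counts go through. In the example above, the binary-search queries make depths $1,\ldots,p$ queried, and the jump is charged to the initial parts.

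Finally, bound those initial parts globally rather than by $2N_k$ per query. A depth that is bad at query $k$ stays bad, and fewer than $2q$ depths are bad at query $q$. The per-query bound fails when queries beyond $D_k$ make all of $(D_{k-1},D_k]$ bad.
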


It follows that unless $q=\Omega(n^{2/5})$, the views $A$ and $\tilde{A}$ are indistinguishable, which establishes Theorem~\ref{thm:main}.  
It remains to prove Lemma~\ref{claim:coll_prob_adaptive}.

\subsubsection{Proof of Lemma~\ref{claim:coll_prob_adaptive}}\label{subsec:progress_proof}

We start by introducing a few definitions.
For a position $i$ in block $h\in[m]$, define its \textsf{depth} as $d(i)\eqdef\min\{h,m+1-h\}$, namely, 
the number of blocks from the nearest end of the string up to and including the block containing $i$.
For each query $k\in[q]$, let $D_k$ be the maximum depth of any query position in $Q_{\leq k}(A_{[k-1]})$ that matches some query position in $Q'_{\leq k}(A_{[k-1]})$. 
If no such position exists, then define $D_k \eqdef 0$. Furthermore, define $D_0 \eqdef 0$.
Define the \textup{\textsf{progress}} of the $k\th$ query as $P_k \eqdef D_k - D_{k-1}$.

To prove Lemma~\ref{claim:coll_prob_adaptive}, we will show the following:

\begin{lemma}\label{claim:total_progress_expectation}
    Suppose that $b \geq C \cdot q^{3/2}$, for a sufficiently large constant $C$. Then $\E\big[\sum_{k\in[q]} P_k \big] = O\left( q + q^2 \cdot \frac{\sqrt{n}}{b^{3/2}} \right)$.
\end{lemma}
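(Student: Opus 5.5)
We follow the progress-decomposition strategy from the overview (Section~\ref{subsec:overview:blocks}), now with progress measured in blocks of depth and with $\tau=1$. For each query $k$ we partition the progress interval of depths $(D_{k-1},D_k]$ at a cut point $C_k$ into a first part $(D_{k-1},C_k]$ and a second part $(C_k,D_k]$. We choose $C_k$ greedily: start at $D_{k-1}$ and repeatedly advance it to a queried block at which the following property fails. Order the queried blocks (first-string queries on the relevant side of the string) beyond $C_k$ by depth; the $\ell\th$ of them must have at least $\ell$ unqueried blocks strictly between $C_k$ and it. Each skipped stretch has length less than $2$ times the number of queried blocks it contains, and the stretches are disjoint over all $k$. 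Hence $\sum_k (C_k-D_{k-1}) \le 2q$ deterministically. This bounds the first-part contribution by $O(q)$, and it remains to show $\E[P'_k]=O(q\sqrt{n}/b^{3/2})$ for $P'_k\eqdef D_k-C_k$.

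For the second part we bound $\Pr[P'_k\ge p]$ for each $p\le m/2$. Since the roles of the two strings are symmetric, we handle the case where the $k\th$ query is to the first string and treat the left half (the right half follows via the mirror identity $\sum_h X_h=n/2$, as in the proof of Claim~\ref{claim:coll_prob}). The event $P'_k\ge p$ implies two events. $E_1$: the new query $i_k$ matches some previously queried second-string position. $E_2$: no earlier first-string query in the $p$ blocks preceding $i_k$ is matched. We condition on all $X'$ values, on $X_h$ for blocks at depth at most $d(i_k)-p$, on $X_h$ for queried blocks, and on the prior query positions. Then we prove the analogue of Eq.~\eqref{eq:overview:dependence}: $\Pr[E_1\wedge E_2\mid \ovbar V=\ovbar v]\le \Pr[E_1]/\Pr[E_2]$. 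The key observation (the one the overview attributes to Claim~\ref{claim:same_dist}) is this: under any completion of the free $X_h$ in which the earlier queries in the window are unmatched, their non-star answers are independent uniform bits. The likelihood of the observed bits is therefore constant on $E_2$, so conditioning on them cannot inflate the probability of $E_1\wedge E_2$ relative to $E_2$.

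Next we bound the two probabilities using Fact~\ref{fact:bernoulli:generalized}. For $E_1$, the window contains at least about $p/2$ unqueried free blocks before $i_k$, because each queried block is preceded by at least as many unqueried ones. The rank of $i_k$ then equals a fixed offset plus a sum of $\Omega(p)$ independent uniform values in $\{0,\dots,b\}$. Each of the at most $q$ second-string targets has fixed rank, so a union bound gives $\Pr[E_1]=O(q/(b\sqrt p))$. For $\ov E_2$, the $\ell\th$ earlier query in the window has at least $\ell$ free blocks before it, giving a matching probability of $O(q/(b\sqrt\ell))$. Summing over $\ell\le q$ gives $O(q^{3/2}/b)\le 1/2$ once $b\ge Cq^{3/2}$. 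Hence $\Pr[P'_k\ge p]=O(q/(b\sqrt p))$. Summing over $p\le m=n/b$ gives $\E[P'_k]=O(q\sqrt{n/b}/b)=O(q\sqrt n/b^{3/2})$. Summing over $k$ then yields the lemma.

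The main obstacle is making the conditioning argument rigorous in the adaptive setting. Whether $P'_k\ge p$ holds depends on $C_k$ and on which earlier queries are matched, both random, so $p$ cannot simply be fixed in advance. I would first try to define the window in terms of the observable query positions together with the deterministic greedy rule. I would then take a union bound (or a sum) over the possible starting depth, and verify that the failure of the "$\ell$ unqueried blocks" property makes the event empty, just as the overview remarks for the unblocked case.
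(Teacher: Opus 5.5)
Your outline follows the paper's own route: cutting points with $\tau=1$, the events $E_1,E_2$, the constant-likelihood/Bayes step of Claim~\ref{claim:same_dist}, and the two anti-concentration sums. However, the step you flag as the main obstacle is left open, and a union bound over starting depths does not resolve it. Bounding each candidate start $c\le d(i_k)-p$ separately and summing gives $\sum_{c} O\bigl(q/(b\sqrt{d(i_k)-c})\bigr)=O\bigl(q\sqrt{d(i_k)}/b\bigr)$, which no longer decays in $p$.

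The missing observation has two parts. First, goodness at query $k$ depends only on the query positions, and these are fixed once you condition on $A_{[k-1]}=a_{[k-1]}$. Second, the cutting rule (property~1 of Claim~\ref{claim:progress_partition}) guarantees that $C_k=d(i_k)-P'_k$ is good whenever $P'_k>0$. So take $p'\ge p$ least with $d(i_k)-p'$ good; this is deterministic after conditioning. Then $P'_k\ge p$ forces $P'_k\ge p'$, hence $D_{k-1}\le C_k\le d(i_k)-p'$. Thus $E_1\wedge E_2$ holds for a single fixed window, and your estimates apply to it.

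Relatedly, your $E_2$ is too weak. It must say that no earlier first-string query of depth greater than $d(i_k)-p'$ is matched, including queries deeper than $i_k$; this is exactly $D_{k-1}\le d(i_k)-p'$, i.e., the paper's set $I$. Such deeper queries have ranks that depend on the free window variables. If they are allowed to be matched, the likelihood of $\ovbar{v}$ is not constant on your $E_2$, and the Bayes step breaks.

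Three further points need repair. First, the two strings do not play symmetric roles: $D_k$ is a first-string depth and all of $X'$ is conditioned on. So when $i_k$ is a second-string query, you must union-bound over the first-string partner $i\in Q_{<k}$ and rerun the argument with $i$ (and the window below $d(i)$) in the role of $i_k$. This costs the same factor $q$.

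Second, goodness must be defined via unqueried \emph{depths}, counting first-string queries in both halves. Conditioning on a queried block fixes its mirror through $X_{m/2+h}=b-X_{m/2+1-h}$, so counting only queries on ``the relevant side'' overstates the number of free variables.

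Third, your disjoint-stretch accounting for $\sum_k (C_k-D_{k-1})$ can charge the same queries at several steps. When the greedy overshoots $D_k$, the queries beyond $D_k$ are reused at query $k+1$. The paper avoids this by setting $C_k=\min\{g_k,D_k\}$, with $g_k$ the least good depth $\ge D_{k-1}$. It then notes that a depth bad at query $k$ stays bad afterwards, and shows by a single covering argument that fewer than $2q$ depths are bad at query $q$.
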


From the above lemma we derive Lemma~\ref{claim:coll_prob_adaptive} as follows. 
Note that if event $\mathcal{E}_{A}$ holds (i.e., the tester queries interior matching positions when querying $(S, S')$), then the total progress must satisfy $\sum_{k\in[q]}P_k = D_q > \frac{n}{6\cdot b}$ (recall that the progress is measured in blocks). 
For $b \geq C \cdot q^{3/2}$ the above Lemma~\ref{claim:total_progress_expectation} gives $\E\!\big[\sum_{k\in[q]}P_k\big] = O\left( q + q^2 \cdot \frac{\sqrt{n}}{b^{3/2}} \right)$. 
By Markov's inequality, we get $\Pr[\mathcal{E}_{A}] \leq \Pr\!\big[\sum_{k\in[q]}P_k > \frac{n}{6\cdot b}\big] = O\Big(\frac{q\cdot b}{n} + \frac{q^{2}}{\sqrt{b\cdot n}}\Big)$. 
Setting $b = C \cdot q^{3/2}$, we get $\Pr[\mathcal{E}_{A}] = O\Big(\frac{q^{5/2}}{n}+\frac{q^{5/4}}{\sqrt{n}}\Big) = O\!\left(\frac{q^{5/4}}{\sqrt{n}}\right)$, establishing Lemma~\ref{claim:coll_prob_adaptive}.

Towards proving Lemma~\ref{claim:total_progress_expectation}, we establish a few preliminaries. 
We call a depth $d\in[m/2]$ \textsf{unqueried} at query $k$ if there are no first-string query positions in $Q_{\leq k}(A_{[k-1]})$ with depth $d$.
We call $c \in \{0,\ldots, m/2\}$ \textsf{good at query $k$} if the following holds.
Order the positions in $\{i\in Q_{\leq k}(A_{[k-1]}):d(i)>c\}$ by increasing depth, breaking ties arbitrarily, and denote the $\ell\th$ position by $j_\ell$.
Then, for every $\ell$, there are at least $\ell$ unqueried depths strictly between $c$ and $d(j_{\ell})$.

\begin{claim}\label{claim:progress_partition}

There exist cutting points $C_1,\ldots,C_q$ such that $D_{k-1}\leq C_k\leq D_k$ for every $k\in[q]$, satisfying the following two properties:
\begin{enumerate}
    \item For every $k \in [q]$, if $C_k \neq D_k$, then $C_k$ is good at query $k$.

    \item The total length of the initial parts $(D_{k-1},C_k]$ satisfies $\sum_{k\in[q]}(C_k-D_{k-1}) < 2q$. 
\end{enumerate}

\end{claim}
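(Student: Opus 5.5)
The plan is to construct $C_k$ for each $k$ separately by the greedy procedure from the overview (with $\tau=1$), and then to bound the total initial lengths by a charging argument. Fix $k$ with $D_k>D_{k-1}$; if $D_k=D_{k-1}$, set $C_k=D_k$, which trivially satisfies both properties and contributes $0$. Start with $c=D_{k-1}$. While $c<D_k$ and $c$ is not good at query $k$, there is a smallest index $\ell$ such that fewer than $\ell$ unqueried depths lie strictly between $c$ and $d(j_\ell)$. We move $c$ to $d(j_\ell)$, capped at $D_k$: set $c\leftarrow \min\{d(j_\ell),D_k\}$. When the loop stops, set $C_k=c$. The procedure terminates because $c$ strictly increases: $d(j_\ell)>c$ by definition of the $j$'s. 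By construction, either $C_k=D_k$ or $C_k$ is good at query $k$, which is Property~1.

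For Property~2, look at one jump from $c$ to $c^\ast=d(j_\ell)$. The interval of depths $(c,c^\ast]$ contains $c^\ast-c$ depths. At most $\ell-1$ of the depths strictly between $c$ and $c^\ast$ are unqueried. Every other depth in $(c,c^\ast]$, including $c^\ast$ itself, is queried, and the queries $j_1,\dots,j_\ell$ all have depth in $(c,c^\ast]$. Hence the number of queried depths in the interval is at most the number $N$ of first-string queries with depth in $(c,c^\ast]$, and $N\ge \ell$. So
$c^\ast-c\le (\ell-1)+N\le 2N-1<2N$.
Capping at $D_k$ only shrinks the interval, so it preserves the bound. Since the jumps within one $k$ cover disjoint depth intervals partitioning $(D_{k-1},C_k]$, we get $C_k-D_{k-1}<2N_k$ whenever $C_k>D_{k-1}$, where $N_k$ is the number of queries in $Q_{\le k}$ with depth in $(D_{k-1},C_k]$.

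The main obstacle is summing over $k$ to get $<2q$. The queries counted in $N_k$ are from $Q_{\le k}$, not just those issued at step $k$. The intervals $(D_{k-1},C_k]\subseteq(D_{k-1},D_k]$ are disjoint across $k$ because $D_k$ is nondecreasing, and each query position has a single depth. So each of the at most $q$ first-string queries is counted in at most one $N_k$, giving $\sum_k N_k\le q$. Some care is needed with the strict inequality, since $\sum_k(C_k-D_{k-1})\le \sum_k(2N_k-1)$ over the nonzero terms. If no term is nonzero the sum is $0<2q$ (as $q\ge 1$); otherwise it is at most $2q-1<2q$.

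I would also double-check the edge cases: that depths are measured in $[m/2]$, and that ties in the ordering of the $j$'s do not break the counting. A tie only makes several $j$'s share a depth, and the counting above is by queries, so ties are harmless.
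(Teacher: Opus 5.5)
\textbf{There is a genuine gap in your proof of Property~2, at the capped jump.} Your construction and Property~1 are fine; your greedy in fact computes the paper's $C_k=\min\{g_k,D_k\}$. Your bound $c^{\ast}-c\le 2N-1$ for an \emph{uncapped} jump is also correct.

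When $d(j_\ell)>D_k$, the queries $j_1,\dots,j_\ell$ that pay for the jump may have depths in $(D_k,d(j_\ell)]$, which lies outside $(D_{k-1},C_k]$. Capping keeps the inequality $D_k-c<2N$, but that $N$ is no longer contained in your $N_k$. So $C_k-D_{k-1}<2N_k$ is false in general.

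For example, take $D_{k-1}=0$, depths $1,\dots,U$ unqueried, a single matched query at depth $U+1=D_k$, and $U+2$ unmatched queries at depth $U+2$. The least violated index is $\ell=U+1$, with $d(j_\ell)=U+2$. Hence $C_k=D_k$ and $C_k-D_{k-1}=U+1$, while $N_k=1$.

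Counting the queries in $(D_{k-1},d(j_\ell)]$ instead destroys disjointness across $k$. Suppose that at query $k+1$ a matched query appears at depth $U+3$. Your greedy, started at $D_k=U+1$, jumps over depth $U+2$ and charges the same $U+2$ queries a second time. So the summation step, as written, does not go through.

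\textbf{The missing idea is to count once, at the last query, using monotonicity of badness.} This is the paper's argument, in four steps.
\begin{enumerate}
\item All depths in $[D_{k-1},C_k)$ are bad at query $k$. This is immediate if you define $C_k=\min\{g_k,D_k\}$ directly. For your greedy, minimality of $\ell$ shows that for $c<c'<d(j_\ell)$ the queries with depth in $(c,c']$ number at most the unqueried depths there. Hence $j_\ell$ is still a violating query for $c'$.
\item A depth that is bad at query $k$ stays bad at every later query. Adding queries only increases query counts and decreases the number of unqueried depths.
\item The intervals $[D_{k-1},C_k)$ are disjoint. Therefore $\sum_k(C_k-D_{k-1})$ is at most the number of depths that are bad at query $q$.
\item The bad depths at query $q$ are covered by intervals $[b_j,v_j)$, where $b_j$ is the least bad depth $\ge v_{j-1}$ and $v_j$ is the depth of a violating query for $b_j$. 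Your uncapped bound gives $v_j-b_j<2t_j$, where $t_j$ counts the queries with depth in $(b_j,v_j]$. These ranges are pairwise disjoint, so the total is $<2q$.
\end{enumerate}
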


\begin{proof} 
Consider the cutting points $C_1,\ldots,C_q$ defined as follows.       
For each $k\in [q]$, let $g_k$ be the smallest depth larger than or equal to $D_{k-1}$ that is good at query $k$, and set $C_k=\min\{g_k, D_k\}$.
Clearly, the first property of the claim is satisfied.
Call a depth $c$ \textsf{bad} at query $k$ if it is not good at query $k$. 
To show that the second property is satisfied, we show that there are fewer than $2q$ bad depths at query $q$.  
Since all depths in $[D_{k-1},C_k)$ are bad at query $k$, and a depth that is bad at query $k$ remains bad at every subsequent query, this will establish the second property.

For any depth $c\in\{0,...\,, m/2\}$, we call a query $i\in Q_{\leq k}(A_{[k-1]})$ a \textsf{violating} query for $c$ at query $k$ if $d(i) > c$, and the number of queries in $Q_{\leq k}(A_{[k-1]})$ whose depth is in the range $(c, d(i)]$ is greater than the number of unqueried depths in the range $(c,d(i)]$.
Thus, $c$ is good at query $k$ if and only if it has no violating query at query $k$.

To bound the number of bad depths at query $q$, consider the following iterative process.
Let $b_1$ be the smallest depth in $\{0,...\,, m/2\}$ that is bad at query $q$, and let $v_1$ be the depth of an arbitrary query that is violating for $b_1$ at query $q$.
Now continue the same process from $v_1$: 
Let $b_2$ be the smallest depth from $v_1$ onwards that is bad at query $q$, and choose a violating query of depth $v_2$.
Repeat until there are no bad depths to choose.

For each $j$, let $t_j$ be the number of query positions among $Q_{\leq q}(A_{[q-1]})$ whose depths lie in $(b_j,v_j]$.
By the definition of a violating query, the interval $(b_j,v_j]$ contains fewer than $t_j$ unqueried depths, which implies that $|(b_j,v_j]| < 2t_j$.
Note that since the intervals $(b_j,v_j]$ are pair-wise disjoint, we have $\sum_j 2t_j \leq 2q$.
Since the intervals $[b_j,v_j)$ (across all $j$) cover all bad depths, and we have $\sum_j |[b_j,v_j)| < \sum_j 2t_j \leq 2q$, the second property follows.
\end{proof}

For every $k\in[q]$, denote the second part of the progress by $P'_k \eqdef D_k - C_k$.
The following lemma bounds the probability that $P'_k$ is large.
\begin{lemma}\label{claim:progress_prob}
    Suppose that $b \geq C \cdot q^{3/2}$, for a sufficiently large constant $C$.
    Then, for every query $k\in[q]$, and every $p>0$, it holds that $\Pr\left[ P'_k \geq p \right] = O\left(\frac{q}{b \cdot \sqrt{p}}\right)$. 
\end{lemma}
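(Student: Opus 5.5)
We follow the outline in the overview, transplanted to blocks. First we reduce the event $P'_k \geq p$ to a conjunction of two events about the $k\th$ query. Suppose $P'_k\geq p$. Then $D_k > D_{k-1}$, so the $k\th$ query position (without loss of generality a first-string query $i_k$, the second-string case being symmetric with the roles of $X,X'$ swapped) matches some second-string query in $Q'_{\leq k}$. This query has depth $D_k$, and $C_k$ is good at query $k$. Moreover, no first-string query of depth in $(C_k, D_k)$ matched a second-string query before query $k$, since otherwise $D_{k-1}$ would exceed $C_k$. Summing over the $O(m)$ possible values of $C_k$ would lose too much. Instead, for a fixed history $a_{[k-1]}$, the query $i_k$ and all previous query positions are determined, so the candidate depths are fixed. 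Let $c$ be the largest depth $\le d(i_k)-p$ that is good at query $k$; the event then forces $C_k\le c$. We define $E_1$ as the event that $i_k$ matches some queried second-string position, and $E_2$ as the event that no previous first-string query with depth in $(c, d(i_k))$ matches a second-string query. Since $P'_k\ge p$ implies $E_1\wedge E_2$, it suffices to bound $\Pr[E_1\wedge E_2\mid a_{[k-1]}]$.

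Next we condition on the relevant randomness. Fix all $X'$, all $X_h$ for blocks of depth $\le c$ (and, by the mirror relation, their complements), and all $X_h$ for queried blocks. What remains free are the unqueried block values between depth $c$ and $d(i_k)$. Together with the fixing $\ovbar V=\ovbar v$ of the previously answered bits, we prove the analogue of Eq.~\eqref{eq:overview:dependence}, namely $\Pr[E_1\wedge E_2\mid \ovbar V=\ovbar v]\le \Pr[E_1]/\Pr[E_2]$, where the right-hand probabilities are taken with the answer bits unfixed. The argument is Claim~\ref{claim:same_dist}-style. Under any setting of the free block values for which $E_2$ holds, every previous first-string query in the range is unmatched, so its answer is a fresh uniform bit independent of everything else, in both YES and NO. Hence the likelihood of $\ovbar v$ is the same constant $\kappa$ on all of $E_2$. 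Bayes then gives $\Pr[E_1\wedge E_2\mid \ovbar v]=\kappa\Pr[E_1\wedge E_2]/\Pr[\ovbar v]$ with $\Pr[\ovbar v]\ge\kappa\Pr[E_2]$. We also need that the answers to queries at depth $\le c$ depend only on fixed values, and this holds because ranks there are determined by fixed blocks. Care is needed for depths on the far side of the mirror, since fixing depth-$\le c$ blocks fixes both ends of the string. Ranks measured from the right end use $n/2-\sum$, as in the proof of Claim~\ref{claim:coll_prob}.

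Then we apply anti-concentration via Fact~\ref{fact:bernoulli:generalized}. By goodness of $c$, before the $\ell\th$ query (by depth) beyond $c$ there are at least $\ell$ free unqueried blocks. The rank of that query is therefore a fixed value plus a sum of at least $\ell$ independent uniform $\{0,\dots,b\}$ variables. Its probability of equaling any of the $\le q$ fixed second-string ranks is thus $O(q/(b\sqrt{\ell}))$. Summing over $\ell\le q$ gives $\Pr[\ovbar E_2]=O(q^{3/2}/b)\le 1/2$ once $b\ge Cq^{3/2}$. For $E_1$ we need $\Omega(p)$ free blocks below $d(i_k)$. Goodness of $c$ guarantees at least as many unqueried depths in $(c,d(i_k))$ as queries there, so at least half of the $\ge p-1$ depths are free. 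This gives $\Pr[E_1]=O(q/(b\sqrt p))$ and hence the lemma. The case $p\le 1$ is trivial, since the bound exceeds a constant only when it is vacuous, or we can absorb it.

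The main obstacle is the conditioning step. We must verify rigorously that, on $E_2$, the previous answers' distribution is independent of the free block values, and we must handle the mirror coupling $X_{m/2+h}=b-X_{m/2+1-h}$. The coupling means a single free variable affects ranks at both ends. We handle this by working with depths, so that each depth corresponds to a single independent variable, and by checking that a query at mirrored depth still has its rank expressed as a fixed offset plus or minus a sum of free variables.
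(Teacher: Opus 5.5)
Your architecture is the paper's. Your $c$ (the largest good depth $\le d(i_k)-p$) is exactly the paper's $d(i_k)-p'$. Your fixing of $X'$, of the blocks of depth $\le c$, and of the previously queried blocks is the paper's $X_F=x_F,\,X'=x'$. The Bayes step and both anti-concentration sums are also the same.

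The gap is in your definition of $E_2$, which only requires that previous first-string queries with depth in the open interval $(c,d(i_k))$ be unmatched. Consider a previous first-string query at depth $\ge d(i_k)$: an earlier query in the block of $i_k$, in its mirror block, or anywhere deeper. Its rank is a fixed quantity plus or minus a sum that contains every free block variable in $(c,d(i_k))$. So whether it matches some position of $Q'_{<k}$ varies with the free values, even on your $E_2$. Hence the likelihood of $\ovbar{v}$ is not constant on $E_2$, and $\Pr[E_1\wedge E_2\mid \ovbar{v}]\le \Pr[E_1]/\Pr[E_2]$ can genuinely fail. For instance, suppose the tester earlier queried $i_k-1$ and the second-string positions $j',j'+1$ (all non-star, each pair inside one block). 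Then $i_k$ matching $j'+1$ is equivalent to $i_k-1$ matching $j'$, so seeing equal bits at $i_k-1$ and $j'$ doubles the posterior odds of $E_1$. With $t$ such aligned pairs the odds grow by $2^t$.

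The remedy is the paper's: let $E_2$ say that no previous first-string query of \emph{any} depth $>c$ matches a position in $Q'_{<k}$. This is still implied by $P'_k\ge p$. Your observation holds for all depths $>C_k$, not only those below $D_k$, because $D_{k-1}\le C_k\le c$. The bound $\Pr[\ov{E}_2]=O(q^{3/2}/b)$ still follows from goodness of $c$, which orders all queries of depth $>c$. This requires leaving free every unqueried depth $>c$, including those beyond $d(i_k)$, not only the depths between $c$ and $d(i_k)$ as your description suggests.

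The second-string case is also not obtained by swapping $X$ and $X'$. Depth, $D_k$, $C_k$ and goodness are all defined through first-string queries. The progress is $d(i)$ for the earlier first-string query $i\in Q_{<k}$ that $i_k$ matches. The anti-concentration must therefore still come from the free first-string blocks between $c$ and $d(i)$, with $c$ now chosen relative to $d(i)$. As in the paper:
\begin{itemize}
\item take a union bound over $i\in Q_{<k}$ in place of the union bound over $i'\in Q'_{<k}$;
\item set $E_1$ to be ``$i$ matches $i_k$'';
\item rerun the argument with $i$ in the role of $i_k$.
\end{itemize}
Finally, $p\le 1$ is not vacuous, since the target $O(q/b)$ is small. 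However, applying goodness of $c$ to $i_k$ itself gives at least $\max\{1,(d(i_k)-c)/2\}$ free depths below $d(i_k)$, which covers every $p>0$.
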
 

Before proving Lemma~\ref{claim:progress_prob}, we use it to establish Lemma~\ref{claim:total_progress_expectation}.
We will also make use of the following fact: 

\begin{fact}\label{fact:sqrtsqrt}
    For every $t\in \mathbb{N}$ it holds that $\sum_{k\in[t]}{\frac{1}{\sqrt{k}}} = O(\sqrt{t})$.\footnote{This fact can be proved, e.g., by showing via a simple induction on $t$ that $\sum_{k\in[t]}{\frac{1}{\sqrt{k}}} \leq 2\cdot \sqrt{t} - 1$ for every $t\in \mathbb{N}$.
}
\end{fact}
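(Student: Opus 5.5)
The statement to prove is Fact~\ref{fact:sqrtsqrt}: $\sum_{k\in[t]} 1/\sqrt{k} = O(\sqrt{t})$. I will establish the explicit bound $\sum_{k\in[t]} \frac{1}{\sqrt{k}} \leq 2\sqrt{t}-1$ by induction on $t$, as suggested in the footnote. For the base case $t=1$, both sides equal $1$.

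For the inductive step, assume the bound for $t-1$, where $t \geq 2$. It then suffices to show
\[
2\sqrt{t-1} - 1 + \frac{1}{\sqrt{t}} \leq 2\sqrt{t} - 1,
\qquad\text{that is,}\qquad
\frac{1}{\sqrt{t}} \leq 2\bigl(\sqrt{t}-\sqrt{t-1}\bigr).
\]
Rationalizing the right-hand side gives
\[
2\bigl(\sqrt{t}-\sqrt{t-1}\bigr) = \frac{2}{\sqrt{t}+\sqrt{t-1}}.
\]
Since $\sqrt{t}+\sqrt{t-1} \leq 2\sqrt{t}$, this quantity is at least $\frac{2}{2\sqrt{t}} = \frac{1}{\sqrt{t}}$, which is exactly what is needed. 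This completes the induction.

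The bound can also be seen by comparison with an integral: for $k \geq 2$ we have $\frac{1}{\sqrt{k}} \leq \int_{k-1}^{k} x^{-1/2}\,dx$, so the sum is at most $1 + \int_1^t x^{-1/2}\,dx = 2\sqrt{t} - 1$.

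Neither step is a real obstacle. The only point needing care is the rationalization inequality in the inductive step, and it is elementary.
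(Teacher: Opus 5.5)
Your proof is correct and follows the paper's route exactly: the footnote proves the fact by induction on $t$ via the bound $\sum_{k\in[t]} 1/\sqrt{k} \leq 2\sqrt{t}-1$. Your inductive step, $1/\sqrt{t} \leq 2/(\sqrt{t}+\sqrt{t-1})$, is the natural way to close it, and your integral comparison is a valid alternative check of the same bound.
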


\begin{proof}[{\textnormal{\textbf{Proof of Lemma~\ref{claim:total_progress_expectation}:}}}]
    First, for every $k\in [q]$ we have:
    \begin{align*}
        \E\left[P'_k\right] 
        = \sum_{p\in [n/(2b)]}{\Pr\left[P'_k \geq p\right]} 
        = \sum_{p \in [n/(2b)]}{O\left(\frac{q}{b \cdot \sqrt{p}}\right)} 
        = O\left( \frac{q}{b} \cdot \sqrt{n/b} \right)
    \end{align*}
    where the second equality is by Lemma~\ref{claim:progress_prob}, and the last equality uses Fact~\ref{fact:sqrtsqrt}.
    Now, recall that $P_k = C_k - D_{k-1} + P'_k$. 
    By Claim~\ref{claim:progress_partition}, it holds that
    $
        \sum_{k\in [q]} (C_k - D_{k-1}) < 2q
    $.
    Therefore,
    \begin{align*}
        \E\Bigg[\sum_{k\in[q]} P_k \Bigg] 
        < 2q + \sum_{k\in[q]} \E\left[ P'_k \right] 
        = 2q + \sum_{k\in[q]} O\left(q \cdot \frac{\sqrt{n}}{b^{3/2}}\right)
        = O\left(q + q^2 \cdot \frac{\sqrt{n}}{b^{3/2}}\right) 
    \end{align*}
    concluding the proof.
\end{proof}

It remains to prove Lemma~\ref{claim:progress_prob}. 

\begin{proof}[{\textnormal{\textbf{Proof of Lemma~\ref{claim:progress_prob}}}}]
Recall that our aim is to show that $\Pr\left[ P'_k \geq p \right] = O\left(\frac{q}{b \cdot \sqrt{p}}\right)$ for every $p>0$. 
We will prove the bound by showing it holds even when conditioning on any fixed value for the answers $A_{[k-1]}$.
Let $a_{[k-1]}$ be an arbitrary sequence in the support of $A_{[k-1]}$. 
Note that fixing the answers $A_{[k-1]}$ to $a_{[k-1]}$ fixes the previous query sets, as well as the current query position and the string to which it is made. 
Let us denote $i_k\eqdef Q_k(a_{[k-1]})$, $Q_{<k} \eqdef Q_{<k}(a_{[k-2]})$ and $Q'_{<k} \eqdef Q'_{<k}(a_{[k-2]})$.

Consider the event $P'_k \geq p$ under the condition $A_{[k-1]} = a_{[k-1]}$. 
Let us assume first that given the previous answers $a_{[k-1]}$, the $k\th$ query $i_k$ is made to the \emph{first string}.
The case where $i_k$ is made to the second string is analogous, and will be revisited at the end of the proof.
Notice that since the query $i_k$ is made to the first string, the event $P'_k \geq p$ implies that $D_k = d(i_k)$, and $D_{k-1} \leq d(i_k) - p$. 
That is, the following two events must hold:
\begin{enumerate}
    \item 
    The current query $i_k$ matches with some previous second-string query in $Q'_{<k}$.  
    Let us denote this event by $E_1$.

    \item 
    Denoting $I \eqdef \{ i\in[n]: d(i) > d(i_k) - p \}$, there are \emph{no} previous first-string queries in $Q_{<k} \cap I$ that match with some previous second-string query in $Q'_{<k}$.
    Let us denote this event by $E_2$.
\end{enumerate}
Thus,
\begin{align*}
    \Pr\Big[\,
        P'_k \geq p \;\,\Big|\; A_{[k-1]} = a_{[k-1]} 
    \,\Big] 
    \;\leq\; 
    \Pr\Big[\,
        E_1 \wedge E_2 \;\,\Big|\; A_{[k-1]} = a_{[k-1]} 
    \,\Big] 
\end{align*}

To bound the above probability, we will further condition on arbitrary values for the block variables $X_h$ of all blocks $h$ outside $I$, as well as all blocks $h$ inside $I$ that contain a previously queried position in $Q_{<k}$. 
Furthermore, we will condition on arbitrary values for \emph{all} second-string block variables in $X'$.

To define the foregoing condition more precisely, note that fixing any block variable $X_h$ fixes also the value of its mirror block $X_{m+1-h}$. 
Let us denote the set of all blocks whose variables will be fixed by $F$; that is, $F$ consists of all blocks $h$ that lie outside $I$ as well as all blocks $h$ where either block $h$ or $m + 1 - h$ contains a position in $Q_{<k}$.
The variables of the complement set of blocks $\ov{F}\eqdef[m]\setminus F$ will remain random.
Let $x_F = (x_h)_{h\in F}$ be an arbitrary sequence in the support of $X_F = (X_h)_{h\in F}$, and let $x'$ be an arbitrary sequence in the support of $X'$.
We will bound the following probability:
\begin{align*}
    \Pr\left[
        \;E_1 \wedge E_2 \;\middle| {\small \begin{array}{c} 
        A_{[k-1]} = a_{[k-1]}, \\[1ex] 
        X_F = x_F,\; X' = x' 
    \end{array}}
    \right] 
\end{align*}
See Figure~\ref{figure} for an illustration of the objects introduced so far.

\begin{figure}[h]
    \centering
    \includegraphics[width=0.96\textwidth]{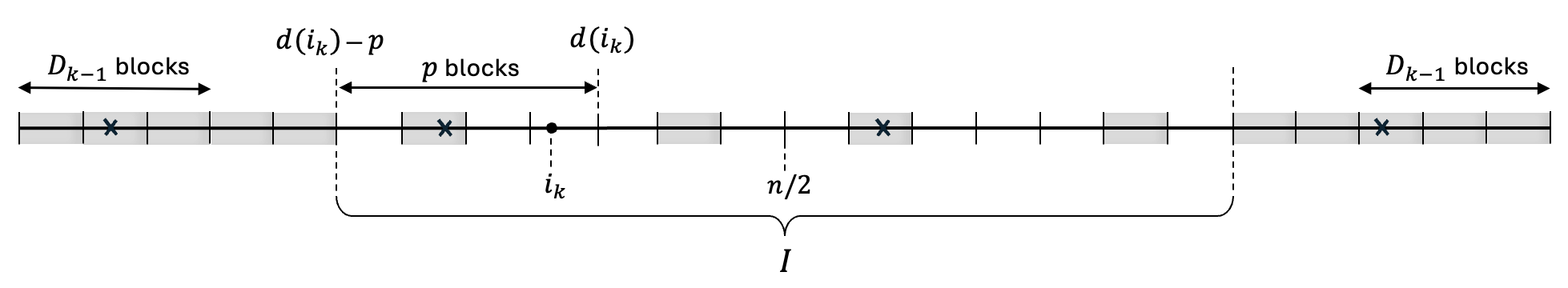}
    \vspace{-0.25em}
    \captionsetup{width=.92\linewidth,  font=small}
    \caption{
    An illustration of the first string in the input pair.
    The point labeled $i_k$ is the new query point, and the points marked by a cross (i.e.,\ ``{\small $\pmb{\times}$}") represent the previous first-string query points (i.e., the points in $Q_{<k}$).
    The deepest previous first-string query point that matches with some previous second-string query is at depth $D_{k-1} \leq d(i_k)-p$.
    The range $I$ contains all points whose depth is greater than $d(i_k)-p$.
    In particular, any previous query point that lies inside $I$ necessarily does not match any of the previous query points in the second string. 
    The shaded blocks are the blocks in $F$; that is, the blocks whose variables we fix. 
    These include all blocks outside the range $I$, as well as all blocks containing previous query points or their mirror positions. 
    The remaining (i.e., unshaded) blocks are the blocks in $\ov{F}$. 
    The variables of these blocks remain random.
    }
    \label{figure}
\end{figure}

Consider the condition $A_{[k-1]} = a_{[k-1]}$.
Notice that it imposes two types of constraints.
First, it determines for every queried position whether it is a star-position or not. 
However, this is already determined by the block variables that we fixed (i.e., $X_F=x_F$ and $X'=x'$), since these blocks contain all previously queried positions.\footnote{
    We assume that the fixing of the block variables is compatible with the fixing of the previous answers; otherwise, the corresponding probability is trivially $0$.
}
Second, and more importantly, the condition $A_{[k-1]} = a_{[k-1]}$ determines the bit values at every non-star position that was queried. 
In more detail, recall that for every $r$, the two positions with rank $r$ contain a shared uniform bit $V_r$.
Let us denote by $J$ (resp., $J'$) the set of positions in $Q_{<k}$ (resp., $Q'_{<k}$) whose corresponding answer (according to $a_{[k-1]}$) is not a star symbol, and let us denote the values seen in $J$ and $J'$ by $V_{R(J)} \eqdef (V_{R(i)} )_{i\in J}$ and $V_{R'(J')} \eqdef ( V_{R'(i')} )_{i'\in J'}$, respectively. 
Then the condition $A_{[k-1]} = a_{[k-1]}$ fixes the values $V_{R(J)}$ and $V_{R'(J')}$.
Thus, we aim to bound the following probability:
\begin{align*}
    \Pr\left[
        \;E_1 \wedge E_2 \;\middle| {\small \begin{array}{c} 
        (V_{R(J)}, V_{R'(J')}) = \ovbar{v}, \\[1ex] 
        X_F = x_F,\; X' = x' 
    \end{array}}
    \right] 
\end{align*}
where $\ovbar{v}$ is a sequence of values determined from $a_{[k-1]}$ alone.

Now, consider the event $E_2$, which asserts that there are no previous first-string query positions in $I$ that match with a previously queried position in the second string.
We next show that under the conditions $X_{F} = x_F, X'=x'$, \emph{no matter how we fix the remaining variables $X_{\ov{F}} = (X_h)_{h\in \ov{F}}$ such that the event $E_2$ holds}, the joint distribution of $(V_{R(J)}, V_{R'(J')})$ remains the same. 

\begin{subclaim}\label{claim:same_dist}
    For every $x_{\ov{F}}\in \mathrm{supp}(X_{\ov{F}})$ such that the event $E_2$ holds when fixing $X_{\ov{F}} = x_{\ov{F}}$ and $X_{F} = x_F, X'=x'$, 
    it holds that the joint distribution of $(V_{R(J)}, V_{R'(J')})$ under the fixing $X_{\ov{F}} = x_{\ov{F}}, X_{F} = x_F, X'=x'$ \emph{is the same}.
\end{subclaim}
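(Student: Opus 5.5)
The plan is to observe that, once all block variables $X$ and $X'$ are fixed, the joint distribution of $(V_{R(J)},V_{R'(J')})$ depends only on the \emph{matching pattern} between $J$ and $J'$. I will then show that this pattern is identical for every $x_{\ov{F}}$ that is compatible with $E_2$.

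\textbf{Step 1 (reduction to the matching pattern).} Fix $X_{\ov{F}}=x_{\ov{F}}$, $X_F=x_F$ and $X'=x'$. Then all ranks $R(i)$ for $i\in J$ and $R'(i')$ for $i'\in J'$ are determined. The bits $(V_r)_r$ are independent uniform bits drawn independently of $X,X'$. Distinct positions in the same string have distinct ranks. So the vector $(V_{R(i)})_{i\in J},(V_{R'(i')})_{i'\in J'}$ is distributed as follows: it is uniform over all assignments in which the coordinates $i\in J$ and $i'\in J'$ are forced equal exactly when $R(i)=R'(i')$, and all other coordinates are independent uniform bits. Hence this distribution is a function only of the set $\Pi\eqdef\{(i,i')\in J\times J' : R(i)=R'(i')\}$. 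It suffices to show that $\Pi$ does not depend on the choice of $x_{\ov{F}}$, among choices for which $E_2$ holds.

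\textbf{Step 2 (positions outside $I$ have fixed ranks).} Let $i\in J\setminus I$, so $d(i)\le d(i_k)-p$.
\begin{itemize}
\item If $i$ lies in the first half, then $R(i)=\sum_{h\le\lfloor i/b\rfloor}X_h+(i\bmod b)$. This only involves blocks of depth at most $d(i)$, and these lie outside $I$, hence in $F$.
\item If $i$ lies in the second half, write $R(i)=n/2-\sum_{h>\lfloor i/b\rfloor}X_h+(i\bmod b)$. The blocks involved again have depth less than $d(i)$, so they lie in $F$.
\end{itemize}
So $R(i)$ is determined by $x_F$ alone. Every $R'(i')$ is determined by $x'$. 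Therefore the pairs of $\Pi$ with first coordinate in $J\setminus I$ are the same for all $x_{\ov{F}}$.

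\textbf{Step 3 (positions inside $I$ are never matched).} For $i\in J\cap I\subseteq Q_{<k}\cap I$, the event $E_2$ says that $i$ matches no position of $Q'_{<k}$, and in particular none of $J'$. Thus no pair of $\Pi$ has first coordinate in $J\cap I$, for every admissible $x_{\ov{F}}$.

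Combining Steps 2 and 3, $\Pi$ is constant, and Step 1 gives the claim. The same argument applies verbatim to the {\small NO} pair: there the rule is that matched pairs share a bit only if their rank is outside $M$, and by Step 2 the ranks of all matched pairs are fixed, so membership in $M$ is fixed as well.

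The only real subtlety is Step 2, namely checking that ranks of shallow positions depend only on blocks in $F$. This requires handling the mirror blocks correctly. The set $F$ is closed under $h\mapsto m+1-h$, and the second-half rank formula uses the identity $\sum_h X_h=n/2$, so only the shallower tail blocks enter. This is where I would be careful.
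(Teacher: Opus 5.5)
Your proposal is correct and is essentially the paper's argument: the ranks of $J\setminus I$ and of $J'$ are determined by $x_F$ and $x'$, while $E_2$ forces every position of $J\cap I$ to be unmatched, so the equality pattern among the revealed bits, and hence their joint distribution, does not depend on $x_{\ov{F}}$. Your explicit check of the second-half rank formula is more careful than the paper's one-line justification; one small correction is that $\sum_{h>\lfloor i/b\rfloor}X_h$ generally includes the block of $i$ itself, whose depth is exactly $d(i)$ rather than less, but this is harmless since that block also lies outside $I$ and hence in $F$.
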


\begin{subproof}
    At a high level, the issue is that changing the value of $X_{\ov{F}}$ can change the ranks $R(J)$, which can affect whether there exist $i \in J$ and $i' \in J'$ with matching ranks.
    This can change the joint distribution of $(V_{R(J)},V_{R'(J')})$, because at equal ranks there are equal values, whereas values at distinct ranks are independent.
    However, the value of $X_{\ov{F}}$ can affect only the ranks of positions in $I$, whereas the event $E_2$ guarantees that there is no queried position in $I$ that matches with a queried position in the second string, meaning that all values seen in $I$ are uniform bits, independent of all other answers.

    To make the foregoing argument more precise, let us denote $V_{R(J \cap I)} = (V_{R(i)} )_{i\in J \cap I}$, and $V_{R(J \setminus I)} = (V_{R(i)} )_{i\in J \setminus I}$.
    Fix $X_F = x_F, X'=x'$. 
    Now, fixing $X_{\ov{F}}$ does not affect $V_{R(J \setminus I)}$ or $V_{R'(J')}$, because the corresponding ranks are already determined by $x_F$ and $x'$, respectively.
    Furthermore, no matter how we fix $X_{\ov{F}}$ such that the event $E_2$ holds, we have that $V_{R(J \cap I)}$ is a sequence of uniform bits, independent of $V_{R'(J')}$ (and certainly of $V_{R(J \setminus I)}$). 
    The claim follows.
\end{subproof}

We use the above claim to establish:

\begin{subclaim}

It holds that
\begin{align}\label{eq:removing_answers_condition}
    \Pr\left[
        \;E_1 \wedge E_2 
        \;\middle| 
        {\small \begin{array}{c} 
            (V_{R(J)}, V_{R'(J')}) = \ovbar{v}, \\[1ex] 
            X_F = x_F,\; X' = x' 
        \end{array}}
    \right] 
    \;\leq\; 
    \frac{
        \Pr\left[
            \; E_1 
            \;\middle|\! 
            {\small \begin{array}{c} X_F = x_F,\, X' = x' \end{array}} \!
        \right]}{
        \Pr\left[ 
            \;E_2 
            \;\middle|\! 
            {\small \begin{array}{c} X_F = x_F,\, X' = x'  \end{array}} \!
        \right]}. 
\end{align}

\end{subclaim}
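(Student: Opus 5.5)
Throughout, abbreviate the conditioning $X_F = x_F, X' = x'$ by $\mathcal{F}$, and write $W \eqdef (V_{R(J)}, V_{R'(J')})$. The plan is to apply Bayes' rule and then use Claim~\ref{claim:same_dist} to show that, once we restrict to $E_2$, the event $W = \ovbar{v}$ is independent of the remaining randomness $X_{\ov{F}}$.

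First, write the left-hand side as
\[
\frac{\Pr[E_1 \wedge E_2 \wedge W=\ovbar{v} \mid \mathcal{F}]}{\Pr[W=\ovbar{v}\mid \mathcal{F}]}.
\]
Then decompose both numerator and denominator over the values $x_{\ov{F}}$ of $X_{\ov{F}}$. The events $E_1$ and $E_2$ depend only on the star positions, that is, on $X$ and $X'$. So once $x_{\ov{F}}$ is fixed together with $\mathcal{F}$, each of $E_1$ and $E_2$ is determined. Moreover, the bit variables $V_r, V'_r$ are independent of the star positions. Hence
\[
\Pr[E_1\wedge E_2\wedge W=\ovbar{v}\mid\mathcal{F}] = \sum_{x_{\ov{F}}:\,E_1\wedge E_2} \Pr[X_{\ov{F}}=x_{\ov{F}}\mid\mathcal{F}]\cdot \Pr[W=\ovbar{v}\mid X_{\ov{F}}=x_{\ov{F}},\mathcal{F}].
\]

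By Claim~\ref{claim:same_dist}, the factor $\Pr[W=\ovbar{v}\mid X_{\ov{F}}=x_{\ov{F}},\mathcal{F}]$ takes the same value, call it $\pi$, for every $x_{\ov{F}}$ for which $E_2$ holds. The numerator is therefore exactly $\pi\cdot\Pr[E_1\wedge E_2\mid\mathcal{F}]$, which is at most $\pi\cdot\Pr[E_1\mid\mathcal{F}]$.

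For the denominator, drop the terms with $x_{\ov{F}}$ violating $E_2$; all terms are nonnegative, so this gives a lower bound of $\pi\cdot\Pr[E_2\mid\mathcal{F}]$. Dividing numerator by denominator, the common factor $\pi$ cancels and Eq.~\eqref{eq:removing_answers_condition} follows.

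One degenerate case needs care. If $\pi=0$, or $\Pr[E_2\mid\mathcal{F}]=0$, then the numerator is $0$ and the bound holds trivially, given the convention that conditional probability is only considered when the conditioning event has positive probability. Nothing else needs checking beyond one point: the independence of the bit values $V$ from $(X, X')$ must be invoked explicitly when splitting the joint probability, since $W$ is indexed by ranks that depend on $X$.
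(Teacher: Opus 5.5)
Your proposal is correct and follows essentially the paper's argument: decompose over the values of $X_{\ov{F}}$, use Claim~\ref{claim:same_dist} to factor out the common value (your $\pi$, the paper's $\alpha$) from the terms where $E_2$ holds, and lower-bound $\Pr[W=\ovbar{v}]$ by $\pi\cdot\Pr[E_2]$ by discarding the terms that violate $E_2$. The paper writes the numerator via Bayes' rule on the posterior of $X_{\ov{F}}$ rather than as a joint probability, and leaves the degenerate cases implicit; the independence of the bits from $(X,X')$ is used inside Claim~\ref{claim:same_dist}, not in your decomposition, which is just the law of total probability.
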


\begin{subproof}

For convenience, throughout this proof we omit the explicit condition $X_F = x_F,\, X' = x'$ from all probability expressions, with the understanding that this condition implicitly applies in all probability terms.
Let us denote $\ovbar{V} \eqdef (V_{R(J)}, V_{R'(J')})$. Under the foregoing convention, we will show that 
\begin{align*}
    \Pr\left[
        \,E_1 \wedge E_2 
        \;\middle|\; 
        \ovbar{V} = \ovbar{v} \,
    \right] 
    \;\leq\; \frac{\Pr\left[ E_1 \right]}
                  {\Pr\left[ E_2 \right]}. 
\end{align*}

For each $i\in \{1,2\}$, let $f_{E_i} \!:\! \mathrm{supp}(X_{\ov{F}}) \to \{0,1\}$ be the predicate satisfying $f_{E_i}(x_{\ov{F}}) = 1$ if and only if the event $E_i$ holds when fixing $X_{\ov{F}} = x_{\ov{F}}$ and $X_F = x_F, X' = x'$.
By Claim~\ref{claim:same_dist}, there exists a value $\alpha$, such that for every $x_{\ov{F}}$ that satisfies $f_{E_2}(x_{\ov{F}}) = 1$, it holds that $\Pr\left[\ovbar{V} = \ovbar{v} \,|\, X_{\ov{F}} = x_{\ov{F}}\right] = \alpha$.
By Bayes' rule, for every $x_{\ov{F}}$ such that $f_{E_2}(x_{\ov{F}}) = 1$, it holds that $\Pr\left[X_{\ov{F}} = x_{\ov{F}} \,|\, \ovbar{V} = \ovbar{v}\right] = \frac{\alpha}{\Pr[\ovbar{V} = \ovbar{v}]}\cdot \Pr[X_{\ov{F}} = x_{\ov{F}}]$.
Thus,
\begin{align*}
    \Pr\left[
        E_1 \wedge E_2 \;\middle| \ovbar{V} = \ovbar{v}
    \right] 
    &= \sum_{
            x_{\ov{F}} \textit{ s.t.\ }
            f_{E_1}(x_{\ov{F}}) = 1 \,\wedge\,
            f_{E_2}(x_{\ov{F}}) = 1
        }
        {\Pr\left[
            X_{\ov{F}} = x_{\ov{F}}
            \;\middle| \ovbar{V} = \ovbar{v}
            \right]} \\
    &= \sum_{
            x_{\ov{F}} \textit{ s.t.\ }
            f_{E_1}(x_{\ov{F}}) = 1
            \,\wedge\, f_{E_2}(x_{\ov{F}}) = 1
        }
        {\frac{\alpha}{\Pr\left[\ovbar{V} = \ovbar{v}\right]}
        \cdot \Pr\left[
                X_{\ov{F}} = x_{\ov{F}}
            \right]} \\
    &= \frac{\alpha}{\Pr\left[\ovbar{V} = \ovbar{v}\right]}\cdot
        \Pr\left[
            E_1 \wedge E_2
        \right]
\end{align*}
On the other hand, we have:
\begin{align*}
    \Pr\left[\ovbar{V} = \ovbar{v}\right]
    &= \sum_{
            x_{\ov{F}} \in \,\mathrm{supp}(X_{\ov{F}})
            }
            \Pr\left[
                \ovbar{V} = \ovbar{v} \;\middle| X_{\ov{F}} = x_{\ov{F}}
            \right]
            \cdot \Pr\left[
                X_{\ov{F}} = x_{\ov{F}}
            \right] \\
    &\geq \sum_{
            x_{\ov{F}} \textit{ s.t.\ } f_{E_2}(x_{\ov{F}}) = 1
            }
            \alpha \cdot \Pr\left[X_{\ov{F}} = x_{\ov{F}}\right] \\
    &= \alpha \cdot \Pr\left[ E_2 \right]
\end{align*}
Combining the above two equations, we get
\begin{align*}
    \Pr\left[
        \,E_1 \wedge E_2 \;\middle|\; \ovbar{V} = \ovbar{v} \,
    \right]  
    \leq \frac{
            \Pr\left[ E_1 \wedge E_2 \right]}{ 
            \Pr\left[ E_2 \right]}
    \leq \frac{
            \Pr\left[ E_1 \right]}{
            \Pr\left[ E_2 \right]}
\end{align*}
concluding the proof.
\end{subproof}

We are thus left to bound Eq.~\eqref{eq:removing_answers_condition}.
At this point, we will use the first property guaranteed by Claim~\ref{claim:progress_partition}; that is, that $C_k = d(i_k) - P'_k$ is good at query $k$ 
(we use here the fact that $P'_k$ is \emph{positive}, because $P'_k \geq p>0$).
Let us refer to a depth that is good at query $k$ as just ``good'', for short.
Note that we may assume without loss of generality that $d(i_k)-p$ is good.
Otherwise, let $p'\geq p$ be the smallest integer for which $d(i_k)-p'$ is good (if no such $p'$ exists, the event $P'_k\geq p$ is impossible).
Since $d(i_k) - P'_k$ is good, by the minimality of $p'$, whenever $P'_k\geq p$, we also have $P'_k\geq p'$.
Thus, we may replace $p$ by $p'$ throughout our analysis. 
Note that $p'$ depends only on $p$ and the fixed previous answers, and that increasing $p$ makes the bound $O\left(\frac{q}{b\cdot \sqrt{p}} \right)$ only stronger.

Using the fact that $d(i_k)-p$ is good, 
we proceed to bound the probabilities in Eq.~\eqref{eq:removing_answers_condition}. 
For the subsequent discussion we work under the fixing $X_F=x_F,X'=x'$, and we let $R_{x_F}(i)$ and $R'_{x'}(i)$ denote the ranks $R(i)$ and $R'(i)$, respectively, after this fixing.
We first consider the event $E_1$ (i.e., the event in the numerator in Eq.~\eqref{eq:removing_answers_condition}), which states that $i_k$ matches some second-string query in $Q'_{<k}$.
Note that since $d(i_k)-p$ is good, there are at least as many unqueried depths as queried depths in the range $(d(i_k)-p,d(i_k)]$, implying there are $\Omega(p)$ unqueried depths strictly between $d(i_k)-p$ and $d(i_k)$. 
It follows that there are at least $\Omega(p)$ unfixed block variables between the block containing $i_k$ and the nearer end of the string.
Therefore, using Fact~\ref{fact:bernoulli:generalized} similarly to the proof of Claim~\ref{claim:coll_prob}, we have that for every second-string query $i'\in Q'_{<k}$ the probability that $R_{x_F}(i_k) = R'_{x'}(i') \neq \bot$ is at most $O\Big(\frac{1}{b\cdot \sqrt{p}}\Big)$.  
By a union bound over the possible second-string queries $i'\in Q'_{<k}$ we obtain:
\begin{align}\label{eq:E1}
    \Pr\left[ 
        \;E_1
        \;\middle|\!\! 
        {\small \begin{array}{c} X_F = x_F,\, X' = x'  \end{array}} \!
    \right]
    = O\left( \frac{q}{b \cdot \sqrt{p}} \right)
\end{align}

Next, consider the event $E_2$ in the denominator in Eq.~\eqref{eq:removing_answers_condition}. 
We will upper bound the probability of the complement event 
$\ov{E}_2$; 
namely, that there \emph{exists} a previous first-string query in $Q_{<k} \cap I$ that matches with a previous second-string query in $Q'_{<k}$.
Let us order the previous first-string queries in $Q_{<k} \cap I$ by their depth in increasing order, breaking ties arbitrarily, and denote the ordered positions by $j_1,\ldots,j_{|Q_{<k} \cap I|}$.
Since $d(i_k)-p$ is good, for every $\ell \in [|Q_{<k} \cap I|]$ there are at least $\ell$ unqueried depths strictly between $d(i_k)-p$ and $d(j_\ell)$.
Therefore, the probability that there exists a second-string query $i'\in Q'_{<k}$ such that $R_{x_F}(j_\ell)=R'_{x'}(i')\neq\bot$ is at most $O\big(\frac{q}{b\cdot\sqrt{\ell}}\big)$.
Taking a union bound over the queries in $Q_{<k} \cap I$, we get:
\begin{align}\label{eq:E2}
    \Pr\left[ 
            \;\ov{E}_2
            \;\middle|\!\! 
            {\small \begin{array}{c} X_F = x_F,\, X' = x'  \end{array}} \!
        \right]
    \,\leq\, \sum_{\ell \leq |Q_{<k} \cap I|}{O\left(\frac{q}{b\cdot \sqrt{\ell}}\right)} 
    \,=\, O\left(\frac{q}{b}\cdot \sqrt{q}\right)
    \;\leq\; \frac{1}{2}
\end{align}
where the equality uses Fact~\ref{fact:sqrtsqrt} together with $|Q_{<k} \cap I| \leq q$, and the last inequality uses the lemma's hypothesis that $b\geq C\cdot q^{3/2}$ for a sufficiently large constant $C$.
Combining Eq.~\eqref{eq:E2},~\eqref{eq:E1}, and~\eqref{eq:removing_answers_condition},
we obtain the desired $O\left( \frac{q}{b \cdot \sqrt{p}} \right)$ bound.

Recall that until now we have focused on the case where, given the previous answers $a_{[k-1]}$, the $k\th$ query $i_k$ is a \emph{first-string} query.
Let us now consider the case where $i_k$ is a \emph{second-string} query.
Note that in this case, the event $P'_k \geq p$ implies that some previous first-string query $i \in Q_{<k}$ matches with $i_k$, and $D_{k} = d(i)$.
To handle this case, we repeat the same argument above, with the following modification: 
First, we begin by taking a union bound over the possible choices of $i\in Q_{<k}$, which will replace the union bound over the possible choices of $i'\in Q'_{<k}$ taken in Eq.~\eqref{eq:E1}. 
We change the definition of $E_1$ to be the event that the chosen $i$ matches with $i_k$.
We then repeat the same argument, except that the chosen position $i \in Q_{<k}$ takes the role of $i_k$, and $i_k$ takes the role of $i'$.
\end{proof}

%% file: H_ub_proof.tex
\section{A non-adaptive tester for \texttt{ResStringEq}}\label{sec:ub_proof}

In this section, we present a \emph{non-adaptive} $\epsilon$-tester for \texttt{ResStringEq} that makes $O(n^{1/2+\delta})$ queries, where $\delta>0$ is an arbitrarily small constant. More precisely, we show:

\begin{theorem}\label{thm:upper_bound:restatement}
    For every constant $\delta>0$, there exists a \emph{non-adaptive} tester for \textup{\texttt{ResStringEq}} with query complexity $O(\left(1/\epsilon\right)^{O(\log(1/\delta))} \cdot n^{1/2 + \delta})$.
\end{theorem}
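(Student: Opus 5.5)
The plan is to build the tester recursively, following the overview in Section~\ref{sec:overview:up}. I first define a generalized problem that is closed under recursion. Call a pair $(s,s')$ a \emph{$\Delta$-slack match} if $\mathrm{Res}(s)$ and $\mathrm{Res}(s')$ agree after deleting at most $\Delta$ leading symbols of one of them and ignoring at most $\Delta$ trailing symbols of one of them. I then design, for every recursion depth $r$, a non-adaptive procedure pair, Query$_r$ and Decision$_r$, with the following guarantees:
\begin{itemize}
\item it accepts every $\Delta$-slack match with $\Delta \le c\epsilon n$ with probability at least $1-\eta$;
\item it rejects every pair that is $\epsilon$-far from every such match with probability at least $1-\eta$;
\item its query complexity is $Q_r(n,\epsilon,\eta)$.
\end{itemize}
Theorem~\ref{thm:upper_bound:restatement} follows from the case $\Delta=0$ inputs, since members of \texttt{ResStringEq} are slack matches. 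The base case $r=0$ queries everything.

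\textbf{Step 1 (Query$_r$).} On a string of length $n$, Query$_r$ does the following:
\begin{itemize}
\item It makes $O((n/\Delta)^2\log(n/\eta))$ uniform queries, used for rank estimation with $\Delta=\Theta(\epsilon L)$.
\item It fixes a covering by blocks of length $b=2L/(\alpha\epsilon)$ at shifts $b/2$.
\item It samples $O(\sqrt{n/L}\cdot\log(1/\eta)/\epsilon)$ of these blocks uniformly.
\item It runs Query$_{r-1}$ on each sampled block, with proximity $\Theta(\epsilon)$ and confidence $\eta'=\eta/\mathrm{poly}(n)$.
\end{itemize}
All of this is independent of answers, so the procedure is non-adaptive.

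\textbf{Step 2 (Decision$_r$).} Decision$_r$ works as follows:
\begin{itemize}
\item It computes rank estimates. By a Chernoff bound plus a union bound over positions, all estimates are within $\Delta$ with probability $1-\eta/3$.
\item For each of the two orderings and each offset $h\le 2\Delta+c\epsilon n$, it partitions the first string (after discarding $h$ estimated ranks) and the second string into rank-segments of $L$ estimated ranks.
\item For every corresponding pair of non-long segments that both lie inside sampled blocks, it calls Decision$_{r-1}$ on that pair, with slack $O(\Delta)$.
\item It accepts iff some offset has all its recursive calls accepting.
\end{itemize}
Completeness: at the true offset every corresponding pair is a slack match, and a union bound over all calls uses $\eta'$. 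Soundness has three ingredients:
\begin{itemize}
\item A Claim~\ref{claim:many_bad_blocks}-type counting shows that for every offset, if the pair is $\epsilon$-far, then $\Omega(\epsilon)\cdot n/L$ corresponding non-long segment pairs are $\Omega(\epsilon)$-far from slack match. Long segments are sparse and together contribute at most $O(\alpha\epsilon n)$ mismatches.
\item A birthday argument (Claim~\ref{claim:birthday}) shows that some such pair lands in two sampled blocks with probability $1-\eta/3$, simultaneously for all $O(n)$ offsets. I expect this to need the boosted sample size, since $\log(1/\eta)$ amplification handles the union bound.
\item The recursive call then rejects except with probability $\eta'$.
\end{itemize}

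\textbf{Step 3 (complexity).} The recurrence is
\[
Q_r(n)=\widetilde O\bigl(\epsilon^{-2}(n/L)^2+\epsilon^{-1}\sqrt{n/L}\,Q_{r-1}(O(L/\epsilon))\bigr).
\]
Choosing $L=n^{3/(3+2\gamma_{r-1})}$ gives exponent $\gamma_r=1/(2-(3/4)^r)$. Taking $r=O(\log(1/\delta))$ makes this $\le 1/2+\delta$. Each level multiplies the $\epsilon$-dependence by $\mathrm{poly}(1/\epsilon)$, which yields the $(1/\epsilon)^{O(\log(1/\delta))}$ factor. The log factors from amplification can be absorbed into $n^{\delta}$ by slightly adjusting $r$.

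\textbf{Main obstacle.} I expect the hardest part to be the soundness counting lemma in the slack setting. It must show that a fixed offset cannot make a far pair look close segment-by-segment, even though segment boundaries come from estimates rather than true ranks. I also need to keep the proximity loss per level a constant factor, so that $\epsilon$ does not degrade more than geometrically across the $O(\log(1/\delta))$ levels.
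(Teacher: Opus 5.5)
Your architecture matches the paper's: recursive Query/Decision procedures; a fixed covering by blocks of length $\Theta(L/\epsilon)$ at spacing $b/2$, sampled once and reused across both orderings and all $O(\epsilon n)$ offsets; recursion with boundary slack $O(\Delta)$; and the recurrence giving $\gamma_r=1/(2-(3/4)^r)$.

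The gap is in soundness, at the point you flagged as the main obstacle. Your Decision procedure never compares the \emph{total} number of non-star symbols in the two strings, and without that the counting claim you invoke is false. Take $s=0^{n/2}*^{n/2}$ and $s'=0^n$. Any repair needs at least $n/2-O(c\epsilon n)$ modifications, so the pair is $\Omega(1)$-far from \texttt{ResStringEq} and from every $c\epsilon n$-slack match. Yet at ordering $(s,s')$ and offset $h=0$, every corresponding pair of rank-segments that exists in both strings either has residual strings $0^{L\pm 2\Delta}$ or contains a long segment and is skipped. So every recursive call accepts, and so does your Decision. The rank-segments only cover ranks up to about $\min\{|\mathrm{Res}(x)|-h,|\mathrm{Res}(x')|\}$. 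The surplus suffix of the longer residual string is never examined, so no segment-by-segment counting can produce $\Omega(\epsilon n/L)$ bad pairs. Note that Claim~\ref{claim:many_bad_blocks} explicitly assumes the tester did not reject at its Step~3, which is exactly this check.

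The paper supplies the missing ingredient in two parts. First, Algorithm~\ref{alg:actual_tester} rejects unless $|\hat r_s(n)-\hat r_{s'}(n)|\le 0.2\epsilon n+2\Delta$. Second, the Decision procedure only has to work under the promise $\bigl||\mathrm{Res}(s)|-|\mathrm{Res}(s')|\bigr|\le 0.2\epsilon n+4\Delta$ (together with $n=\Omega(\epsilon N)$). Recursive calls inherit this promise automatically, since under successful estimates both corresponding segments contain $L\pm 2\Delta$ non-stars. It is this promise that bounds the uncovered prefix and suffix by $0.5\epsilon n$ in the last case of Claim~\ref{claim:many_bad_blocks_generalized}. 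You need to add it, either as a length check at every level or as a top-level check plus a promise built into your induction hypothesis. As stated, your inductive soundness guarantee does not hold.

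A smaller issue: amplifying the block sample by $\log(1/\eta)$ does not cover the union bound over offsets at the top level, where $\eta=1/3$. There the per-offset failure probability of the birthday argument is only a constant, while there are $\Theta(\epsilon n)$ offsets. You need amplification by $\log(n/\eta)$, as in the paper, whose sampling probability $p\propto\sqrt{L\log(N/\eta)/(\epsilon^{3}N)}$ gives per-offset failure $\eta/(2N)$. This adds only polylogarithmic factors, which are absorbed into $n^{\delta}$ as you describe.
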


\smallskip
We follow the structure of the overview presented in Section~\ref{sec:overview:up}: 
We begin with a warm-up (Section~\ref{subsec:ub_proof:warmup}), where in Section~\ref{subsec:upper_bound:proof:warm_up1} we present a basic tester that is adaptive and uses $\widetilde{O}(n^{2/3})$ queries for every constant $\epsilon>0$, and in Section~\ref{subsec:upper_bound:proof:warm_up2} we show how we can modify the basic tester to be \emph{non-adaptive}, at the cost of increasing the query complexity to $\widetilde{O}(n^{4/5})$.
In Section~\ref{subsec:upper_bound:proof:actual} we present the actual (recursive) non-adaptive tester that achieves the desired $O(n^{1/2+\delta})$ bound.

\subsection{A warm-up}\label{subsec:ub_proof:warmup}

\subsubsection{A basic adaptive tester}\label{subsec:upper_bound:proof:warm_up1}

In this section we present, as an initial warm-up, an \emph{adaptive} tester for \texttt{ResStringEq} that makes $\widetilde{O}(n^{2/3})$ queries.
For a string $x\in\{0,1,*\}^n$ and a position $i\in[n]$, we define the \textsf{rank} of $i$ in $x$ as $r_x(i)\eqdef |\{j\leq i : x_j\neq *\}|$.\footnote{
    Note that in contrast to Section~\ref{sec:lb_proof}, here we define the rank using the same formula across all positions, rather than setting the rank of star positions to $\bot$.
}

\begin{algorithm}[a basic $\widetilde{O}(n^{2/3})$-query adaptive tester for \texttt{ResStringEq}]\label{alg:basic_adaptive}
On input $n\in \mathbb{N}, \epsilon > 0$, and oracle access to $(s,s') \in \{0,1,*\}^n \times \{0,1,*\}^n$, proceed as follows.
\begin{enumerate}[leftmargin=0.8cm]
    \item Let $L \,\eqdef\, n^{2/3}$, and $\Delta \,\eqdef\, \alpha_1 \cdot \epsilon \cdot L$, where $\alpha_1>0$ is a sufficiently small constant \textup{(}to be set in the analysis\textup{)}.
    
    \item \textup{\textsf{(Estimate the ranks):}}
    For each string $x\in\{s,s'\}$, independently: 
    Take $\, T \,\eqdef\, O\!\left(\frac{n^2}{\Delta^2}\log n\right)$ independent uniform samples from $[n]$, and query $x$ at the sampled positions. 
    For every $i\in[n]$ and every $t\in[T]$, let $X^{x,i}_t$ equal $1$ if the $t\th$ sampled position, denoted $j_t$, satisfies that $j_t \leq i$ and $x_{j_t} \neq *$; otherwise, let $X^{x,i}_t=0$. 
    Define
    $$
        \hat{r}_x(i) \;\eqdef\; \frac{n}{T}\cdot \sum_{t\in[T]} X^{x,i}_t.
    $$

    \item \textup{\textsf{(Check that the number of non-star symbols is approximately the same in both strings):}} 
    If $| \hat{r}_s(n) - \hat{r}_{s'}(n) | > 2\Delta $, reject. Otherwise, set $m \eqdef \min\{\hat{r}_s(n), \hat{r}_{s'}(n)\}$. 
    
    \item \textup{\textsf{(Partition into rank-segments):}} 
    Partition each string $x\in \{s, s'\}$ into \textup{\textsf{rank-segments}}, where the $k\th$ rank-segment of $x$ is defined as:
    $$
        R^x_k \;\eqdef\; \{i\in[n] : (k-1)\cdot L \;<\; \hat{r}_x(i) \;\leq\;  k\cdot L \}.
    $$
    
    We say that $R^x_k$ is \textup{\textsf{short}} if $|R^x_k| \leq L/(\alpha_2\cdot \epsilon)$, where $\alpha_2>0$ is a sufficiently small constant \textup{(}to be set in the analysis\textup{)}; otherwise, we say $R^x_k$ is \textup{\textsf{long}}.

    \item \textup{\textsf{(Check that the residual strings of corresponding rank-segments match up to boundary slack):}}
    Repeat the following check $O(1/\epsilon)$ times.
    \begin{enumerate}
        \item Select $k\in[m/L]$ uniformly at random.

        \item If either $R^s_k$ or $R^{s'}_k$ is long, continue to the next iteration.

        \item Otherwise \textup{(}i.e., both $R^s_k$ and $R^{s'}_k$ are short\textup{)}, query $s$ at every position in $R^s_k$, and query $s'$ at every position in $R^{s'}_k$.

        \item We say that two strings \textup{\textsf{match up to boundary slack $k\in \mathbb{N}$}} if, after deleting at most $k$ symbols from the beginning of one of the two strings, and at most $k$ extra symbols from the end of one of them \textup{(}possibly the same one\textup{)}, the remaining strings are equal.
        
        Check if the residual strings of the restriction of $s$ to $R^s_k$ and the restriction of $s'$ to $R^{s'}_k$ match up to boundary slack $2\Delta$.
        If this check fails, then halt and reject.
        
    \end{enumerate}
    If none of the iterations rejected, then accept.

\end{enumerate}

\end{algorithm}

\paragraph{Query complexity.}

The above tester makes $T = \widetilde{O}\left( n^2/(\epsilon \cdot L)^2 \right)$ queries for obtaining the estimates (Step~2), and $O(1/\epsilon) \cdot O(L/\epsilon) = O(L/\epsilon^2)$ queries for reading the short segments (Step~5c). 
Setting $L = n^{2/3}$, we get a tester with query complexity $\widetilde{O}\left( n^{2/3}/\epsilon^2 \right)$.

\subsubsection{Correctness}\label{subsubsec:upper_bound:proof:warm_up1:correctness}

\paragraph{Successful estimates.} We say that \textsf{all estimates are successful} if for each string $x\in\{s,s'\}$ and all $i\in[n]$ (simultaneously) it holds that $|\hat{r}_x(i) - r_x(i)| \leq \Delta$.

\begin{subclaim}\label{claim:successful_estimates}
    Assuming $T = C\cdot \frac{n^2}{\Delta^2} \cdot \log(n)$ for a sufficiently large constant $C$, it holds that all estimates are successful with probability at least $0.99$.
\end{subclaim}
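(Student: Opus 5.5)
For each fixed string $x\in\{s,s'\}$ and fixed position $i\in[n]$, the estimate $\hat r_x(i)$ is a scaled empirical mean of i.i.d.\ indicators. I will prove a per-$(x,i)$ tail bound with Hoeffding's inequality and then take a union bound over all $2n$ pairs $(x,i)$.

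Fix $x$ and $i$. The variables $X^{x,i}_1,\ldots,X^{x,i}_T$ are independent, since the sampled positions $j_t$ are independent and uniform over $[n]$, and they take values in $\{0,1\}$. Their common mean is
$$\Pr[j_t\le i \wedge x_{j_t}\neq *] = \frac{|\{j\le i: x_j\ne *\}|}{n} = \frac{r_x(i)}{n}.$$
Hence $\E[\hat r_x(i)] = r_x(i)$. Moreover, the event $|\hat r_x(i)-r_x(i)|>\Delta$ is exactly the event
$$\Bigl|\frac1T\sum_{t\in[T]} X^{x,i}_t - \frac{r_x(i)}{n}\Bigr| > \frac{\Delta}{n}.$$
By Hoeffding's inequality, its probability is at most $2\exp(-2T\Delta^2/n^2)$. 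Substituting $T = C\cdot\frac{n^2}{\Delta^2}\log n$ gives the bound $2\exp(-2C\log n)=2n^{-2C}$.

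A union bound over the two strings and the $n$ positions in each gives total failure probability at most $4n^{1-2C}$. Taking $C$ a sufficiently large constant, say $C\ge 1$, this is at most $4/n$, which is below $0.01$ for $n$ larger than some constant. For smaller $n$, the tester can simply query both strings entirely, so we may assume $n$ is large. Alternatively, one can choose $C$ large enough that $4n^{1-2C}\le 0.01$ for all $n\ge 2$.

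There is no real obstacle here. The only points to check are the following.
\begin{itemize}
\item The indicator $X^{x,i}_t$ has mean exactly $r_x(i)/n$. This holds because sampling is with replacement and uniform over $[n]$.
\item The deviation threshold $\Delta$ becomes $\Delta/n$ after normalizing by $n/T$.
\item Independence holds across $t$. Independence across the pairs $(x,i)$ is not needed, because we only take a union bound.
\end{itemize}
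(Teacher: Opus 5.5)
Your proposal is correct and follows the paper's proof: the paper also applies the Chernoff--Hoeffding bound $2\exp(-2T(\Delta/n)^2)$ to the independent Bernoulli indicators $X^{x,i}_t$, each with mean $r_x(i)/n$, and then takes a union bound over the two strings and all $n$ positions. You go slightly further by making the constant explicit ($4n^{1-2C}$) and handling small $n$, which the paper leaves implicit in its statement that the per-pair failure probability is $o(1/n)$.
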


\begin{subproof}
    Consider any $x\in\{s,s'\}$ and $i\in[n]$. 
    Note that $X_1^{x,i},\ldots,X_T^{x,i}$ are independent Bernoulli random variables, where for each $t\in[T]$, it holds that $\Pr[X_t^{x,i} = 1] = r_x(i)/n$ (because the sampled positions are uniform in $[n]$, and exactly $r_x(i)$ positions $j$ among $[n]$ satisfy that $j \leq i$ and $x_j \neq *$).
    Hence, by the Chernoff bound, it holds that
    $
        \Pr\!\big[ | \hat r_x(i) - r_x(i) | > \Delta \big] \leq 2\cdot \exp(-{2T\cdot(\Delta/n)^2})
    $.
    By setting $T=C\cdot \frac{n^2}{\Delta^2}\cdot\log(n)$ for a sufficiently large constant $C$, we have that this probability is at most $o(1/n)$.
    Taking a union bound over the two input strings $x\in\{s,s'\}$ and over all $i\in[n]$, the claim follows.
\end{subproof}

\paragraph{Completeness:} 

Consider an input $(s,s')\in\textup{\texttt{ResStringEq}}$. 
Observe that if all estimates are successful, the tester accepts.
Indeed, since $\textrm{Res}(s)=\textrm{Res}(s')$ we have $r_s(n)=|\textrm{Res}(s)| = |\textrm{Res}(s')| = r_{s'}(n)$, and therefore if the estimates are successful the tester does not reject at Step~3.
Furthermore, if the estimates are successful, then for every $k\in[m/L]$ the true ranks of the first positions in $R^s_k$ and $R^{s'}_k$ both lie in the range $(k-1)\cdot L\pm\Delta$.
Hence, these two ranks differ by at most $2\Delta$. By a similar argument the true ranks of the last positions in $R^s_k$ and $R^{s'}_k$ also differ by at most $2\Delta$.
Since $\textrm{Res}(s)=\textrm{Res}(s')$, it follows that every pair of corresponding rank-segments passes the check for matching up to boundary slack (Step~5d), and the tester accepts. 
Completeness follows.

\paragraph{Soundness:} 

For $k\in[m/L]$, we call the pair of corresponding rank-segments $(R^s_k, R^{s'}_k)$ \textsf{bad} if both segments are short, and the residual strings obtained after restricting $s$ and $s'$ to these segments do not match up to boundary slack $2\Delta$.

\begin{subclaim}\label{claim:many_bad_blocks}
    Assume the input pair $(s,s')$ is $\epsilon$-far from \textup{\texttt{ResStringEq}}.
    If the estimates are successful and Algorithm~\ref{alg:basic_adaptive} has not rejected at Step~3, then at least $\Omega(\epsilon \cdot n/ L)$ of the corresponding rank-segments are bad.
\end{subclaim}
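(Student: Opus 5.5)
We argue by contrapositive: assuming fewer than $c\cdot\epsilon n/L$ pairs of corresponding rank-segments are bad (for a small constant $c$), we modify $(s,s')$ in at most $\epsilon n$ positions to obtain a pair in \texttt{ResStringEq}. Throughout we work under the hypotheses that all estimates are successful and Step~3 did not reject, so $|r_s(n)-r_{s'}(n)|\le 4\Delta$. There are about $m/L\le n/L$ segment indices $k$, plus possibly a final partial segment. The idea is to repair the two residual strings segment by segment, and then glue the repaired pieces together. The budget is $\epsilon n$ changes in total.

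\textbf{Partitioning the indices $k$.} We split the indices into three types.
\begin{enumerate}
\item[(i)] \emph{Good pairs:} both segments are short and their residual strings match up to boundary slack $2\Delta$.
\item[(ii)] \emph{Bad pairs:} fewer than $c\epsilon n/L$ of them by assumption.
\item[(iii)] \emph{Pairs with at least one long segment.}
\end{enumerate}
For type (iii), a long segment has more than $L/(\alpha_2\epsilon)$ positions but, by successful estimates, only $L+2\Delta\le 2L$ non-star symbols. Since segments are disjoint, the number of long segments is at most $n\alpha_2\epsilon/L$. Hence types (ii) and (iii) together involve at most $(c+2\alpha_2)\epsilon n/L$ indices.

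\textbf{Repair.} Within each index $k$, the residual contents of both segments have length at most $L+2\Delta$ non-star symbols. For type (ii) and (iii) indices, we simply overwrite the non-star symbols of the segment in $s'$ to copy the residual content of the corresponding segment of $s$. To make the lengths agree, we convert surplus non-star symbols to $*$, or convert $*$'s into bits where the segment of $s'$ has too few. This costs $O(L)$ changes per index. A short segment of $s'$ may not contain enough $*$'s to absorb an increase in length. To avoid this, we handle length adjustments globally at the end rather than per segment.

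For good pairs, matching up to boundary slack $2\Delta$ means the two residual strings agree except on at most $2\Delta$ symbols at each end. We fix this by editing at most $O(\Delta)$ symbols near each boundary. Concretely, we delete the extra symbols on one side (changing them to $*$) and insert the missing ones on the other side. Alternatively, we shift the alignment by adjusting the boundary between consecutive segments, so that the rank alignment agrees across boundaries.

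\textbf{Main obstacle: consistency of alignment.} The repairs only yield $\mathrm{Res}(s)=\mathrm{Res}(s')$ if the symbols removed or added at the end of segment $k$ are exactly the symbols involved at the start of segment $k+1$. We intend to handle this by making all edits in $s'$, turning symbols into $*$ only. We define the target as $\mathrm{Res}$ of a common subsequence built greedily: for each good $k$, keep the common middle part of length at least $L-4\Delta$ and delete the rest in both strings. For the remaining indices, delete everything non-star in both segments. After that, fix the total-length difference, which is zero because the kept parts are equal.

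Deleting in both strings costs $O(\Delta)=O(\alpha_1\epsilon L)$ per good index, so $O(\alpha_1\epsilon n)$ in total. For bad and long indices, the cost is $O(L)$ per index (the non-star count of the segments), so $O((c+\alpha_2)\epsilon n)$ in total. The unpartitioned tail beyond rank $m$ costs at most $O(\Delta)$. Choosing $\alpha_1,\alpha_2,c$ small makes the total at most $\epsilon n$, which contradicts farness. Since deletion (replacing a symbol by $*$) keeps both strings length $n$, the resulting pair is in \texttt{ResStringEq}, and this resolves the gluing issue.
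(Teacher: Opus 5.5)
Your final argument is correct and is essentially the paper's proof. Star out everything except the common matched parts of the good (short, slack-matching) pairs, delete all non-star symbols in bad pairs, in pairs with a long segment, and in the tail, and bound the cost by $O(\alpha_1\epsilon n)+O(\alpha_2\epsilon n)+O(c\,\epsilon n)+O(\Delta)$. The overwriting and boundary-shifting ideas in your ``Repair'' paragraph, and the remark about editing only $s'$, are unnecessary and should be dropped, since deleting in both strings already resolves the gluing issue.
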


From the above claim it follows that each iteration in Step~5 rejects with probability at least $\Omega(\epsilon)$ (because the total number of corresponding rank-segments is $m/L \leq n/L$).  
Hence, w.h.p.\ at least one of the $O(1/\epsilon)$ iterations rejects, and soundness follows.

\begin{subproof}[\textup{\textsf{Proof of Claim~\ref{claim:many_bad_blocks}:}}]
    Suppose, towards a contradiction, that fewer than $\alpha_3 \cdot \epsilon \cdot n/ L$ corresponding pairs are bad, where $\alpha_3>0$ is a sufficiently small constant.
    We show that we can modify $(s,s')$ in at most $\epsilon\cdot n$ 
    positions so that it becomes a pair in \texttt{ResStringEq}, contradicting the claim's hypothesis.
    
	We modify $(s,s')$ as follows. 
    For every pair of corresponding rank-segments that is not bad and in which both segments are short, the residual strings in the two segments match up to boundary slack $2\Delta$. 
    We keep the non-star symbols in the matching parts of all such pairs, and replace every other non-star symbol in $s$ and $s'$ by a star.
	We bound the number of modified positions as follows:
    \begin{itemize}[leftmargin=0.8cm]

        \item First consider the pairs that are not bad and in which both segments are short.
        In each such pair, we modify at most $2\Delta$ non-star symbols at the beginning of one of the segments and at most $2\Delta$ non-star symbols at the end of one of the segments.
        Hence, we modify at most $4\Delta = O(\alpha_1\cdot \epsilon \cdot L)$ symbols.
        Since there are at most $n/L$ corresponding rank-segments, the total number of modifications in these pairs is at most $O(\alpha_1\cdot\epsilon\cdot n)$.

        \item Consider pairs in which at least one segment is long. Recall that the length of each long rank-segment is greater than $L/(\alpha_2\cdot\epsilon)$.
        Hence, each string can have at most $\alpha_2\cdot\epsilon\cdot n/L$ long rank-segments.
        Since the estimates are successful, every rank-segment has at most $L+2\Delta=O(L)$ non-star symbols. 
        Thus, replacing all non-star symbols in pairs in which at least one rank-segment is long contributes at most $O(\alpha_2\cdot\epsilon\cdot n)$ modifications.

        \item Next, consider the bad pairs. By our assumption, there are fewer than $\alpha_3\cdot\epsilon\cdot n/L$ bad pairs.
        Since each pair contains at most $O(L)$ non-star symbols, replacing all non-star symbols in the bad pairs contributes at most $O(\alpha_3\cdot\epsilon\cdot n)$ modifications.

        \item Finally, after the last corresponding rank-segment, one string may have an additional suffix. 
        Since Step~3 did not reject (by the claim's hypothesis) and the estimates are successful, this suffix has at most $O(\Delta) = o(\epsilon\cdot n)$ non-star symbols.

    \end{itemize}
    Taking $\alpha_1$, $\alpha_2$ and $\alpha_3$ sufficiently small, the total number of modifications is at most $\epsilon\cdot n$, establishing the claim. 
\end{subproof}

\subsubsection{Making the basic tester non-adaptive}\label{subsec:upper_bound:proof:warm_up2}

In this section we show how we can modify the basic tester from the previous section to be \emph{non-adaptive}, at the cost of increasing the query complexity to $\widetilde{O}(n^{4/5})$.
Note that the queries made in Step~5 of Algorithm~\ref{alg:basic_adaptive} are inherently adaptive, because the segments queried in this step are defined based on the estimates obtained in Step~2.
To get a non-adaptive tester, we modify Algorithm~\ref{alg:basic_adaptive} so that it pre-samples sufficiently many fixed blocks from both strings, such that w.h.p.\ both rank-segments in some bad pair are contained in sampled blocks.

\begin{definition}[blocks]\label{def:grid_blocks}
    For a string $x\in \{0,1,*\}^n$, the \textup{\textsf{covering of $x$ by blocks of length $b$ and spacing $\tau$}} is the set of all intervals of length $b$ starting at positions $1,1+\tau,1+2\tau,\ldots,1+\lfloor (n-1)/\tau \rfloor \cdot \tau$, where the last intervals are truncated at $n$.
    The intervals are referred to as \textup{\textsf{blocks}}.
\end{definition}

Note that any interval of positions in $x$ whose length is at most $b - \tau$ is necessarily contained in some block.
In the following, we use $b=2L/(\alpha_2\cdot\epsilon)$ and $\tau = L/(\alpha_2\cdot\epsilon)$, which ensures that every short rank-segment (as defined in Step~4 of Algorithm~\ref{alg:basic_adaptive}) is contained in a block.

\begin{algorithm}[a basic $\widetilde{O}(n^{4/5})$-query\footnote{
    For convenience, the query complexity of this construction is bounded only in expectation. 
    To get a worst-case query bound, alter the construction to halt and accept whenever the number of queries exceeds its expectation by a sufficiently large constant factor.
}
non-adaptive tester for \texttt{ResStringEq}]\label{alg:basic_non_adaptive}

On input $n\in \mathbb{N}$, $\epsilon > 0$, and oracle access to $(s,s') \in \{0,1,*\}^n \times \{0,1,*\}^n$, proceed as follows.

\begin{enumerate}[leftmargin=0.8cm]

    \item 
    Let $L \,\eqdef\, n^{3/5}/\epsilon^{2/5}$, $\Delta \,\eqdef\, \alpha_1 \cdot \epsilon \cdot L$ and $b \eqdef 2\cdot L/(\alpha_2 \cdot \epsilon)$, 
    where $\alpha_1$ and $\alpha_2$ are the same constants used in Steps~1 and~4 of Algorithm~\ref{alg:basic_adaptive}, respectively.

    \item \textup{\textsf{Query-Phase:}} 
    For each string $x\in\{s,s'\}$, independently:
    \begin{enumerate}

        \item \textup{\textsf{(Prepare estimation queries):}} 
        Take $\, T \,\eqdef\, O\!\left(\frac{n^2}{\Delta^2}\log n\right)$ independent uniform samples from $[n]$, and query $x$ at the sampled positions.

        \item \textup{\textsf{(Select \& query blocks):}} 
        Consider the covering of $x$ by blocks of length $b$ and spacing $\tau \eqdef b/2$ \textup{(}see Definition~\ref{def:grid_blocks}\textup{)}.
        Sample each block in the covering independently with probability 
        $p \eqdef O\left(\sqrt{\frac{L}{\epsilon^2 \cdot n}}\right)$, and query each selected block in its entirety.

    \end{enumerate}

    \item \textup{\textsf{Decision-Phase:}}
    \begin{enumerate}
        \item \textup{\textsf{(Partition into rank-segments)}} Using the estimation queries \textup{(}made in Step 2a\textup{)}, perform Steps 2--4 of Algorithm~\ref{alg:basic_adaptive}.

        \item \textup{\textsf{(Check if the residual strings of corresponding rank-segments match up to boundary slack):}}
        Go over all pairs of corresponding rank-segments $(R^{s}_k,R^{s'}_k)$ for $k\in [m/L]$.
        For each such pair, check
        if there exists a pair of blocks $B$ and $B'$ that were queried in $s$ and $s'$, respectively \textup{(}in Step 2b\textup{)}, such that $R^s_k \subseteq B$ and $R^{s'}_k \subseteq B'$.
        If so, check if the residual strings of the restriction of $s$ to $R^{s}_k$ and the restriction of $s'$ to $R^{s'}_k$ match up to boundary slack $2\Delta$ \textup{(}as in Step~5d of Algorithm~\ref{alg:basic_adaptive}\textup{)}.
        If the check fails, then halt and reject. 

        \item If no check performed in the last step rejected, then accept. 
    \end{enumerate}

\end{enumerate}

\end{algorithm}

\paragraph{Query complexity.}
The above tester makes $\widetilde{O}(n^{4/5}/\epsilon^{6/5})$ queries in expectation:
As before, the tester makes $\widetilde{O}(n^2/(\epsilon^2\cdot L^2))$ queries for obtaining the estimates (in Step~2a).
We next bound the number of queries made in Step~2b. 
Notice that there are $O(n/\tau)$ blocks in total.
Since each block is sampled with probability $p=O(\sqrt{L/(\epsilon^2\cdot n)})$, the expected number of sampled blocks is $O(p\cdot n/\tau)$.
Each sampled block has length $b=2\tau$, therefore the expected number of queries made in Step~2b is $O(p \cdot n\cdot b/\tau) = O(p\cdot n) = O(\sqrt{n\cdot L/\epsilon^2})$.
Thus, the expected query complexity of the tester is $\widetilde{O}\big( n^2/(\epsilon^2\cdot L^2)+\sqrt{n\cdot L}/\epsilon \big)$.
Balancing the two terms, we set $L= n^{3/5}/\epsilon^{2/5}$, giving us expected query complexity $\widetilde{O}(n^{4/5}/\epsilon^{6/5})$.
(As mentioned before, to get a \emph{worst-case} query bound, alter the construction to halt and accept if the number of queries exceeds its expectation by a sufficiently large constant factor.)

\paragraph{Correctness.}

Completeness follows exactly as in Section~\ref{subsubsec:upper_bound:proof:warm_up1:correctness}. 
Furthermore, note that Claim~\ref{claim:many_bad_blocks} still holds as well. 
That is, for every input $(s,s')$ that is $\epsilon$-far from \texttt{ResStringEq}, if the estimates are successful and the tester did not reject at Step~3a (i.e., when checking that the number of non-star symbols is approximately the same in both strings), then there are at least $\Omega(\epsilon \cdot n/L)$ bad pairs of corresponding rank-segments.
It remains to show that, with high probability, both rank-segments in some bad pair are contained in blocks sampled in Step~2b.

\begin{subclaim}\label{claim:birthday}
    Assume that the input pair $(s,s')$ is $\epsilon$-far from \textup{\texttt{ResStringEq}}, that the estimates are successful, and that the tester has not rejected at Step~3a.
    Then, with probability at least $0.99$, there exists a bad pair of corresponding rank-segments such that each rank-segment in the pair is contained in a block sampled in Step~2b. 
\end{subclaim}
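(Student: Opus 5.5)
Here is the plan. We fix the estimation samples so that all estimates are successful and Step~3a did not reject. The rank-segments $R^s_k, R^{s'}_k$ are then deterministic, and the only remaining randomness is the independent block sampling of Step~2b. By Claim~\ref{claim:many_bad_blocks}, the set $\mathcal{B}$ of bad indices $k$ has size at least $c\cdot\epsilon\cdot n/L$ for some constant $c>0$.

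For each $k\in\mathcal{B}$, both segments are short. Each short segment is an interval of length at most $\tau=b/2$, so it is contained in at least one block of the covering. We assign to it one such canonical block: $B(k)$ in $s$ and $B'(k)$ in $s'$. The samplings in $s$ and $s'$ are independent, so
\[
\Pr[B(k)\text{ and }B'(k)\text{ both sampled}]=p^2 .
\]
We cannot simply multiply over $k$, because several bad indices may share the same canonical block. The key step is therefore to extract a large subfamily $\mathcal{B}^*\subseteq\mathcal{B}$ in which the blocks $B(k)$ are pairwise distinct and the blocks $B'(k)$ are pairwise distinct.

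To build $\mathcal{B}^*$ we bound how many indices can share a block. Rank-segments in one string are disjoint intervals, and each contains at least $L-2\Delta\geq L/2$ non-star symbols (using successful estimates, with $\alpha_1$ small). A block of length $b=O(L/\epsilon)$ contains at most $b$ non-star symbols, so it contains at most $O(1/\epsilon)$ whole rank-segments. We then build a conflict graph on $\mathcal{B}$, joining $k$ and $k'$ when $B(k)=B(k')$ or $B'(k)=B'(k')$. This graph has maximum degree $O(1/\epsilon)$, so greedy selection yields an independent set $\mathcal{B}^*$ with $|\mathcal{B}^*|=\Omega(\epsilon\cdot|\mathcal{B}|)=\Omega(\epsilon^2 n/L)$.

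For distinct $k\in\mathcal{B}^*$, the events ``$B(k)$ and $B'(k)$ both sampled'' depend on disjoint sets of independent coins, so they are mutually independent. Hence
\[
\Pr[\text{no such }k\text{ succeeds}]\le(1-p^2)^{|\mathcal{B}^*|}\le\exp\!\big(-p^2\cdot\Omega(\epsilon^2 n/L)\big).
\]
With $p=C'\sqrt{L/(\epsilon^2 n)}$ we get $p^2\cdot\epsilon^2 n/L=C'^2$, so choosing the constant $C'$ in $p$ large makes this at most $0.01$. If $p$ as stated exceeds $1$, we take $p=1$ and the claim is trivial.

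The main obstacle is the dependence between bad pairs that share blocks, which the degree argument handles; that argument loses one factor of $\epsilon$, and this loss is what the $\epsilon^2$ in $p$ absorbs. A second-moment (birthday-style) argument would also work, but the greedy disjoint family is simpler.
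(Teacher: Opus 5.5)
Your proposal is correct and follows essentially the same route as the paper: assign each short segment a containing block, use the $O(1/\epsilon)$ bound on segments per block to extract $\Omega(\epsilon^2 n/L)$ bad pairs with pairwise distinct assigned blocks, and conclude via independence and $(1-p^2)^{\Omega(\epsilon^2 n/L)}\le e^{-\Omega(p^2\epsilon^2 n/L)}$ with $p=C/\sqrt{\epsilon^2 n/L}$. Your conflict-graph greedy selection simply spells out the extraction step that the paper states in one line.
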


\begin{subproof}
    Let us assign to each short rank-segment $R^x_k$ a block $B$ such that $R^x_k \subseteq B$.
    Recall that each segment in a bad pair is by definition short, so it has an assigned block.
    Notice that each block can be assigned to at most $O(1/\epsilon)$ rank-segments, because each rank-segment is of length at least $L-2\Delta$ whereas each block is of length at most $O(L/\epsilon)$.
    Thus, since there are at least $\Omega(\epsilon \cdot \frac{n}{L})$ bad pairs of rank-segments, there must be a set $S$ of at least $\Omega(\epsilon^2 \cdot \frac{n}{L})$ bad pairs such that the assigned blocks of different pairs in $S$ are all \emph{distinct}.
    
    For each bad pair in $S$, consider the event that the assigned blocks of both rank-segments in the pair are sampled. 
    For a single bad pair, the probability of this event is $p^2$.
    Since the assigned blocks of the pairs in $S$ are distinct, these events are mutually independent for different pairs in $S$.
    Hence, the probability that none of these events occurs among the $\Omega(\epsilon^2 \cdot \frac{n}{L})$ pairs in $S$ is at most:
    $$
        \left(1 - p^2\right)^{\Omega\left(\epsilon^2 \cdot \frac{n}{L}\right)} 
        \leq e^{-\Omega\left(p^2 \cdot \epsilon^2 \cdot \frac{n}{L}\right)}
    $$
    Setting $p = \frac{C}{\sqrt{\epsilon^2 \cdot \frac{n}{L}}}$ for a sufficiently large constant $C$, the claim follows.
\end{subproof}

\subsection{The actual non-adaptive tester}\label{subsec:upper_bound:proof:actual}

We extend the non-adaptive tester from the preceding section into a recursive tester that achieves the desired $O(n^{1/2+\delta})$ query complexity.
The tester we construct will actually have a stronger acceptance guarantee: 
It accepts inputs even if their residual strings match up to boundary slack $0.1 \cdot \epsilon \cdot n$ (while rejecting inputs that are $\epsilon$-far from having their residual strings match).
Recall that matching up to boundary slack is defined as follows: 

\begin{definition}[match up to boundary slack]\label{def:approximate_match}
    We say that two strings \textup{\textsf{match up to boundary slack $k\in \mathbb{N}$}} if, after deleting at most $k$ symbols from the beginning of one of the two strings, and at most $k$ extra symbols from the end of one of them \textup{(}possibly the same one\textup{)}, the remaining strings are equal.
\end{definition}

The tester is composed of two recursive procedures: a Query procedure, which is run separately on each input string and makes all queries, and a Decision procedure, which receives the outputs of the Query procedure and decides whether to accept or reject the input pair.
The Decision procedure also receives endpoint parameters $[p_s,q_s]$ and $[p_{s'},q_{s'}]$ specifying the first and last positions of two substrings $s$ and $s'$ that are currently being compared; 
in the first call these substrings are simply the full input strings, whereas in recursive calls they will be rank-segments inside sampled blocks.
Both the Query and Decision procedures receive a round parameter $r\in\mathbb{N}$ specifying the remaining number of recursion rounds.
The Query procedure also receives a failure parameter $\eta>0$ specifying the allowed failure probability.

\begin{algorithm}[an $O(n^{1/2+\delta})$-query non-adaptive tester for \texttt{ResStringEq}]\label{alg:actual_tester}
    On inputs $n\in \mathbb{N}$, $\epsilon>0$ and query access to $(s,s')\in\{0,1,*\}^n\times \{0,1,*\}^n$, proceed as follows:

    \begin{enumerate} 

        \item Let $\eta \,\eqdef\, 1/3$, and $r \,\eqdef\, O(\log(1/\delta))$.
        
        \item For each input string $x\in \{s,s'\}$, run the Query procedure \textup{(}Algorithm~\ref{alg:query}\textup{)} with inputs 
        $n,\epsilon,\eta,r$ and query access to $x$.

        \item \textup{\textsf{(Check that the number of non-star symbols is approximately the same in both strings):}} 
        Using the estimation queries in the output of the Query procedure, 
        check that $|\hat{r}_s(n) - \hat{r}_{s'}(n)| \leq 0.2\cdot \epsilon \cdot n + 2\Delta$, where $\hat{r}_s(n)$ and $\hat{r}_{s'}(n)$ are as defined in Step~2 of Algorithm~\ref{alg:basic_adaptive}, and $\Delta$ is as defined in Algorithm~\ref{alg:query}.
        If the check fails, then halt and reject.
    
        \item Run the Decision procedure \textup{(}Algorithm~\ref{alg:decision}\textup{)} with inputs $n,\epsilon,r$, the two outputs of the Query procedure, and endpoint parameters $[p_s,q_s]=[p_{s'},q_{s'}] \,\eqdef\, [1,n]$.
        Output the Decision procedure's decision.
    \end{enumerate}
    
\end{algorithm}

\begin{algorithm}[the \textsf{Query} procedure]\label{alg:query}
    On inputs $N\in\mathbb{N}$, $\epsilon>0$, $\eta>0$, $r\in \mathbb{N}$ and query access to a string $x \in \{0,1,*\}^N$, proceed as follows:

    \begin{itemize}[leftmargin=0.7cm]
        \item If $r=0$: Query the entire input string $x$ and output the answers.
        \item Otherwise: 
        \begin{enumerate}[leftmargin=0.6cm]

            \item \textup{\textsf{(Set parameters):}} 
            \begin{itemize}[leftmargin=0.5cm,itemsep=1ex]
                \item Let $\alpha_1, \alpha_2, \alpha_3>0$ be sufficiently small constants to be set in the analysis. 
                \item Let $L \,\eqdef\, L(N,r)$ be a parameter to be set in the analysis \textup{(}see Section~\ref{sec:query_complex_recursive}\textup{)}.
                \item Let $\epsilon' \,\eqdef\, \alpha_1 \cdot \epsilon$, and 
                $\Delta \eqdef \alpha_3 \cdot \epsilon' \cdot L$.
                \item Let $N' \,\eqdef\, b \,\eqdef\, 2L/(\alpha_2 \cdot \epsilon)$ and $\eta' \,\eqdef\, \eta/(2N)$.
            \end{itemize}
            
            \item \textup{\textsf{(Prepare estimation queries):}} 
            Take $T \,\eqdef\, O\!\left(\frac{N^2}{\Delta^2}\log \big(\frac{N}{\eta}\big) \right)$ independent uniform samples from $[N]$, and query $x$ at the sampled positions.
            
            \item \textup{\textsf{(Select blocks and continue recursively):}} 

            \begin{enumerate}
                \item Consider the covering of $x$ by blocks of length $b$ and spacing $b/2$ \textup{(}see Definition~\ref{def:grid_blocks}\textup{)}.
                Sample each block in the covering independently with probability $p \eqdef O\!\left(\sqrt{ \frac{L\cdot \log(N/\eta)}{\epsilon^3 \cdot N} }\right)$.

                \item For each selected block, recursively run the Query procedure on inputs $N',\epsilon',\eta',\allowbreak r-1$, and query access to the restriction of $x$ to the selected block. 
            \end{enumerate}

        \end{enumerate}
        Output the query positions and answers \textup{(}from Step~2\textup{)}, the selected blocks \textup{(}from Step~3\textup{)}, and the output of all recursive calls.
    \end{itemize}
    
\end{algorithm}

The Decision procedure has the following specification.
It receives the outputs obtained by running the Query procedure (Algorithm~\ref{alg:query}) on two strings $u,u' \in \{0,1,*\}^N$, with parameters $N,\epsilon,\eta,r$.
It also receives the values of $N$, $\epsilon$ and $r$ used in these two executions. 
In addition, it receives endpoints $[p_s,q_s]$ and $[p_{s'},q_{s'}]$ specifying the first and last positions of a substring $s$ in $u$ and a substring $s'$ in $u'$, respectively. 
Denoting $n \eqdef (|s| + |s'|)/2$, the substrings $s$ and $s'$ are promised to satisfy $n\geq \frac{\alpha_2}{4\alpha_1}\cdot \epsilon\cdot N = \Omega(\epsilon\cdot N)$, as well as $||\textrm{Res}(s)| - |\textrm{Res}(s')|| \leq 0.2\cdot \epsilon \cdot n + 4\Delta$, where $\alpha_1, \alpha_2$ and $\Delta$ are as defined in Algorithm~\ref{alg:query}. 
The Decision procedure should satisfy the following guarantees, where the probability is over the randomness of the Query procedure.
\begin{enumerate}

    \item  It \emph{accepts} w.p.\ at least $1-\eta$ if $\textrm{Res}(s)$ and $\textrm{Res}(s')$ match up to boundary slack $0.1\cdot \epsilon\cdot n$ (see Definition~\ref{def:approximate_match}).
    
    \item It \emph{rejects} w.p.\ at least $1-\eta$ if one must modify more than $\epsilon\cdot n$ positions in the pair $(s,s')$ so that their residual strings match.
\end{enumerate}

 \smallskip
In the following, positions in the output of the Query procedure are indexed relative to $u$ and $u'$, whereas the rank-segments are indexed relative to $s$ and $s'$. 
We translate between these two ways of indexing positions when needed.
    
\begin{algorithm}[the \textsf{Decision} procedure]\label{alg:decision}

On inputs $N\in \mathbb{N}$, $\epsilon>0$, $r\in \mathbb{N}$, the outputs of the Query procedure, and the endpoints $[p_s,q_s]$ and $[p_{s'},q_{s'}]$ specifying the positions of the substrings $s$ and $s'$ to be compared, proceed as follows.

For each $x\in \{s,s'\}$, let $n_x \eqdef q_x-p_x+1$, and let $n \eqdef (n_s + n_{s'})/2$.

\begin{itemize}[leftmargin=0.6cm]

\item If $r=0$: The outputs of the Query procedure contain the full strings $s$ and $s'$.
Directly check if the residual strings of $s$ and $s'$ match up to boundary slack $0.1\cdot \epsilon\cdot n$, and accept if and only if this check passes.

\item Otherwise: 

\begin{enumerate}[leftmargin=0.6cm]

    \item \textup{\textsf{(Set parameters):}} Set $L,\Delta,N',\epsilon'$ exactly as in Step~1 of the Query procedure \textup{(}Algorithm~\ref{alg:query}\textup{)}.
    
    \item \textup{\textsf{(Estimate the ranks):}}
    For each string $x\in\{s,s'\}$:
    For each $t\in[T]$, let $j_t\in[N]$ denote the position of the $t\th$ query in the output of the Query procedure corresponding to $x$. 
    For each $i\in[n_x]$ and $t\in[T]$, let $X^{x,i}_t$ equal $1$ if $p_x\leq j_t\leq p_x+i-1$ and $x_{j_t-p_x+1}\neq *$; otherwise, let $X^{x,i}_t=0$.
    Define:
    $$
        \hat{r}_x(i) \;\eqdef\; \frac{N}{T}\cdot \sum_{t\in[T]} X^{x,i}_t.
    $$
    
    \item \textup{\textsf{(Define rank-segments for every offset):}}
    For each $x\in \{s,s'\}$ and every offset $h\in \{0, 1,\ldots, 0.1\cdot \epsilon \cdot n\}$, define the $k\th$ rank-segment of $x$ with offset $h$ as:
    $$
        R^{x,h}_k \;\eqdef\; \{i\in[n_x] \,:\, (k-1)\cdot L \;<\; \hat{r}_x(i) - h \;\leq\;  k\cdot L \}
    $$

    \item \textup{\textsf{(Search for a successful offset):}} Consider each of the two possible orderings of the strings $(x,x')\in \{(s,s'),(s',s)\}$.
    For each ordering, go over all offsets $h\in\{0,1,\ldots,0.1\cdot\epsilon\cdot n\}$ \textup{(}the offset $h$ is applied to the first string $x$, whereas the second string $x'$ is left with offset $0$\textup{)}.
    For each such offset $h$, proceed as follows:
    \begin{enumerate}[leftmargin=0.9cm, itemsep=5pt]
        \item Let
        $
            m_{(x,x'),h} \;\eqdef\; \min\{\hat r_x(n_x)-h,\; \hat r_{x'}(n_{x'})\}.
        $

        \item \textup{\textsf{(Recursively check corresponding rank-segments):}}
        Go over all pairs of corresponding rank-segments $(R^{x,h}_k,R^{x',0}_k)$ for $k\in [m_{(x,x'),h}/L]$.
        For each such pair, check if there exists a pair of selected blocks $B$ and $B'$ in the respective outputs of the Query procedure, 
        such that $R^{x,h}_k \subseteq B$ and $R^{x',0}_k \subseteq B'$, 
        where $B$ and $B'$ are represented with respect to their positions in $x$ and $x'$.\footnote{
            That is, if $B$ occupies positions $[a,b]$ in the string on which the corresponding Query procedure was run, then, for the purpose of this containment check, we view $B$ as the set $[ a - p_x + 1, b - p_x + 1]$, and similarly for $B'$.
        }
        If so, recursively invoke the Decision procedure on inputs $N',\epsilon',r-1$, the outputs of the Query procedure in the recursive calls with $B$ and $B'$, and the positions of $R^{x,h}_k$ and $R^{x',0}_k$ inside $B$ and $B'$, respectively.

        \item If any recursive call rejects, then $h$ fails; continue to the next offset.
        Otherwise \textup{(}i.e., all recursive calls accepted\textup{)}, $h$ is successful; halt and accept.
    \end{enumerate}
    If every offset $h$ failed for both orderings of the strings, then reject.

\end{enumerate}
\end{itemize}
\end{algorithm}

\subsubsection{Query complexity}\label{sec:query_complex_recursive}

Let $Q_r(N, \epsilon, \eta)$ be the expected number of queries made by the Query procedure (Algorithm~\ref{alg:query}). 
Recall that there are $O(N/b)$ blocks.
Therefore, the expected number of blocks selected in Step~3a is $O(p\cdot N/b) = O\left(\sqrt{ \frac{L\cdot \log(N/\eta)}{\epsilon^3 \cdot N} } \cdot \frac{\epsilon \cdot N}{L}\right) = O\left(\sqrt{ \frac{N\cdot \log(N/\eta)}{\epsilon \cdot L} }\right)$.
Hence, 
$$
    Q_r(N, \epsilon, \eta) = O\left( \frac{N^2}{\epsilon^2\cdot L^2} 
    \cdot \log\left( \frac{N}{\eta} \right) + \sqrt{ \frac{N\cdot \log(N/\eta)}{\epsilon \cdot L} } \cdot Q_{r-1}\Big(O(L/\epsilon),\, \Omega(\epsilon),\, \Omega\!\left(\eta/N\right)\!\Big)\right) 
$$
where $Q_0(N, \epsilon, \eta) = N$.
Inductively, we have:
$$
    Q_r(N,\epsilon,\eta) \leq \left(1/\epsilon\right)^{O(r)} \cdot \log^{O(r)}\left(N/\eta\right)\cdot N^{\gamma_r} 
$$
where $\gamma_r$ is determined as follows.
We choose $L = L(N,r)$ so as to balance the terms $\frac{N^2}{L^2}$ and $\sqrt{\frac{N}{L}}\cdot L^{\gamma_{r-1}}$, which gives $L \eqdef N^{3/(3+2\gamma_{r-1})}$.
With this choice, $\gamma_r = 2-2\cdot 3/(3+2\gamma_{r-1})$. 
Together with $\gamma_0 = 1$, this yields $\gamma_r = \frac{1}{2-(3/4)^r}$.

Thus, for every constant $\delta$, we may set $r=O(\log(1/\delta))$, $\eta = 1/3$, and $N=n$ and get expected query complexity:
$$
    \left(1/\epsilon\right)^{O(\log(1/\delta))}
    \cdot
    \log^{O(\log(1/\delta))}(n)\cdot n^{1/2 + \delta}
$$
Since the above bound holds for every constant $\delta>0$, we may apply it with $\delta/2$ in place of $\delta$ and use $\log^{O(\log(1/\delta))}(n) = O(n^{\delta/2})$, giving us query complexity
$
    O(\left(1/\epsilon\right)^{O(\log(1/\delta))}\cdot n^{1/2 + \delta})
$.

\subsubsection{Correctness}\label{subsubsec:ub:actual:correctness}

\paragraph{Successful estimates.} We say that \textsf{all estimates are successful} if for each $x\in\{s,s'\}$ and all $i\in[n_x]$ (simultaneously) it holds that $|\hat{r}_x(i) - r_x(i)| \leq \Delta$.
Observe that by a straightforward extension of Claim~\ref{claim:successful_estimates}, we have: 

\begin{subclaim}\label{claim:successful_estimates_generalized}
    If in Step~2 of the Query procedure \textup{(}Algorithm~\ref{alg:query}\textup{)}, we set $T = C\cdot \frac{N^2}{\Delta^2} \cdot \log\big(N/\eta\big)$ for a sufficiently large constant $C$, then all estimates are successful with probability at least $1-\eta/2$.
\end{subclaim}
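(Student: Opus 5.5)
The plan mirrors the proof of Claim~\ref{claim:successful_estimates}, now working inside the ambient string $u$ of length $N$ on which the Query procedure was run. Fix $x\in\{s,s'\}$ and a position $i\in[n_x]$. The $T$ estimation samples $j_1,\ldots,j_T$ are independent and uniform over $[N]$, so the indicators $X^{x,i}_1,\ldots,X^{x,i}_T$ are i.i.d.\ Bernoulli. Their common success probability is the number of positions $j\in[p_x,p_x+i-1]$ with $u_j\neq *$, divided by $N$, which equals $r_x(i)/N$. It follows that $\hat r_x(i)=\frac{N}{T}\sum_t X^{x,i}_t$ is an unbiased estimator of $r_x(i)$. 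The normalization uses $N$ rather than $n_x$, and that is exactly why the samples from the whole block $u$ can be reused for any substring $s$ inside it.

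Applying Hoeffding's inequality to the mean of the indicators with deviation $\Delta/N$ gives
\[
\Pr\bigl[|\hat r_x(i)-r_x(i)|>\Delta\bigr]\le 2\exp\bigl(-2T(\Delta/N)^2\bigr).
\]
With $T=C\cdot\frac{N^2}{\Delta^2}\log(N/\eta)$ this is at most $2(\eta/N)^{2C}$. That in turn is at most $\eta/(4N)$ once $C$ is a sufficiently large constant, using $N\ge1$ and $\eta\le1$; the latter holds because $\eta=1/3$ initially and only shrinks in recursion.

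Finally, take a union bound over the two strings $x\in\{s,s'\}$ and all $i\in[n_x]$, where $n_x\le N$. There are at most $2N$ events in total, so the overall failure probability is at most $2N\cdot\eta/(4N)=\eta/2$.

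The only point requiring care is the indexing. The samples are drawn from $[N]$ relative to $u$, while the ranks are taken relative to the substring $s$ via the offset $p_x$. The Bernoulli parameter must be checked to equal exactly $r_x(i)/N$. This holds because $X^{x,i}_t$ is $1$ precisely when $j_t$ lands in the prefix $[p_x,p_x+i-1]$ of the substring at a non-star position. The other step to watch is that $\eta$ enters through the $\log(N/\eta)$ factor, which is what makes the union bound over $O(N)$ events close with failure probability $\eta/2$.
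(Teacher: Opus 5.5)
Your proof is correct and is exactly the ``straightforward extension'' of Claim~\ref{claim:successful_estimates} that the paper has in mind. It applies Hoeffding to the i.i.d.\ indicators with mean $r_x(i)/N$, normalizing by $N$ so that samples over all of $u$ serve any substring, and then takes a union bound over the at most $2N$ pairs $(x,i)$. One small precision: the step $2(\eta/N)^{2C}\le \eta/(4N)$ needs $\eta/N$ bounded away from $1$, not merely $\eta\le 1$ and $N\ge 1$; the bound $\eta\le 1/3$, which you do note holds, suffices.
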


\paragraph{Completeness:}

Consider any pair of strings $(s,s')$ such that $\textrm{Res}(s)$ and $\textrm{Res}(s')$ match up to boundary slack $0.1\cdot \epsilon\cdot n$, where $n \eqdef (|s|+|s'|)/2$.
We first show that the Decision procedure accepts such a pair with probability at least $1-\eta$.
Since (by Claim~\ref{claim:successful_estimates_generalized}) all estimates are successful with probability at least $1-\eta/2$, we assume all estimates are successful and show that under this assumption the Decision procedure accepts with probability at least $1-\eta/2$.

Since $\textrm{Res}(s)$ and $\textrm{Res}(s')$ match up to boundary slack $0.1\cdot \epsilon\cdot n$, there exists an ordering of the strings $(x,x') \in \{(s,s'), (s',s)\}$ and an offset $h\in\{0,1,\ldots,0.1\cdot \epsilon \cdot n\}$ such that, after deleting the first $h$ symbols from $\textrm{Res}(x)$, the two residual strings agree until one of them ends.
This means that, assuming the estimates are successful, in the iteration of Step 4 that considers this ordering $(x,x')$ and offset $h$, the residual strings in each pair of corresponding rank-segments $(R^{x,h}_k, R^{x',0}_k)$ match up to boundary slack 
$2\Delta = 2\cdot \alpha_3 \cdot \epsilon' \cdot L \leq 0.1\cdot \epsilon' \cdot (|R^{x,h}_k| + |R^{x',0}_k|)/2$ (where we take $\alpha_3$ to be sufficiently small, and rely on the fact that $\max\{|R^{x,h}_k|, |R^{x',0}_k|\} \geq L-2\Delta$).
By induction, each recursive call rejects with probability at most $\eta' = \eta/(2N)$. 
By a union bound over the at most $n/L=o(N)$ corresponding rank-segments, the Decision procedure rejects with probability at most $o(N)\cdot \eta' \leq \eta/2$, as desired.

Now consider the actual tester (Algorithm~\ref{alg:actual_tester}).
Since $\textrm{Res}(s)$ and $\textrm{Res}(s')$ match up to boundary slack $0.1\cdot \epsilon\cdot n$, if the estimates are successful the tester will not reject at Step~3 (when checking that $|\hat{r}_s(n) - \hat{r}_{s'}(n)| \leq 0.2\cdot \epsilon \cdot n + 2\Delta$).
By the previous paragraph, it follows that the tester accepts with probability at least $1-\eta = 2/3$, as desired.

\paragraph{Soundness:}

We first show soundness of the Decision procedure (Algorithm~\ref{alg:decision}), and derive the soundness of the tester (Algorithm~\ref{alg:actual_tester}) at the end of the section.
Throughout the following soundness analysis of the Decision procedure, fix a pair of strings $(s,s')$ and let $n=(|s|+|s'|)/2$.
Assume that one must modify more than $\epsilon\cdot n$ symbols in $(s,s')$ to make the two residual strings agree.
We also assume that $s$ and $s'$ satisfy both promises required by the Decision procedure; that is, $n\geq \frac{\alpha_2}{4\alpha_1}\cdot \epsilon\cdot N = \Omega(\epsilon\cdot N)$ and $||\textrm{Res}(s)|-|\textrm{Res}(s')||\leq 0.2\cdot\epsilon\cdot n+4\Delta$.

We say that a rank-segment $R^{x,h}_k$ is \textup{\textsf{short}} if $|R^{x,h}_k| \leq L/(\alpha_2 \cdot \epsilon) = b/2$; otherwise, we say $R^{x,h}_k$ is \textup{\textsf{long}}. 
For each ordering $(x,x')$ and each offset $h$ considered by Step~4, we call a pair of corresponding rank-segments $(R^{x,h}_k, R^{x',0}_k)$ \textsf{bad} if (1) both rank-segments are short; and (2) one must modify more than $\epsilon'\cdot (|R^{x,h}_k|+|R^{x',0}_k|)/2$ non-star positions in the restrictions of $x$ and $x'$ to these rank-segments in order to make the two residual strings agree.

\begin{subclaim}\label{claim:many_bad_blocks_generalized}
    Assume the estimates are successful. 
    Then, for each ordering of the strings $(x,x')\in\{(s,s'),(s',s)\}$ and every offset $h\leq 0.1\cdot\epsilon\cdot n$ considered by Step~4 of the Decision procedure, at least $\Omega(\epsilon\cdot n/L)$ of the corresponding rank-segments $(R^{x,h}_k,R^{x',0}_k)$, for $k\in[m_{(x,x'),h}/L]$, are bad.
\end{subclaim}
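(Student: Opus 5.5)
We argue by contradiction, following the proof of Claim~\ref{claim:many_bad_blocks}. Fix an ordering $(x,x')$ and an offset $h\leq 0.1\cdot\epsilon\cdot n$, and suppose fewer than $\beta\cdot\epsilon\cdot n/L$ of the pairs $(R^{x,h}_k,R^{x',0}_k)$, $k\in[m_{(x,x'),h}/L]$, are bad, for a sufficiently small constant $\beta>0$. We will exhibit a modification of at most $\epsilon\cdot n$ positions of $(s,s')$ that makes the residual strings agree, contradicting the assumption on $(s,s')$.

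The modification is done pair by pair, so the residuals are made to agree segment by segment and the results can be concatenated. For each non-bad pair in which both segments are short, by definition at most $\epsilon'\cdot(|R^{x,h}_k|+|R^{x',0}_k|)/2$ modifications make the two residual strings in the segments agree. For every other non-star symbol we replace it by a star. These are the symbols in bad pairs, in pairs with a long segment, in the first $h$ estimated ranks of $x$ (the prefix before $R^{x,h}_1$), and in the suffix after the last corresponding segment of either string. Since the segments of each string partition it consecutively, in rank order, the resulting residual strings are the concatenations of the agreeing per-pair residuals, hence equal.

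We then count the modifications.
\begin{enumerate}
\item \emph{Matched pairs.} These cost at most $\sum_k \epsilon'\cdot(|R^{x,h}_k|+|R^{x',0}_k|)/2\leq \epsilon'\cdot n=\alpha_1\cdot\epsilon\cdot n$, because the segments of each string are disjoint.
\item \emph{Long segments.} A long segment has length more than $L/(\alpha_2\epsilon)$, so each string has at most $\alpha_2\cdot\epsilon\cdot n_x/L$ of them. Each segment contains at most $L+2\Delta=O(L)$ non-stars, since the estimates are successful. Deleting both segments of every pair with a long segment therefore costs $O(\alpha_2\cdot\epsilon\cdot n)$; here $n_x=O(n)$, which follows from the definition $n=(n_s+n_{s'})/2$.
\item \emph{Bad pairs.} These cost $O(\beta\cdot\epsilon\cdot n)$.
\item \emph{Offset prefix.} This costs at most $h+\Delta\leq 0.1\cdot\epsilon\cdot n+\Delta$.
\item \emph{Suffix.} This is the main obstacle: the constant $0.1$ in the offset range and the promise $||\textrm{Res}(s)|-|\textrm{Res}(s')||\leq 0.2\cdot\epsilon\cdot n+4\Delta$ must be balanced against the budget $\epsilon n$. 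By the definition of $m_{(x,x'),h}$, the leftover suffix in one string has about $|r_x(n_x)-h-r_{x'}(n_{x'})|+O(L+\Delta)$ non-stars, and by the promise this is at most $0.3\cdot\epsilon\cdot n+O(\Delta+L)$. The $O(L)$ comes from the incomplete last segment, and $L=o(\epsilon n)$ because the promise gives $n=\Omega(\epsilon N)$ while $L$ is a small power of $N$.
\end{enumerate}

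Items 4 and 5 together total at most $0.4\cdot\epsilon\cdot n+O(\Delta+L)$, and items 1--3 total $O((\alpha_1+\alpha_2+\beta)\cdot\epsilon\cdot n)$. Since $\Delta=\alpha_3\epsilon' L$, the whole sum is at most $0.4\epsilon n+O((\alpha_1+\alpha_2+\beta)\epsilon n)+o(\epsilon n)$. Taking $\alpha_1,\alpha_2,\beta$ small makes this at most $\epsilon\cdot n$, the desired contradiction.

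If the generous $0.4$ turns out to be too loose once the exact constants are checked, the fallback is to avoid deleting the prefix and suffix and instead charge them directly against the promise on the residual lengths. Since $m_{(x,x'),h}\leq n$, the resulting count of bad pairs is indeed $\Omega(\epsilon\cdot n/L)$ relative to the number of pairs.
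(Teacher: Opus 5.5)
Your proof is correct and follows essentially the same route as the paper. You assume few bad pairs, fix the non-bad short pairs at total cost at most $\epsilon' n$, and turn every other non-star into a star: those in long or bad pairs and those outside the covered rank range. You then bound the prefix and suffix by $0.1\epsilon n$ and $0.3\epsilon n$ plus lower-order terms, for a total of at most $\epsilon n$.

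Your extra $O(L)$ term for the incomplete final segment is slightly more careful than the paper's suffix bound, and it is $o(\epsilon n)$ in the relevant regime. The closing fallback paragraph is unnecessary, since $0.4<1$ already leaves ample room.
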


\begin{subproof}    
Suppose, towards a contradiction, that for some ordering of the strings $(x,x')$ and some offset $h$, fewer than $\alpha_4\cdot\epsilon\cdot n/L$ corresponding pairs are bad, where $\alpha_4>0$ is a sufficiently small constant.
We show that we can modify at most $\epsilon\cdot n$ non-star symbols in $(s,s')$ so that the two residual strings agree, contradicting the hypothesis.

We modify $(s,s')$ as follows. 
For each pair of corresponding rank-segments $(R^{x,h}_k, R^{x',0}_k)$ that is not bad and in which both rank-segments in the pair are short,
we are guaranteed that the residual strings in the two segments can be made equal by modifying at most $\epsilon'\cdot (|R^{x,h}_k|+|R^{x',0}_k|)/2$ non-star positions. 
We make these modifications for all such pairs, and replace every non-star symbol outside these pairs by a star.
We bound the number of modified positions by considering the following cases.
(The following generalizes the argument in Claim~\ref{claim:many_bad_blocks}, where only the first and last cases are changed.)
\begin{itemize}[leftmargin=0.8cm]

    \item First, for each pair of corresponding rank-segments that is not bad and both rank-segments in the pair are short, we modify at most $\epsilon'\cdot (|R^{x,h}_k|+|R^{x',0}_k|)/2$ positions, where $\epsilon' = \alpha_1\cdot\epsilon$.
    Summing over all $k\in [m_{(x,x'), h}/L]$, we get that the total number of modifications is at most $\alpha_1\cdot\epsilon \cdot (|x|+|x'|)/2 = \alpha_1\cdot\epsilon\cdot n$. 

    \item Consider pairs in which at least one rank-segment is long.
    Recall that the length of each long rank-segment is greater than $L/(\alpha_2\cdot\epsilon)$.
    Hence, each string can have at most $O(\alpha_2\cdot\epsilon\cdot n/L)$ long rank-segments.
    Since the estimates are successful, every rank-segment has at most $L+2\Delta=O(L)$ non-star symbols. 
    Thus, replacing all non-star symbols in pairs in which at least one rank-segment is long contributes at most $O(\alpha_2\cdot\epsilon\cdot n)$ modifications.

    \item Next, consider the bad pairs. By our assumption, there are fewer than $\alpha_4\cdot\epsilon\cdot n/L$ bad pairs.
    Since each pair contains at most $O(L)$ non-star symbols, replacing all non-star symbols in the bad pairs contributes at most $O(\alpha_4\cdot\epsilon\cdot n)$ modifications.

    \item It remains to count the non-star symbols that lie before or after the range covered by corresponding rank-segments.
    The initial offset in the first residual string contributes at most $h+\Delta \leq 0.1\cdot\epsilon\cdot n + o(\epsilon\cdot n)$ non-star symbols.
    Furthermore, one of the two residual strings may have a suffix that lies after the range covered by the corresponding rank-segments.
    Recall that we are promised that $d \eqdef ||\textup{Res}(x)|-|\textup{Res}(x')||\leq 0.2\cdot\epsilon\cdot n + 4\Delta$.
    Together with the offset $h$ and the estimation errors, the suffix can be of length at most $d + h + 2\Delta \leq 0.3 \cdot\epsilon\cdot n + o(\epsilon\cdot n)$.
    Altogether, there are at most $0.4 \cdot\epsilon\cdot n + o(\epsilon\cdot n) \leq 0.5\cdot\epsilon\cdot n$ non-star positions outside the range of corresponding rank-segments.
\end{itemize}

Overall, the number of modified symbols is at most $O(\alpha_1 \cdot \epsilon \cdot n) + O(\alpha_2 \cdot \epsilon \cdot n) + O(\alpha_4 \cdot \epsilon \cdot n) + 0.5\cdot \epsilon\cdot n$.
Taking $\alpha_1$, $\alpha_2$ and $\alpha_4$ sufficiently small, this number is at most $\epsilon\cdot n$, as desired.
\end{subproof}

Next, similarly to Claim~\ref{claim:birthday}, we have: 

\begin{subclaim}\label{claim:birthday_generalized}
    Assume the estimates are successful. 
    Then, for every ordering of the strings $(x,x')$ and every offset $h$ considered by Step~4 of the Decision procedure, with probability at least $1-\eta/(2N)$, there exists a bad pair of corresponding rank-segments $(R^{x,h}_k,R^{x',0}_k)$ such that each rank-segment in the pair is contained in a sampled block.
\end{subclaim}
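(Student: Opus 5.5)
The plan follows the proof of Claim~\ref{claim:birthday}, with the sampling probability $p$ of Step~3a of the Query procedure replacing the earlier one, so that the failure probability becomes $\eta/(2N)$ instead of $0.01$. Fix an ordering $(x,x')$ and an offset $h$, and assume the estimates are successful. Claim~\ref{claim:many_bad_blocks_generalized} then supplies at least $\Omega(\epsilon\cdot n/L)$ bad pairs $(R^{x,h}_k,R^{x',0}_k)$. Each segment of a bad pair is short by definition, so its length is at most $L/(\alpha_2\cdot\epsilon)=b/2$. Since the covering uses blocks of length $b$ and spacing $b/2$, each such segment is contained in some block, and we assign one such block to it. Here blocks are positions in $u$ and $u'$; the segments are shifted by $p_x-1$, which is harmless.

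Next I bound how many segments can share a block. The rank-segments of $x$ with a fixed offset $h$ are pairwise disjoint. Each of them, except possibly the last, has true rank range at least $L-2\Delta$, and therefore length at least $L-2\Delta=\Omega(L)$. A block has length $b=O(L/\epsilon)$, so it is assigned to at most $O(1/\epsilon)$ segments of $x$, and likewise for $x'$. A greedy selection then yields a set $S$ of bad pairs whose assigned blocks are all distinct, both within and across the two strings. Selecting a pair excludes at most $O(1/\epsilon)$ others, so $|S|=\Omega(\epsilon^2\cdot n/L)$. Using the promise $n=\Omega(\epsilon\cdot N)$, this gives $|S|=\Omega(\epsilon^3\cdot N/L)$.

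The blocks of $u$ and $u'$ are sampled independently, each with probability $p$. For each pair in $S$, the event that both of its assigned blocks are selected has probability $p^2$, and these events are mutually independent because the blocks are distinct. Hence the probability that none of them occurs is at most
\[
(1-p^2)^{|S|}\le \exp\!\left(-\Omega\!\left(p^2\cdot \epsilon^3\cdot N/L\right)\right).
\]
With $p=C\sqrt{L\log(N/\eta)/(\epsilon^3 N)}$ and $C$ a sufficiently large constant, the exponent is at least $2\log(2N/\eta)$, so the failure probability is at most $\eta/(2N)$. If this value of $p$ exceeds $1$, every block is sampled and the claim is trivial.

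The main obstacle is the accounting of parameters. The $\epsilon$ in Claim~\ref{claim:many_bad_blocks_generalized} and in the definition of $b$ is the current level's $\epsilon$, and the count of bad pairs is in terms of $n$, whereas $p$ is in terms of $N$. The sampling probability must absorb the loss of $\epsilon^2$ from the distinctness step and a further factor $\epsilon$ from the promise $n=\Omega(\epsilon N)$; this explains the $\epsilon^3$ in $p$. A secondary point to verify is that the last rank-segment, or segments clipped by the offset $h$, may be shorter than $L-2\Delta$. There are only $O(1)$ such segments per string, so they change the multiplicity bound only by an additive constant.
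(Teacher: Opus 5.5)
Your proposal is correct and follows the paper's proof essentially step for step. The promise $n=\Omega(\epsilon N)$ turns the $\Omega(\epsilon n/L)$ bad pairs from Claim~\ref{claim:many_bad_blocks_generalized} into $\Omega(\epsilon^2 N/L)$. The bound of $O(1/\epsilon)$ segments per block then leaves $\Omega(\epsilon^3 N/L)$ pairs with distinct blocks, and independent sampling with $p=C\sqrt{L\log(N/\eta)/(\epsilon^3 N)}$ gives failure probability at most $\eta/(2N)$. Your extra remarks (containment via spacing $b/2$, the $O(1)$ boundary segments, the case $p>1$) only make explicit details that the paper leaves implicit.
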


\begin{subproof}
    Recall that we are promised that $n\geq \frac{\alpha_2}{4\alpha_1}\cdot \epsilon\cdot N = \Omega(\epsilon\cdot N)$.
    By Claim~\ref{claim:many_bad_blocks_generalized}, in every iteration of Step~4
    there are $\Omega(\epsilon \cdot n/L) = \Omega(\epsilon^2 \cdot N/L)$ bad pairs of rank-segments.
    As in Claim~\ref{claim:birthday}, it follows that there is a set of at least $\Omega(\epsilon^3 \cdot N/L)$ bad pairs of rank-segments that are contained in blocks that are \emph{distinct} across different pairs (because each block can contain at most $O(1/\epsilon)$ rank-segments).

    Recall that each block is sampled with probability $p = C \cdot \sqrt{ \frac{L\cdot \log(N/\eta)}{\epsilon^3 \cdot N} }$, where $C$ is a sufficiently large constant.
    Thus, similarly to the proof of Claim~\ref{claim:birthday}, the probability that in a fixed iteration of Step~4 there is no bad pair for which each rank-segment is contained in a sampled block is at most:
    $$
        \left(1 - p^2\right)^{\Omega\left(\epsilon^3 \cdot N/L\right)} 
        \leq e^{-\Omega\left(p^2 \cdot \epsilon^3 \cdot N/L\right)} = e^{-\Omega(C^2 \cdot \log(N/\eta))} \leq \eta/(2N),
    $$
    as desired.
\end{subproof}

\paragraph{\textnormal{\textsf{Concluding the soundness of the Decision procedure}}.}
By Claim~\ref{claim:successful_estimates_generalized}, the probability that the estimates are unsuccessful is at most $\eta/2$.
Thus, we assume the estimates are successful and show that, under this assumption, the Decision procedure accepts $(s,s')$ with probability at most $\eta/2$.
Consider any iteration of Step~4, where the ordering of the strings is $(x,x')$ and the offset is $h$. 
The iteration can falsely accept $(s,s')$ either because there was no bad pair for which both rank-segments were contained in a sampled block, or because a recursive call on a bad pair falsely accepted.
By Claim~\ref{claim:birthday_generalized}, the probability of the former event is at most $\eta/(2N)$.
As for the latter event, consider any bad pair $(R^{x,h}_k,R^{x',0}_k)$ and let $y$ and $y'$ be the restriction of $x$ to $R^{x,h}_k$ and the restriction of $x'$ to $R^{x',0}_k$, respectively.
Note that since the estimates are successful, it holds that the number of non-star symbols in both $y$ and $y'$ is $L\pm 2\Delta$, which implies that the promises needed for the Decision procedure hold for the pair $(y,y')$:
Denoting $n' \eqdef (|y|+|y'|)/2$, it holds that $n'\geq \frac{\alpha_2}{4\alpha_1}\cdot \epsilon'\cdot N' = \Omega(\epsilon'\cdot N')$, because $n' \geq L - 2\Delta \geq L/2$ whereas $N' = b = 2L/(\alpha_2\cdot \epsilon)$ and $\epsilon' = \alpha_1\cdot \epsilon$. 
Furthermore, we have $||\textup{Res}(y)|-|\textup{Res}(y')||\leq 4\Delta = 4\cdot \alpha_3 \cdot \epsilon' \cdot L \leq 0.2\cdot \epsilon'\cdot n'$
(where we take $\alpha_3$ to be sufficiently small, and use $n' \geq L-2\Delta$).
Thus, both promises are satisfied.
Therefore, by induction, the probability that a recursive call on $(y, y')$ falsely accepts is at most $\eta' = \eta/(2N)$. 
It follows that each iteration falsely accepts with probability at most $\eta/(2N)+\eta/(2N) =\eta/N$.
By a union bound over the at most $2\cdot 0.1\cdot \epsilon \cdot n < n/2 \leq N/2$ iterations, the probability that the Decision procedure falsely accepts $(s,s')$ is at most $\eta/2$, as desired. 

\paragraph{\textnormal{\textsf{Concluding the soundness of the tester}}.}

Consider any pair $(s,s')\in \{0,1,*\}^n\times \{0,1,*\}^n$ that is $\epsilon$-far from \texttt{ResStringEq}.
Assume that the estimates are successful and that the tester (i.e., Algorithm~\ref{alg:actual_tester}) has not rejected at Step~3 (where it checks that $|\hat{r}_s(n)-\hat{r}_{s'}(n)|\leq 0.2\cdot\epsilon\cdot n+2\Delta$). 
Then, when the tester invokes the Decision procedure, both the required promises on the inputs to the procedure are satisfied.
Indeed, in the initial invocation we have $N=n$, and hence clearly 
$n\geq \frac{\alpha_2}{4\alpha_1}\cdot \epsilon\cdot N = \Omega(\epsilon\cdot N)$ (assuming $\alpha_2$ is sufficiently small).
Furthermore, the check at Step~3, together with the successful estimates, implies that $||\textup{Res}(s)|-|\textup{Res}(s')||\leq 0.2\cdot\epsilon\cdot n+4\Delta$.
By the previous paragraph, it follows that the tester rejects with probability at least $1-\eta = 2/3$, as desired.

\subsection{On the adaptive tester for $\texttt{ResStringEq}$}

By composing the basic $\widetilde{O}(n^{2/3})$-query adaptive tester (Algorithm~\ref{alg:basic_adaptive}) with the recursive $O(n^{1/2+\delta})$-query non-adaptive tester shown above \textup{(}Algorithm~\ref{alg:actual_tester}\textup{)}, we obtain an $O(n^{2/5+\delta})$-query \emph{adaptive} tester for \texttt{ResStringEq}, analogous to the tester of~\cite{FMS18} for the \texttt{Dyck} languages.
Recall that Algorithm~\ref{alg:actual_tester} accepts inputs even if they match \emph{up to a boundary slack}. 
Hence, we can replace Steps~5c and~5d of Algorithm~\ref{alg:basic_adaptive} with a call to Algorithm~\ref{alg:actual_tester}.
The recursive call uses the proximity parameter $\epsilon'$ defined in Algorithm~\ref{alg:query}, and the tester will use the definition of $\Delta$ in Algorithm~\ref{alg:query}.
The resulting tester uses $\widetilde{O}(n^2/(\epsilon^2\cdot L^2))$ queries for obtaining the estimates, and $O(\left(1/\epsilon\right)^{O(\log(1/\delta))} \cdot L^{1/2 + \delta})$ queries for the  boundary-slack matching checks. 
Setting $L = n^{4/5}$ gives us query complexity $O(\left(1/\epsilon\right)^{O(\log(1/\delta))} \cdot n^{2/5 + \delta})$.
The correctness analysis of this tester follows straightforwardly from the correctness analysis in Section~\ref{subsubsec:ub:actual:correctness}.
Thus, we obtain:

\begin{theorem}\label{thm:upper_bound:adaptive}
    For any constant $\delta>0$, there exists an \textup{(}adaptive\textup{)} tester for \textup{\texttt{ResStringEq}} with query complexity 
    $O(\left(1/\epsilon\right)^{O(\log(1/\delta))} \cdot n^{2/5 + \delta})$.
\end{theorem}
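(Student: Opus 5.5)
The plan is to replace Steps~5c and~5d of Algorithm~\ref{alg:basic_adaptive} by a call to the recursive non-adaptive procedure of Section~\ref{subsec:upper_bound:proof:actual}. Concretely, the tester first runs Steps~2--4 of Algorithm~\ref{alg:basic_adaptive} with $L \eqdef n^{4/5}$ and with $\Delta \eqdef \alpha_3\cdot\epsilon'\cdot L$, where $\epsilon'=\alpha_1\cdot\epsilon$ as in Algorithm~\ref{alg:query}. This gives estimates that are successful with probability $0.99$ by Claim~\ref{claim:successful_estimates}, using $T=\widetilde{O}(n^2/(\epsilon^2L^2))$ queries. Then, for each of $O(1/\epsilon)$ uniformly chosen indices $k$ with both $R^s_k,R^{s'}_k$ short, it runs the Query procedure (Algorithm~\ref{alg:query}) on the two segments, with $N$ the segment length, proximity $\epsilon'$, failure parameter $\eta=\Theta(\epsilon)$ and $r=O(\log(1/\delta))$ rounds. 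Finally it runs the Decision procedure with endpoints equal to the whole segments and rejects if it rejects.

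For completeness, I would assume successful estimates and note, as in Section~\ref{subsubsec:upper_bound:proof:warm_up1:correctness}, that corresponding segments of a \textsc{yes} instance have residual strings matching up to boundary slack $2\Delta$. Since $2\Delta\le 0.1\cdot\epsilon'\cdot n'$, where $n'$ is the average segment length (at least $L-2\Delta$), the accept guarantee of the Decision procedure applies. I also need the two promises on the call. The promise $n'\ge\Omega(\epsilon' N)$ is trivial since $N=n'$ up to a constant. The promise on the residual-length difference holds because it is at most $4\Delta\le 0.2\cdot\epsilon' n'$. A union bound over the $O(1/\epsilon)$ calls, each failing with probability $\eta$, finishes completeness.

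For soundness, I would redo Claim~\ref{claim:many_bad_blocks} with ``bad'' redefined as in Claim~\ref{claim:many_bad_blocks_generalized}: both segments are short, and more than $\epsilon'\cdot n'$ modifications are needed to make their residual strings agree. The modification argument then goes through verbatim. Pairs that are not bad cost at most $\alpha_1\epsilon n$ in total, while long pairs, bad pairs and the suffix cost as before. Hence an $\Omega(\epsilon)$ fraction of the indices are bad, and some sampled $k$ is bad with probability $0.9$. On a bad pair the promises hold by the same computation, so the Decision procedure rejects with probability $1-\eta$.

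For the query count, each call costs $(1/\epsilon)^{O(\log(1/\delta))}\cdot L^{1/2+\delta}$ by the analysis in Section~\ref{sec:query_complex_recursive}, with segment length $O(L/\epsilon)$. With $L=n^{4/5}$ both terms become $n^{2/5}$ up to the stated factors, and the polylog factors are absorbed by using $\delta/2$. The main point needing care is the parameter bookkeeping: the Decision procedure's slack tolerance $0.1\cdot\epsilon' n'$ must dominate $2\Delta$, while $\Delta$ stays $\Theta(\epsilon L)$ so the estimation cost is unchanged. This is what dictates taking $\Delta$ from Algorithm~\ref{alg:query} rather than from Algorithm~\ref{alg:basic_adaptive}.
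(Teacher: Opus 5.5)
Your proposal is correct and follows essentially the same route as the paper: replace Steps~5c--5d of Algorithm~\ref{alg:basic_adaptive} by the recursive Query/Decision procedures run with proximity $\epsilon'$, take $\Delta$ from Algorithm~\ref{alg:query}, and set $L=n^{4/5}$. Your choice $\eta=\Theta(\epsilon)$, which is needed for the union bound over the $O(1/\epsilon)$ calls, and your explicit check of the Decision procedure's promises fill in details the paper leaves as ``straightforward.'' One small fix: the two segments may differ in length, so run the Query procedure on windows of a common length $N=O(L/\epsilon)$ that contain them.
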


As discussed in Section~\ref{subsec:intro:our_results}, compared with the tester obtained by combining the reduction of~\cite{FMS18} with their tester for $\texttt{Dyck}_2$, the tester in Theorem~\ref{thm:upper_bound:adaptive} improves the dependence of the query complexity on $\epsilon$ from $\left(1/\epsilon\right)^{O(\textup{poly}(1/\delta))}$ to $\left(1/\epsilon\right)^{O(\log(1/\delta))}$.

%% file: I_ub_dyck.tex
\section{Extending the non-adaptive tester to the \texttt{Dyck} languages}\label{subsec:dyck_extension}

In this section, we extend the non-adaptive tester presented for \texttt{ResStringEq} in the previous section to the $\texttt{Dyck}$ languages. Specifically, we establish:

\begin{theorem}\label{thm:upper_bound_dyck}
    For every $m\in \mathbb{N}$, 
    and for any \textup{(}arbitrarily small\textup{)} constant $\delta>0$, there exists a \emph{non-adaptive} tester for $\textup{\texttt{Dyck}}_m$ with query complexity $O(\left(1/\epsilon\right)^{O(\log(1/\delta))} \cdot n^{1/2 + \delta})$.
\end{theorem}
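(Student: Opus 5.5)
The plan is to follow the structure of the FMS18 Dyck tester, with its adaptive substring-matching subroutine replaced by the non-adaptive recursive Query/Decision machinery of Algorithm~\ref{alg:actual_tester}.

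\textbf{Reduction to matching checks.} Start from the standard decomposition of a parenthesis string. Consider the height profile $H(i) \eqdef \#\text{opening} - \#\text{closing}$ over the prefix $x_1,\ldots,x_i$. A string $w\in\{\text{parentheses}\}^n$ lies in $\texttt{Dyck}_m$ if and only if two conditions hold:
\begin{itemize}
\item $H$ never goes negative and ends at $0$; and
\item every opening parenthesis at height $h$ is matched with the first subsequent closing parenthesis returning to height $h-1$, and the two have the same type.
\end{itemize}
The first condition is exactly $\texttt{Dyck}_1$. It can be tested non-adaptively with $\widetilde{O}(1/\epsilon^2)$ queries: the AKNS01-type tester samples random positions and estimates heights, and I would check that it is (or can be made) non-adaptive by sampling uniform positions and estimating $H$ at all points, exactly as the rank estimates in Claim~\ref{claim:successful_estimates_generalized}.

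\textbf{Height-segments and the type-consistency check.} For the second condition, partition the string into windows and, within each window, pair up a descending part (a run of closing parentheses read right to left) with the matching ascending part to its left. Treating ``matched'' parentheses as stars, consistency of types amounts to a $\texttt{ResStringEq}$-type comparison. Concretely, the two strings to compare are
\begin{itemize}
\item the sequence of unmatched opening types in a left segment, and
\item the reversed sequence of unmatched closing types in a right segment.
\end{itemize}
Here ``unmatched within the segment'' plays the role of non-star symbols, and the height plays the role of rank. As in Algorithm~\ref{alg:actual_tester}, I would estimate heights (the analogue of $\hat r_x$) from $\widetilde{O}(N^2/\Delta^2)$ uniform samples. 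I would then define \emph{height-segments}: maximal groups of positions whose estimated height lies in $((k-1)L,kL]$, taken on the ascending and descending sides of each valley/peak. Comparing corresponding height-segments up to boundary slack is then handled by the recursive Decision procedure. The remaining pieces carry over directly:
\begin{itemize}
\item blocks of length $O(L/\epsilon)$ pre-sampled with probability $\widetilde{O}(\sqrt{L/(\epsilon^3 N)})$ guarantee non-adaptivity;
\item the birthday argument of Claim~\ref{claim:birthday_generalized} applies verbatim;
\item completeness follows because a segment pair of a Dyck word reduces to strings whose residuals agree up to boundary slack $2\Delta$.
\end{itemize}

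\textbf{Main obstacle: soundness.} I expect soundness, the analogue of Claim~\ref{claim:many_bad_blocks_generalized}, to be the hardest step. One must show that if $w$ is $\epsilon$-far from $\texttt{Dyck}_m$ and passes the $\texttt{Dyck}_1$ test, then $\Omega(\epsilon n/L)$ segment pairs are bad. Unlike $\texttt{ResStringEq}$, the matching structure is nested: a segment on the descending side may match pieces of several ascending segments. I would first try the FMS18 accounting. Repair the string by
\begin{itemize}
\item fixing the height profile to be Dyck at cost $O(\epsilon n)$ (using the $\texttt{Dyck}_1$ distance characterization), and then
\item repairing each good pair with its $\epsilon'$-fraction of edits while deleting (by replacing with balanced pairs) all long or bad segments, at cost $O(L)$ each.
\end{itemize}
This contradicts farness provided that the number of height-segment pairs is $O(n/L)$. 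That count follows because each height level $kL$ is crossed at most $O(n/L)$ times in total by the amortization ``total descent = total ascent $\le n$''. The modifications must respect the parenthesis encoding, which I would handle by replacing mismatched symbols with locally balanced pairs rather than stars. The query complexity recursion is the same as in Section~\ref{sec:query_complex_recursive}, giving $O((1/\epsilon)^{O(\log(1/\delta))}n^{1/2+\delta})$.
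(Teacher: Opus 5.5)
\textbf{There is a genuine gap.} Your tester never checks the types of pairs that are matched at short range.

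Because you treat parentheses matched inside a segment as stars, every comparison you perform involves only the excess parentheses of segments. Consider $w=(\,]\,(\,]\cdots(\,]$:
\begin{itemize}
\item $\mu(w)\in\texttt{Dyck}_1$;
\item the height profile only oscillates between $0$ and $1$, so there are no height-$L$ segments to pair;
\item every segment has $O(1)$ excess parentheses.
\end{itemize}
So your procedure accepts $w$. Yet in any $\texttt{Dyck}_2$ word an adjacent $(\,]$ is impossible, so every one of the $n/2$ adjacent pairs must be modified. Hence $w$ is at distance at least $n/2$ from $\texttt{Dyck}_2$. Your repair argument fails at exactly this point: it never accounts for internally matched pairs with inconsistent types.

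The paper, following FMS18, handles this with a second, separate recursion. Besides comparing excess parentheses across blocks, Algorithm~\ref{alg:dyck_tester} selects $O(\epsilon^{-1}\log(1/\eta))$ random blocks of length $b=n^{3/4}$. It recursively tests each one for $\texttt{Dyck}_m$-consistency; after three rounds the blocks are read in full. Soundness then combines Claim~\ref{claim:locally_excess_matching} (the locally excess parentheses admit an almost-complete type-consistent non-crossing matching) with the FMS18 argument that most blocks are close to $\texttt{Dyck}_m$-consistent. Relatedly, you should reduce to $\texttt{Dyck}_m$-\texttt{consistency} via Lemma~\ref{lem:dyck_consistency_reduction}, rather than to a bare type check on top of $\texttt{Dyck}_1$.

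Second, the height-segment organization does not work as stated. Positions whose estimated height lies in $((k-1)L,kL]$ do not form a well-defined family of intervals to pair. Your counting claim is also false: a single level $kL$ can be crossed $\Theta(n)$ times, since the profile may oscillate around it. Summing over levels only bounds the total number of crossings by $n$, not $n/L$, unless hysteresis or fixed blocks are introduced. The nested structure you flag is precisely what needs a robust handle.

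The paper gets this handle from fixed blocks and the PRR03 matching graph:
\begin{itemize}
\item the approximate matching intervals $\hat{I}_{i,j},\hat{I}_{j,i}$ are computed from estimated excess numbers (Claim~\ref{claim:matching_intervals});
\item planarity bounds the number of interacting block pairs by $3n/b$ (Claim~\ref{claim:planar});
\item inside the excess-matching procedure, segment endpoints are chosen with $\hat{e}_1(s_{>\ell_k})\le\Delta$, so that each segment carries only $O(\Delta)$ parentheses that are excess in the segment but not in $s$;
\item offsets are skipped in one sequence and searched in the other, to allow slack at the beginning of both.
\end{itemize}
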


The tester closely follows the construction of~\cite{FMS18}, except that we modify their construction to be non-adaptive using an approach very similar to the one we used for the non-adaptive \texttt{ResStringEq} tester.
We rely on several notions developed in~\cite{PRR03} and~\cite{FMS18}, which we briefly survey in the following subsection. 

\subsection{Preliminaries}

\paragraph{Notation.} 
Throughout, we let $\Sigma_m$ denote an alphabet of parentheses of $m$ types.
For a string $s$, we let $s_{\geq i}$ and $s_{>i}$ denote the suffixes of $s$ starting at positions $i$ and $i+1$, respectively.
Similarly, let $s_{\leq i}$ and $s_{<i}$ denote the prefixes of $s$ ending at positions $i$ and $i-1$, respectively.

\paragraph{A reduction to $\textup{\texttt{Dyck}}_m$-\textup{\texttt{consistency}}.}
We say that a string is $\texttt{Dyck}_m$-\textsf{consistent} if it is a consecutive substring of some string in $\texttt{Dyck}_m$.
We define the property $\texttt{Dyck}_m$-\texttt{consistency} to be the set of all $\texttt{Dyck}_m$-consistent strings.
The following reduction was implicit in the work of~\cite{PRR03}, and was established explicitly in~\cite[Lem.~2.4]{FMS18}. 

\begin{lemma}\label{lem:dyck_consistency_reduction}
    If there exists an $\epsilon$-tester for $\textup{\texttt{Dyck}}_m$-\textup{\texttt{consistency}} with query complexity $q$, then there exists a $\Theta(\epsilon)$-tester for $\textup{\texttt{Dyck}}_m$ with query complexity $O(q + \epsilon^{-2}\cdot\log(1/\epsilon))$.
    Furthermore, if the $\textup{\texttt{Dyck}}_m$-\textup{\texttt{consistency}} tester is non-adaptive, then the resulting $\textup{\texttt{Dyck}}_m$ tester is non-adaptive as well.
\end{lemma}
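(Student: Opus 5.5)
We prove Lemma~\ref{lem:dyck_consistency_reduction} by building a $\texttt{Dyck}_m$ tester from two independent components. The first component runs the given $\texttt{Dyck}_m$-\texttt{consistency} tester with proximity parameter $\epsilon$. The second component checks \emph{global balance}. A string is in $\texttt{Dyck}_m$ if and only if it is $\texttt{Dyck}_m$-consistent and its ``height'' profile behaves well. Write $h(i)$ for the number of opening symbols minus the number of closing symbols in the prefix $s_{\leq i}$. The profile behaves well if $h(n)=0$ and $h(i)\ge 0$ for all $i$, i.e., no unmatched closing or opening parentheses remain once consistency holds. The tester accepts if and only if both components accept. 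For the second component we estimate $h$ at $O(1/\epsilon)$ evenly spaced positions from $O(\epsilon^{-2}\log(1/\epsilon))$ uniform samples; each estimate has additive error $O(\epsilon n)$, with a union bound over the $O(1/\epsilon)$ checkpoints. We reject if some estimated height is below $-c\epsilon n$ or if the final estimated height exceeds $c\epsilon n$ in absolute value. The samples are uniform and the consistency tester is invoked as a black box, so the combined tester is non-adaptive whenever the consistency tester is. The query complexity is $O(q+\epsilon^{-2}\log(1/\epsilon))$.

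Completeness is immediate. A string in $\texttt{Dyck}_m$ is consistent, has $h\ge 0$ everywhere, and has $h(n)=0$. Hence the consistency tester accepts with probability at least $2/3$, and the sampling check passes with high probability. Amplify each component's success probability to $5/6$ so that a union bound gives overall correctness $2/3$.

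For soundness, let $s$ be $\Theta(\epsilon)$-far from $\texttt{Dyck}_m$. We prove the contrapositive: if $s$ is $\epsilon$-close to some consistent string $s'$, and the height profile of $s$ passes the approximate check, then $s$ is $O(\epsilon)$-close to $\texttt{Dyck}_m$. First, $s$ and $s'$ differ in at most $\epsilon n$ positions, and each change shifts heights by at most $2$, so the height profiles of $s$ and $s'$ differ pointwise by $O(\epsilon n)$. Therefore $s'$ also has $\min_i h'(i)\ge -O(\epsilon n)$ and $|h'(n)|=O(\epsilon n)$. Interpolation between the checkpoints costs only $O(\epsilon n)$, since heights change by $1$ per position. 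Next, a consistent string reduces under cancellation to $)^{a}\,(^{b}$, up to types. Here $a=-\min_i h'(i)$ and $b=h'(n)+a$, so both are $O(\epsilon n)$. Flipping the first $\lceil a/2\rceil$ unmatched closers into openers and the last $\lceil b/2\rceil$ unmatched openers into closers, choosing types to pair them up, yields a string in $\texttt{Dyck}_m$; this needs parity adjustments, which cost $O(1)$ extra changes. The total cost is $O(\epsilon n)$, and choosing the constant in $\Theta(\epsilon)$ appropriately gives the contradiction.

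The main obstacle is the repair step. We must argue that flipping unmatched parentheses in $s'$ actually produces a balanced string. Care is needed because flipping an unmatched closer into an opener may interact with the matched structure. The natural fix is to pair the $k$-th and $(a+1-k)$-th unmatched closers, where the unmatched closers all sit at the outermost level, so the matched segments between them are already balanced. We handle the unmatched openers symmetrically, and absorb odd leftovers with a single change at an endpoint.
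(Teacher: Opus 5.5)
Your proposal is correct and follows essentially the paper's route. The paper gives no proof of its own: its $\texttt{Dyck}_m$ tester runs the consistency tester together with the non-adaptive $\texttt{Dyck}_1$ tester of \cite{AKNS01} on the type-stripped string, and it cites \cite[Lem.~2.4]{FMS18} for why passing both implies closeness to $\texttt{Dyck}_m$. You instead inline the $\texttt{Dyck}_1$ part as a direct prefix-height estimate and prove the combining step yourself by repairing the $)^{a}(^{b}$ residue.

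One tidy-up in the odd case. Odd $n$ can be rejected outright. For even $n$, $a+b$ is even, so flipping $\lceil a/2\rceil$ closers and $\lceil b/2\rceil$ openers, with types set by the resulting stack matching, already balances the string and needs no extra endpoint change. Your ``$k$-th with $(a+1-k)$-th'' pairing is literally correct only for even $a$.
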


The furthermore-claim (in the above lemma) holds since the tester for $\texttt{Dyck}_m$ simply invokes the tester for $\texttt{Dyck}_m$-\texttt{consistency} as well as the tester for $\texttt{Dyck}_1$ of~\cite{AKNS01}, and the latter tester is non-adaptive. 
We will thus focus on showing a non-adaptive tester for $\texttt{Dyck}_m$-\texttt{consistency}.

\paragraph{Excess parentheses~{\cite[Def.~6]{PRR03}}.}
Intuitively, the excess parentheses of a substring $s$ over $\Sigma_m$ are the parentheses in $s$ that must match with parentheses outside $s$. Formally, let $\Sigma_1 = \{0,1\}$ be the alphabet over one parenthesis type, where $0$ denotes the opening parenthesis and $1$ denotes the closing parenthesis.
Let $s$ be a string over $\Sigma_m$, and let $\mu(s)$ be the string obtained from $s$ by ``removing the types of the parentheses''; that is, by mapping every opening parenthesis to $0$ and every closing parenthesis to $1$.
Let $k$ and $\ell$ be the smallest integers such that $0^k\mu(s)1^{\ell} \in \textup{\texttt{Dyck}}_1$.
Then $k$ is called the \textup{\textsf{excess number of closing parentheses}} in $s$, and $\ell$ is called the \textup{\textsf{excess number of opening parentheses}} in $s$. We call $k$ and $\ell$ the \textup{\textsf{excess numbers}} of $s$. 
We denote $e_1(s) \eqdef k$ and  $e_0(s) \eqdef \ell$.
Let $n_0(s)$ and $n_1(s)$ denote the number of opening and closing parentheses in $s$, respectively. 
The excess numbers can be computed as follows.
\begin{equation}\label{eq:excess}
\begin{aligned}
    & e_1(s) \,=\,  \max_{s'\text{ prefix of }s} \left( n_1(s') - n_0(s')\right) \\
    & e_0(s) \,=\,  \max_{s'\text{ suffix of }s} \left( n_0(s') - n_1(s')\right)
\end{aligned}
\end{equation}

\begin{claim}\label{claim:successful_estimates_excess}
    With $O\big(\frac{n^2}{\Delta^2} \cdot \log(n/\eta)\big)$ uniform queries to a string $x\in \Sigma^n_m$, one can obtain estimates of the excess numbers of every consecutive substring of $x$ \textup{(}simultaneously\textup{)} up to an additive deviation of $\Delta$, with success probability $1-\eta$.
\end{claim}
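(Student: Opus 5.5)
The plan is to reduce the estimation of excess numbers for all substrings to the estimation of a single family of prefix statistics, and to obtain those statistics with the sampling argument of Claim~\ref{claim:successful_estimates_generalized}.

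For $i\in\{0,\ldots,n\}$, define the height function
$$
H(i) \eqdef n_0(x_{\leq i}) - n_1(x_{\leq i}),
$$
with $H(0)=0$. For a substring $x[a..b]$, the prefixes of $x[a..b]$ are the strings $x[a..j]$ for $j\in[a-1,b]$. (We allow the empty prefix; this is harmless, since $e_1\geq 0$ anyway.) By Eq.~\eqref{eq:excess},
$$
e_1(x[a..b]) = \max_{a-1\leq j\leq b}\bigl(H(a-1)-H(j)\bigr),
$$
$$
e_0(x[a..b]) = \max_{a-1\leq j\leq b}\bigl(H(b)-H(j)\bigr).
$$
Hence, if we have estimates $\hat H(i)$ satisfying $|\hat H(i)-H(i)|\leq \Delta/2$ for every $i$ simultaneously, the plug-in estimates $\hat e_1,\hat e_0$ (the same formulas with $\hat H$ in place of $H$) are within $\Delta$ of the true values for every substring at once. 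The point is that a maximum of functions each perturbed by at most $\Delta$ is itself perturbed by at most $\Delta$. No union bound over the $O(n^2)$ substrings is needed; only the $n+1$ prefix estimates must be good.

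To estimate $H(i)$, we sample $T=C\cdot\frac{n^2}{\Delta^2}\log(n/\eta)$ uniform positions $j_1,\ldots,j_T$. We set $Y^i_t=1$ if $j_t\leq i$ and $x_{j_t}$ is opening, $Y^i_t=-1$ if $j_t\leq i$ and $x_{j_t}$ is closing, and $Y^i_t=0$ otherwise. We then take $\hat H(i)=\frac nT\sum_t Y^i_t$. The summands are i.i.d., take values in $[-1,1]$, and have mean $H(i)/n$. By Hoeffding, the probability of deviating by more than $\Delta/2$ is at most $2\exp(-\Omega(T\Delta^2/n^2))$, which is at most $\eta/(n+1)$ for $C$ large enough. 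A union bound over the $n+1$ values of $i$ then gives overall success probability $1-\eta$. This is exactly the argument of Claim~\ref{claim:successful_estimates}, with $\pm1$ indicators in place of $0/1$ indicators. The queries are uniform and non-adaptive, as the statement requires.

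I do not expect a real obstacle. The only step needing care is the observation that the excess numbers are Lipschitz, in the sup norm, functions of the prefix heights. This is what lets a union bound over $n+1$ events suffice instead of one over all substrings. (A union bound over $O(n^2)$ substrings would only change the logarithmic factor anyway.)
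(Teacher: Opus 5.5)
Your proposal is correct and follows essentially the paper's route: estimate prefix statistics from $O\big(\frac{n^2}{\Delta^2}\log(n/\eta)\big)$ uniform samples, take a union bound over only the $O(n)$ prefixes, and plug the estimates into the max formulas for $e_0$ and $e_1$. The only difference is cosmetic: you estimate the signed height $H(i)=n_0-n_1$ directly to within $\Delta/2$ using $\pm1$ indicators, whereas the paper estimates $n_0$ and $n_1$ of each prefix separately to within $\Delta/4$ and subtracts to get substring counts within $\Delta/2$ (your explicit inclusion of the empty prefix/suffix, which keeps the max formulas nonnegative, is a nice touch).
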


\begin{proof}
    As in Claim~\ref{claim:successful_estimates} and~\ref{claim:successful_estimates_generalized}, $O\big(\frac{n^2}{\Delta^2} \cdot \log(n/\eta)\big)$ uniform queries suffice to estimate $n_0(s)$ and $n_1(s)$ for every \emph{prefix} $s$ of $x$ up to an additive deviation of $\Delta/4$, where all the estimates are successful simultaneously with probability at least $1-\eta$.
    For every consecutive substring $s$ of $x$, subtracting the estimates for the prefix ending immediately before $s$ and the prefix ending at the last position of $s$ gives estimates of $n_0(s)$ and $n_1(s)$ up to an additive deviation of $\Delta/2$.
    Using these estimates as substitutes in Eq.~\eqref{eq:excess}, we get an estimate of the excess numbers of any consecutive substring of $x$, up to deviation $\Delta$.
\end{proof}

\paragraph{The matching graph~\cite[Sec.~3.4]{PRR03}.}
For each string $x$ such that $\mu(x)\in \texttt{Dyck}_1$, there exists a unique perfect matching between opening and closing parentheses in $x$, such that each opening parenthesis is matched to a closing parenthesis, and such that the matching is ``non-crossing''.
By \textsf{non-crossing} we mean that if $x_{j_1}$ is matched to $x_{k_1}$, and $x_{j_2}$ to $x_{k_2}$ where $j_1 < k_1$, and $j_1 < j_2 < k_2$, then either $k_1 < j_2$ or $k_1 > k_2$. 
We denote this unique perfect matching by $M(x)$.
This definition is also extended to strings
$x$ for which $\mu(x)\notin \texttt{Dyck}_1$ as follows: 
Let $\tilde{x} = 0^{e_1(x)}\mu(x)1^{e_0(x)} \in \texttt{Dyck}_1$. 
We let $M(x)$ be the restriction of $M(\tilde{x})$ to pairs of parentheses that are both in $x$.

We define the \textsf{matching graph} of $x$ as follows.
Let $b$ be a parameter we will specify later. 
We partition $x$ into $n/b$ consecutive blocks of length $b$.
The vertices of the matching graph are the blocks of $x$. 
Two blocks $i<j$ are connected by an edge $(i,j)$ if and only if the matching $M(x)$ matches between excess opening parentheses in block $i$ and excess closing parentheses in block $j$.
The weight $w(i, j)$ of the edge $(i, j)$ is the number of excess opening parentheses in block $i$ that are matched by $M(x)$ to excess closing parentheses in block $j$.
By the non-crossing property of the matching $M(x)$, it follows that: 

\begin{claim}[{\cite[Claim~9]{PRR03}}]\label{claim:planar}
    For every string $x$ over $\Sigma_m$, the matching graph of $x$ is planar, and therefore has at most $3n/b$ edges. 
\end{claim}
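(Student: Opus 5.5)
The plan is to exploit the non-crossing structure of $M(x)$ and reduce the claim to the standard fact that a simple outerplanar graph on $V$ vertices has at most $2V-3$ edges. That bound is below $3n/b$, since there are $V=n/b$ blocks.

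First, I would set up a drawing. Place the $n/b$ blocks as points on a horizontal line in their natural left-to-right order. For every edge $(i,j)$ with $i<j$, draw an arc in the upper half-plane from block $i$ to block $j$. I must then show that two arcs $(i_1,j_1)$ and $(i_2,j_2)$ cross only if their endpoints strictly interleave, i.e.\ $i_1<i_2<j_1<j_2$. To rule this out, use the witnesses: the edge $(i_1,j_1)$ comes from a matched pair of positions $a_1<c_1$ with $a_1$ in block $i_1$ and $c_1$ in block $j_1$, and likewise $(i_2,j_2)$ comes from positions $a_2<c_2$. Interleaving of the blocks forces $a_1<a_2<c_1<c_2$ at the level of positions, because distinct blocks occupy disjoint consecutive intervals. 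This contradicts the non-crossing property of $M(\tilde{x})$, and $M(x)$ inherits that property as a restriction.

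Second, I would handle degenerate cases. No edge joins a block to itself, since the definition requires $i<j$. Parallel edges do not arise, because each pair is a single edge carrying a weight. So the drawing is a plane embedding with all vertices on the outer face, and the graph is outerplanar, hence planar. The edge bound follows from $|E|\le 2V-3\le 3V-6$, or from Euler's formula when $V\ge 3$. For $V\le 2$ there is at most one edge, which is trivially at most $3n/b$.

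The only delicate step is converting block-level interleaving into position-level interleaving. One must note that the witnessing parentheses are excess parentheses, so their matches are genuinely pairs inside $x$ recorded by $M(x)$. Apart from that, the argument is routine.
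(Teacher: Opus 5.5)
Your proposal is correct and follows essentially the route the paper intends. The paper gives no standalone proof: it cites Claim~9 of PRR03 and remarks that planarity follows from the non-crossing property of $M(x)$. You make this precise by placing the blocks on a line and observing that two crossing arcs would force interleaved witness positions $a_1<a_2<c_1<c_2$ in the non-crossing matching $M(\tilde{x})$, which is the contradiction; the planar (indeed outerplanar) edge bound then gives at most $3n/b$ edges. One small inaccuracy: the witness pairs belong to $M(x)$ by the definition of an edge, not because the parentheses are excess. The non-crossing property is inherited simply because $M(x)$ is a restriction of $M(\tilde{x})$.
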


The next claim shows that we can obtain estimates of the weights $w(i,j)$ for every pair of blocks $i<j$ in the matching graph.

\begin{claim}[{\cite[Claim~10]{PRR03}}]\label{claim:estimated_weights}
    Given the \textup{(}exact\textup{)} excess numbers of all consecutive substrings of a string $x\in\Sigma^n_m$, one can compute the weight $w(i,j)$ of every pair of blocks $i<j$ in the matching graph of $x$.
    Furthermore, given \emph{estimates} of the excess numbers of all consecutive substrings of $x$ to within additive deviation $\Delta$, one can compute estimates $\hat{w}(i,j)$ for every $i<j$, such that $\hat{w}(i,j) \in w(i,j) \pm 2\Delta$.
\end{claim}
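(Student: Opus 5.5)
The plan is to reduce the computation of $w(i,j)$ to excess numbers of consecutive substrings. The idea is to express the number of excess opening parentheses of block $i$ matched by $M(x)$ into block $j$ through excess numbers of the substrings $x[i..j]$, $x[i+1..j]$, $x[i..j-1]$ and $x[i+1..j-1]$ (here $x[i..j]$ denotes the concatenation of blocks $i$ through $j$). The key structural fact, which follows from the non-crossing property, is that $M(x)$ restricted to a consecutive substring $y$ agrees with $M(y)$. Equivalently, a pair matched in $M(x)$ with both endpoints inside $y$ is matched in $M(y)$, and the unmatched parentheses of $y$ are exactly its excess parentheses: a run of closings followed by a run of openings.

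Next, count the pairs of $M(x)$ with one endpoint in block $i$ and the other in block $j$. Let $f(a,b)$ be the number of pairs of $M(x)$ that lie inside $x[a..b]$ but not inside a single block. By the structural fact, $f(a,b)$ equals the number of matched pairs of $M(x[a..b])$ minus those internal to single blocks. The number of matched pairs of a substring $y$ is $(|y| - e_0(y) - e_1(y))/2$, so $f$ is a linear combination of excess numbers of consecutive substrings. The single-block contributions cancel in the inclusion--exclusion below. Pairs crossing between blocks $i$ and $j$ are then counted by
\[
f(i,j) - f(i+1,j) - f(i,j-1) + f(i+1,j-1),
\]
since a pair inside $x[i..j]$ with endpoints in blocks $a<b$ is counted with total coefficient $1$ exactly when $a=i$ and $b=j$, and $0$ otherwise. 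The $|y|$ terms cancel, leaving
\[
w(i,j) = \tfrac12\Big(E(i+1,j)+E(i,j-1)-E(i,j)-E(i+1,j-1)\Big),
\]
where $E(a,b)=e_0(x[a..b])+e_1(x[a..b])$, with the convention that $E$ of an empty range is $0$ when $j=i+1$.

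It remains to check that such a crossing pair is really between an excess opening of block $i$ and an excess closing of block $j$. An opening parenthesis in block $i$ matched outside block $i$ is unmatched within the block, and hence is excess for block $i$; the same holds for the closing parenthesis in block $j$. Conversely, every edge counted in $w(i,j)$ is such a crossing pair. For $x$ with $\mu(x)\notin\texttt{Dyck}_1$ the argument is unchanged, because $M(x)$ is the restriction of $M(\tilde{x})$ and padding only adds parentheses outside $x$.

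For the furthermore part, plug estimates $\hat{E}$, each within $2\Delta$ of the true value (a sum of two excess numbers each within $\Delta$), into the formula. The error is then at most $\tfrac12\cdot 4\cdot 2\Delta = 4\Delta$, which is too weak. The main obstacle is therefore to get $2\Delta$. I would tighten this by using only one type of excess in the formula. The openings of block $i$ matched into $x[i+1..j]$ can be written as $e_0(x[i..j-1]) - e_0(x[i..j])$ adjusted by terms with $e_1$; more precisely, $w(i,j)= g(i,j-1)-g(i,j)$, where $g(i,b)$ is the number of excess openings of block $i$ that remain unmatched inside $x[i..b]$, i.e. $\min\{e_0(\text{block } i),\,\cdot\}$-type expressions. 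I would aim for a formula with exactly two estimated terms, each within $\Delta$, which gives the $\pm2\Delta$ bound.
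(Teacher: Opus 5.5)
Your computation of the exact weights is correct, and it takes a different route from the usual one. An opening at $a$ and a closing at $b$ are matched if and only if $x[a+1..b-1]$ is balanced, so $M(x)$ and $M(y)$ agree on pairs inside $y$. Inclusion--exclusion over $f$ then does give $w(i,j)=\tfrac12\bigl(E(i+1,j)+E(i,j-1)-E(i,j)-E(i+1,j-1)\bigr)$.

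The gap is in the ``furthermore'' part, which is the quantitative content of the claim. The hypothesis guarantees each estimated excess number only to within $\Delta$ individually, and the errors may be adversarial. Your formula has four terms, each carrying error up to $2\Delta$, so it can genuinely be off by $4\Delta$. You notice this yourself, but then you only state an intention (``aim for a formula with exactly two estimated terms'') without producing such a formula. The sketched $g(i,j-1)-g(i,j)$ is not actually specified. Moreover, it is a difference of two positive parts involving three estimated numbers, so a $2\Delta$ bound does not follow from counting terms; it would need a separate argument. As written, the proposal proves $\hat w(i,j)\in w(i,j)\pm 4\Delta$, not the stated $\pm 2\Delta$. (The weaker bound would suffice downstream after re-tuning the constants $14\Delta$, $12\Delta$, $4\Delta$, $8\Delta$ in the proof of Claim~\ref{claim:matching_intervals}, but it is not the claim.)

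The missing idea is to use the stack order of the matching rather than a global count of matched pairs. Let $y=B_i$, $y'=B_j$, and let $z$ be the substring strictly between them, which is empty if $j=i+1$. The $e_1(z)$ excess closings of $z$ are matched to the rightmost excess openings of $y$ first. Hence exactly $\max\{0,e_0(y)-e_1(z)\}$ excess openings of $y$ survive past $z$, and they are the leftmost excess openings of $yz$. The excess closings of $y'$ are matched first to the $e_0(z)$ excess openings of $z$, and only then to these survivors. Therefore
\[
w(i,j)=\max\bigl\{0,\;\min\{e_0(y)-e_1(z),\;e_1(y')-e_0(z)\}\bigr\},
\]
which already settles the exact part. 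Each argument of the $\min$ involves exactly two estimated numbers, so plugging in the estimates perturbs each by at most $2\Delta$. Since $\min$ and $\max\{0,\cdot\}$ are $1$-Lipschitz, the resulting $\hat w(i,j)$ lies in $w(i,j)\pm 2\Delta$. The paper does not reprove this claim; it cites \cite{PRR03}. The same structure is used implicitly in the proof of Claim~\ref{claim:matching_intervals}, where $q$ is characterized by $e_0(y_{>q})=e_1(z)$ and $p$ by $e_0(y_{\geq p})=e_1(z)+w(i,j)$.
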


Loosely speaking, the next claim states that for each $i<j$ we can estimate the interval in block $i$ that contains the excess opening parentheses that should match with excess closing parentheses in block $j$ and vice versa.
A similar claim is implicit in~\cite{PRR03} and~\cite{FMS18}; however, for convenience reasons, here we define these estimated intervals slightly differently.
Therefore, we include a proof of the claim below.

\begin{claim}\label{claim:matching_intervals}
    Let $x\in\Sigma^n_m$, and partition $x$ into consecutive blocks $B_1,\ldots,B_{n/b}$ of length $b$.
    For every pair of blocks $i<j$, let $I_{i,j}\subseteq B_i$ be the smallest interval containing all excess opening parentheses in $B_i$ that are matched by $M(x)$ to excess closing parentheses in $B_j$.
    Likewise, let $I_{j,i}\subseteq B_j$ be the smallest interval containing all excess closing parentheses in $B_j$ that are matched by $M(x)$ to excess opening parentheses in $B_i$.  
    We refer to $I_{i,j}$ and $I_{j,i}$ as the \textup{\textsf{matching intervals}} of blocks $(i,j)$.
    Then, given the \textup{(}exact\textup{)} excess numbers of all consecutive substrings of $x$, one can compute the matching intervals of any pair of blocks $i<j$.
    Furthermore, suppose that we are given \emph{estimates} of the excess numbers of all consecutive substrings of $x$ to within additive deviation $\Delta$. 
    Then, for every pair of blocks $i<j$, it is possible to compute \textup{\textsf{approximate matching intervals}} $\hat{I}_{i,j}\subseteq I_{i,j}$ and $\hat{I}_{j,i}\subseteq I_{j,i}$ such that: 
    \begin{enumerate}[label=(\arabic*)]
        \item $I_{i,j}$ has at most $O(\Delta)$ excess opening parentheses outside $\hat{I}_{i,j}$, and likewise $I_{j,i}$ has at most $O(\Delta)$ excess closing parentheses outside $\hat{I}_{j,i}$. 
        
        \item $\hat{I}_{i,j}$ has at most $O(\Delta)$ excess opening parentheses that are not excess in $I_{i,j}$, and likewise $\hat{I}_{j,i}$ has at most $O(\Delta)$ excess closing parentheses that are not excess in $I_{j,i}$.
    \end{enumerate}
\end{claim}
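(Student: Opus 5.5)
We work with the excess structure of $x$ relative to the pair of blocks. Every opening parenthesis of $B_i$ that is excess in $B_i$ is matched by $M(x)$ either to a closing parenthesis in some later block, or is excess in all of $x$. By the non-crossing property, the excess opening parentheses of $B_i$ are matched to later blocks in a nested order. Reading $B_i$ from right to left, the first excess opening parentheses match to the nearest partners, i.e., to blocks $i+1, i+2, \ldots$ in increasing order. Symmetrically, reading the excess closing parentheses of $B_j$ from left to right, they match to blocks $j-1, j-2, \ldots$ in decreasing order.

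\emph{Exact case.} Let $E_0(B_i)$ be the ordered list of excess opening parentheses in $B_i$, indexed from the right. Those matched to $B_j$ form a contiguous run in this list, from index $a+1$ to $a+w(i,j)$. Here $a=\sum_{i<j'<j} w(i,j')$, which is computable by Claim~\ref{claim:estimated_weights}. The position of the $t$-th excess opening parenthesis from the right of $B_i$ is the largest position $p$ with $e_0(x_{[p,\mathrm{end}(B_i)]})= t$, since $e_0$ of a suffix of the block is monotone in the suffix and increases by one exactly at excess opening parentheses. Thus $I_{i,j}$ runs from the position with index $a+w(i,j)$ to the position with index $a+1$, both computable from excess numbers of substrings of $B_i$. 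The same argument gives $I_{j,i}$ via $e_1$ of prefixes of $B_j$.

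\emph{Approximate case.} Each estimated excess number, and hence each $\hat w$, is off by $O(\Delta)$. The natural choice is to shrink: set $\hat a=\sum_{i<j'<j}\hat w(i,j')$ and take $\hat I_{i,j}$ to be the positions $p$ whose estimated suffix-excess lies in $(\hat a + c\Delta,\ \hat a+\hat w(i,j) - c\Delta]$, for a suitable constant $c$. The constant $c$ must absorb the error in the cumulative offset.

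\textbf{Main obstacle.} That cumulative offset $\hat a$ is a sum of up to $n/b$ estimated weights, so its errors could accumulate. To avoid this, I would not sum the estimates. Instead I would compute the offset directly as a single excess quantity, up to $O(\Delta)$ error:
\[
a=\min\bigl\{e_0(B_i),\ e_1(x_{(\mathrm{end}(B_i),\,\mathrm{start}(B_j))})\bigr\},
\]
the number of excess openings of $B_i$ absorbed by the blocks strictly between $B_i$ and $B_j$. Similarly, $a+w(i,j)$ equals a single min-expression involving $e_1$ of the substring from the end of $B_i$ through $B_j$. Both expressions are then estimated within $O(\Delta)$, and the interval is shrunk by that amount on each side.

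\emph{Verifying the properties.} Containment $\hat I_{i,j}\subseteq I_{i,j}$ follows from monotonicity of the true suffix-excess together with the $\pm O(\Delta)$ slack in the shrunken window. Property~(1) holds because only $O(\Delta)$ indices are cut at each end of the window. For property~(2), observe that within $\hat I_{i,j}$, the excess openings of $\hat I_{i,j}$ that are not excess in $I_{i,j}$ are openings matched inside $I_{i,j}$ across the boundary of $\hat I_{i,j}$. The true excess count changes by at most $O(\Delta)$ between the two interval endpoints and their estimates, which bounds these. The arguments for $I_{j,i}$ are symmetric.
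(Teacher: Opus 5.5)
Your exact-case computation is correct, and so is your way of obtaining the offset without accumulating error. Writing $y,z,y'$ for the restrictions of $x$ to $B_i$, to the blocks strictly between, and to $B_j$, you have $a=\min\{e_0(y),e_1(z)\}$ and $a+w(i,j)=\min\{e_0(y),e_1(zy')\}$, each estimable within $\Delta$. Your shrunken window also gives $\hat I_{i,j}\subseteq I_{i,j}$ and property~(1), provided you take the smallest interval containing the selected positions (with arbitrary $\pm\Delta$ estimates the selected set need not be contiguous).

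The gap is property~(2). Your window constrains only $\hat e_0(y_{\ge p})$. But the true value $e_0(y_{\ge p})$ is constant on each stretch of $B_i$ between consecutive excess opening parentheses, so no threshold on its estimates can tell positions inside such a stretch apart. Concretely, suppose a stretch on which $e_0(y_{\ge p})=a+c\Delta$ contains $K$ opening parentheses followed by their $K$ closing partners. Estimates that are only within $\pm\Delta$ (which is all the claim assumes) may place the last of these $K$ openings inside your window and every position to its right outside it. The right endpoint $\hat q$ of $\hat I_{i,j}$ then separates the $K$ openings from their partners. All $K$ are excess in $\hat I_{i,j}$ but matched inside $I_{i,j}$, and $K$ can be $\Theta(b)$. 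Your justification fails exactly here. An $O(\Delta)$ change of $e_0$ between $\hat q$ and $q$ bounds the number of \emph{excess} openings of $B_i$ in $(\hat q,q]$. What must be bounded instead is the number of closing parentheses right of $\hat q$ that are matched into $\hat I_{i,j}$. That number is at most $e_1(y_{>\hat q})$, and no condition on $e_0$ controls it.

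The missing ingredient is a second condition on the endpoint that controls $e_1$ of the remaining suffix. The paper takes $\hat q$ to be the leftmost position of $B_i$ with $\hat e_0(y_{>\hat q})\le\hat e_1(z)+2\Delta$ \emph{and} $\hat e_1(y_{>\hat q})\le\Delta$. The second condition gives $e_1(y_{>\hat q})\le 2\Delta$, hence at most $2\Delta$ openings of $\hat I_{i,j}$ that are not excess in $I_{i,j}$. It costs nothing for containment or property~(1): $e_1(y_{>t})=0$ whenever $t$ is an excess opening of $B_i$, since every closing parenthesis of $B_i$ to the right of $t$ is matched inside $y_{>t}$. So the true $q$, and the excess openings whose index lies just inside your window, satisfy it.

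Adding $\hat e_1(y_{>\hat q})\le\Delta$ at the right endpoint of your window repairs the argument, together with the symmetric condition $\hat e_0(y'_{<\hat q'})\le\Delta$ at the left endpoint $\hat q'$ of $\hat I_{j,i}$. Only these two endpoints matter for property~(2). A parenthesis can stop being excess in the larger interval only through a partner lying beyond that endpoint: to the right for openings in $\hat I_{i,j}$, to the left for closings in $\hat I_{j,i}$.
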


 \medskip
Note that the excess opening parentheses of $\hat{I}_{i,j}$ that are not excess in $I_{i,j}$ must form a \emph{suffix} of the excess opening parentheses of $\hat{I}_{i,j}$. 
Similarly, the excess closing parentheses of $\hat{I}_{j,i}$ that are not excess in $I_{j,i}$ must form a prefix of the excess closing parentheses of $\hat{I}_{j,i}$.  

\begin{proof}
    Consider any pair of blocks $i<j$.
    Let $y$ and $y'$ denote the restrictions of $x$ to $B_i$ and $B_j$, respectively, and let $z$ denote the restriction of $x$ to $\bigcup_{k\in [i+1,j-1]}B_k$ (i.e., the interval strictly between blocks $i$ and $j$).
    Denoting $I_{i,j} = [p, q]$, observe that, by the non-crossing property of the matching $M(x)$, it holds that $q$ is the leftmost position in block $i$ such that $e_0(y_{> q}) = e_1(z)$, and $p$ is the rightmost position in block $i$ such that $e_0(y_{\geq p}) = e_1(z) + w(i,j)$.
    Symmetrically, denoting $I_{j,i} = [q', p']$, it holds that $q'$ is the rightmost position in block $j$ such that $e_1(y'_{< q'}) = e_0(z)$, and $p'$ is the leftmost position in block $j$ such that $e_1(y'_{\leq p'}) = e_0(z) + w(i,j)$.

    We next define $\hat{I}_{i,j}$; the case of $\hat{I}_{j,i}$ is symmetric.
    For each substring $s$ of $x$, let $\hat{e}_0(s) \in e_0(s) \pm \Delta$ and $\hat{e}_1(s) \in e_1(s) \pm \Delta$ be the estimated excess numbers of $s$ (as guaranteed by the claim's hypothesis).  
    Let $\hat{w}(i,j) \in w(i,j) \pm 2\Delta$ be the estimated weight of the pair $(i,j)$ as guaranteed by Claim~\ref{claim:estimated_weights}.
    If $\hat{w}(i,j)\leq 14\Delta$, we define $\hat{I}_{i,j}$ and $\hat{I}_{j,i}$ to be the empty intervals.
    Since $\hat{w}(i,j)\in w(i,j)\pm 2\Delta$, in this case $w(i,j)\leq 16\Delta$, and the claim immediately follows.
    We thus assume henceforth that $\hat{w}(i,j)>14\Delta$, and therefore that $w(i,j)>12\Delta$.
    Let us denote $\hat{I}_{i,j} = [\hat{p}, \hat{q}]$.
    We take $\hat{q}$ to be the leftmost position in block $i$ such that $\hat{e}_0(y_{> \hat{q}}) \leq \hat{e}_1(z) + 2\Delta$ and $\hat{e}_1(y_{> \hat{q}}) \leq \Delta$, and we take $\hat{p}$ to be the rightmost position such that $\hat{e}_0(y_{\geq \hat{p}}) \geq \hat{e}_1(z) + \hat{w}(i,j) - 4\Delta$.

    We need to show that $[\hat{p},\hat{q}] \subseteq [p,q]$, that there are at most $O(\Delta)$ excess opening parentheses in $[p,q]$ that lie outside $[\hat{p},\hat{q}]$, and that there are at most $O(\Delta)$ excess opening parentheses of $[\hat{p},\hat{q}]$ that are not excess in $[p,q]$.    
    Consider first the right endpoints $q$ and $\hat{q}$. 
    Notice that since $q$ satisfies $e_0(y_{> q}) = e_1(z)$, it must satisfy $\hat{e}_0(y_{> q}) \leq \hat{e}_1(z) + 2\Delta$. 
    Furthermore, notice that $q$ must satisfy $e_1(y_{> q}) = 0$, and therefore $\hat{e}_1(y_{> q}) \leq \Delta$. 
    Thus, by the minimality of $\hat{q}$ it follows that $\hat{q} \leq q$. 
    On the other hand, notice that there can be at most $4\Delta$ excess opening parentheses between $\hat{q}$ and $q$:
    Since $\hat{q}$ satisfies $\hat{e}_0(y_{> \hat{q}}) \leq \hat{e}_1(z) + 2\Delta$, it follows that $e_0(y_{> \hat{q}}) \leq e_1(z) + 4\Delta = e_0(y_{> q}) + 4\Delta$.
    Finally, notice that any excess opening parenthesis of $[\hat{p},\hat{q}]$ that is not excess in $[p,q]$ must match with an excess closing parenthesis in $y_{> \hat{q}}$, and by definition of $\hat{q}$ we have $\hat{e}_1(y_{> \hat{q}}) \leq \Delta$, which implies $e_1(y_{> \hat{q}}) \leq 2\Delta$.
    Hence, there can be at most $2\Delta$ excess opening parentheses of $[\hat{p},\hat{q}]$ that are not excess in $[p,q]$.  

    Similarly, consider the left endpoints $p$ and $\hat{p}$. 
    Since $p$ satisfies $e_0(y_{\geq p}) = e_1(z) + w(i,j)$, it must satisfy $\hat{e}_0(y_{\geq p}) \geq \hat{e}_1(z) + \hat{w}(i,j) - 4\Delta$. 
    Thus by the maximality of $\hat{p}$ it follows that $\hat{p} \geq p$. 
    On the other hand, there can be at most $8\Delta$ excess opening parentheses between $p$ and $\hat{p}$:
    Since $\hat{p}$ satisfies $\hat{e}_0(y_{\geq \hat{p}}) \geq \hat{e}_1(z) + \hat{w}(i,j) - 4\Delta$, it follows that $e_0(y_{\geq \hat{p}}) \geq e_1(z) +  w(i,j) - 8\Delta = e_0(y_{\geq p}) - 8\Delta$.
    Lastly, notice that $\hat{p}\leq\hat{q}$: Recall that $w(i,j)>12\Delta$, whereas we showed that there are at most $4\Delta$ excess parentheses between $\hat{q}$ and $q$ and at most $8\Delta$ excess parentheses between $p$ and $\hat{p}$.
\end{proof}

Finally, observe that if $x$ is $\textup{\texttt{Dyck}}_m$-consistent, then for every pair of blocks $i<j$, the excess opening parentheses in $\hat{I}_{i,j}$ and the excess closing parentheses in $\hat{I}_{j,i}$ ``match up to boundary slack $O(\Delta)$'', in the following sense:

\begin{definition}[match of excess parentheses up to boundary slack]\label{def:excess_parentheses_approx_match}
    Let $s$ and $s'$ be two non-overlapping substrings of a string $x\in\Sigma_m^n$, where $s$ appears before $s'$ in $x$. 
    Let $\mathcal{E}_0(s)$ denote the subsequence of excess opening parentheses in $s$, and let $\mathcal{E}_1(s')$ denote the subsequence of excess closing parentheses in $s'$.
    We say that $\mathcal{E}_0(s)$ and $\mathcal{E}_1(s')$ \textup{\textsf{match up to boundary slack $k\in\mathbb{N}$}} if, after deleting at most $k$ parentheses from the end of $\mathcal{E}_0(s)$, at most $k$ parentheses from the beginning of $\mathcal{E}_1(s')$, and at most $k$ parentheses either from the beginning of $\mathcal{E}_0(s)$ or from the end of $\mathcal{E}_1(s')$ \textup{(}but not both\textup{)}, the following holds:
    There exists a perfect non-crossing matching between the remaining parentheses in $\mathcal{E}_0(s)$ and the remaining parentheses in $\mathcal{E}_1(s')$, such that for every matched pair of parentheses, the two parentheses have the same type, and the number of parentheses in $x$ between them is even.
\end{definition}

To compare the above definition with Definition~\ref{def:approximate_match}, consider $\mathcal{E}_0(s)$ in reverse order. Under this ordering, the difference is that here we allow deleting up to $k$ parentheses from the beginning of \emph{both} sequences, rather than only one of them.

\subsection{The excess parentheses matching procedure}

Before presenting the non-adaptive tester for $\texttt{Dyck}_m$-\texttt{consistency}, we describe a (non-adaptive) procedure for comparing the excess opening parentheses of one substring $s$ with the excess closing parentheses of another substring $s'$, and deciding (approximately) whether they match up to boundary slack $0.1\cdot\epsilon\cdot n$ (per Definition~\ref{def:excess_parentheses_approx_match}).
This procedure, which we call the \textsf{excess parentheses matching procedure}, is a modified version of the \texttt{ResStringEq} Decision procedure (i.e., Algorithm~\ref{alg:decision}), and it gets the outputs of the same Query procedure that was used in the \texttt{ResStringEq} tester (i.e., Algorithm~\ref{alg:query}).

At a high level, the procedure proceeds analogously to the \texttt{ResStringEq} Decision procedure.
It estimates the excess numbers (as guaranteed by Claim~\ref{claim:successful_estimates_excess}), and uses these estimates to rank the positions in $s$ from right to left according to the excess opening parentheses, and the positions in $s'$ from left to right according to the excess closing parentheses. 
It then partitions the strings into segments that each contain approximately $L$ excess parentheses, and recursively compares corresponding segments.

However, here we need to choose the segment boundaries more carefully.
Note that if we choose the boundaries arbitrarily, then a segment of $s$ may have opening parentheses that are excess in the segment but not excess in $s$, and symmetrically for excess closing parentheses in segments of $s'$.
The recursive call would then include these parentheses in the comparison, even though we want to compare only the excess parentheses of $s$ and $s'$.
We therefore choose the boundaries more carefully so that each segment can have at most $O(\Delta)$ excess parentheses that are not excess in the original string.

Finally, in contrast to the definition of boundary slack that we used in the \texttt{ResStringEq} Decision procedure (i.e., Definition~\ref{def:approximate_match}), here the definition allows deleting at most $0.1 \cdot \epsilon \cdot n$ parentheses from the beginning of \emph{both} sequences before comparing them (where the excess opening parentheses are read from right to left).
To account for this, we skip the first $0.1\cdot\epsilon\cdot n$ excess parentheses in one string (according to the estimates), and iterate over all offsets between $0$ and $0.2 \cdot \epsilon \cdot n$ in the other string.

\begin{algorithm}[excess parentheses matching procedure]\label{alg:decision_dyck}

On input $N \in \mathbb{N}$, $\epsilon>0$, $r \in \mathbb{N}$, the outputs of the Query procedure on two strings of length $N$, and the endpoints of two substrings $s$ and $s'$ of a string $x$ such that $s$ appears before $s'$ in $x$, proceed as follows:  
Let $n = (|s| + |s'|)/2$. 

\begin{itemize}[leftmargin=0.6cm]

\item If $r=0$: The outputs of the Query procedure contain the full strings $s$ and $s'$.
Directly check if the excess opening parentheses of $s$ and the excess closing parentheses of $s'$ match up to boundary slack $0.1\cdot \epsilon\cdot n$ \textup{(}see Definition~\ref{def:excess_parentheses_approx_match}\textup{)}, and accept if and only if this check passes.

\item Otherwise: 

\begin{enumerate}[leftmargin=0.6cm]

    \item \textup{\textsf{(Set parameters):}} Set $L,\Delta,N',\epsilon'$ exactly as in Step~1 of the Query procedure \textup{(}Algorithm~\ref{alg:query}\textup{)}.
    
    \item \textup{\textsf{(Estimate the excess numbers):}}
    For every substring $y$ of either $s$ or $s'$, use the estimation queries in the outputs of the Query procedure to compute estimates $\hat{e}_0(y)$ and $\hat{e}_1(y)$ of the numbers of excess opening and closing parentheses in $y$, respectively, as guaranteed by Claim~\ref{claim:successful_estimates_excess}.

    \item \textup{\textsf{(Define excess-parentheses segments for every offset):}}
    For every $h$, define the excess-parentheses segments of $s$ and $s'$ with offset $h$, as follows.

    \begin{itemize}
        \item For $s$, choose endpoints along $s$, from right to left, where the $k\th$ endpoint $\ell_k$ satisfies: 
        $$
            \hat{e}_0(s_{>\ell_k}) \in h+(k-1)\cdot L \pm \Delta
            \qquad\text{and}\qquad
            \hat{e}_1(s_{>\ell_k}) \leq \Delta,\footnote{
                Note that if the estimates are successful then such an endpoint $\ell_k$ can always be found: the position $\ell$ whose \emph{exact} excess numbers satisfy $e_0(s_{> \ell}) =  h + (k-1)\cdot L$ and $e_1(s_{> \ell}) = 0$ will satisfy these constraints (but other positions may satisfy the constraints as well). 
            }
        $$
        and define the $k\th$ segment of $s$ as $E^{s,h}_k \eqdef [\ell_{k+1}+1,\ell_k]$.

        \item Symmetrically, for $s'$, choose endpoints along $s'$, from left to right, where the $k\th$ endpoint $\ell'_k$ satisfies: 
        $$
            \hat{e}_1(s'_{<\ell'_k}) \in h+(k-1)\cdot L \pm \Delta
            \qquad\text{and}\qquad
            \hat{e}_0(s'_{<\ell'_k}) \leq \Delta
        $$
        and define the $k\th$ segment of $s'$ as $E^{s',h}_k \eqdef [\ell'_k, \ell'_{k+1}-1]$.
    \end{itemize}

    \item \textup{\textsf{(Search for a successful offset):}} 
    Fix $h' \eqdef 0.1\cdot\epsilon\cdot n$, and go over all offsets $h\in\{0,1,\ldots,0.2\cdot\epsilon\cdot n\}$.
    For each such $h$ do as follows:
    \begin{enumerate}[leftmargin=0.9cm, itemsep=5pt]
        \item Let $m_h \eqdef \min\{\hat{e}_0(s)-h,\, \hat{e}_1(s')-h'\}$. 

        \item \textup{\textsf{(Recursively check corresponding excess-parentheses segments):}}
        Go over all pairs of corresponding segments $(E^{s,h}_k,E^{s',h'}_k)$ for $k\in [m_h/L]$.
        For each such pair, check if there exists a pair of selected blocks $B$ and $B'$ in the respective outputs of the Query procedure, such that $E^{s,h}_k \subseteq B$ and $E^{s',h'}_k \subseteq B'$, where $B$ and $B'$ are represented with respect to their positions in $s$ and $s'$.
        If so, recursively invoke the excess parentheses matching procedure on inputs $N',\epsilon',r-1$, the outputs of the Query procedure in the recursive calls with $B$ and $B'$, and the positions of $E^{s,h}_k$ and $E^{s',h'}_k$ inside the global string $x$.

        \item If any recursive call rejects, then $h$ fails; continue to the next offset.
        Otherwise \textup{(}i.e., all recursive calls accepted\textup{)}, $h$ is successful; halt and accept.
    \end{enumerate}
    If every offset $h$ failed then reject.

\end{enumerate}
\end{itemize}
\end{algorithm}

 \medskip
The following claim states the guarantees of the excess parentheses matching procedure, which are analogous to the guarantees of the \texttt{ResStringEq} Decision procedure (Algorithm~\ref{alg:decision}). 
The proof is analogous to the analysis in Section~\ref{subsubsec:ub:actual:correctness} and~\cite[Sec.~4.2]{FMS18}, and we omit the details.

\begin{subclaim}[completeness and soundness of the excess parentheses matching procedure]\label{claim:dyck_decision_soundness}
    Let $x$ be a string over $\Sigma_m$, and let $s$ and $s'$ be two substrings of $x$. 
    Denote $n=(|s|+|s'|)/2$.
    Suppose that Algorithm~\ref{alg:decision_dyck} is invoked with the outputs obtained by running the Query procedure \textup{(}Algorithm~\ref{alg:query}\textup{)} on two substrings of $x$ of length $N$ containing $s$ and $s'$, respectively, where both invocations use proximity parameter $\epsilon$ and failure parameter $\eta$.
    Furthermore, suppose that $s$ and $s'$ are promised to satisfy $|e_0(s) - e_1(s')| \leq 0.2\cdot \epsilon \cdot n$, and that $n=\Omega(\epsilon\cdot N)$.
    Then Algorithm~\ref{alg:decision_dyck} satisfies the following guarantees \textup{(}where the probability is over the randomness of the Query procedure\textup{)}.
    \begin{enumerate}
        \item It \emph{accepts} w.p.\ at least $1-\eta$ if the excess opening parentheses of $s$ and the excess closing parentheses of $s'$ match up to boundary slack $0.1\cdot \epsilon\cdot n$ \textup{(}see Definition~\ref{def:excess_parentheses_approx_match}\textup{)}.
        
        \item If it \emph{accepts} w.p.\ greater than $\eta$, then there exists a non-crossing matching between the excess opening parentheses of $s$ and the excess closing parentheses of $s'$ such that (a) for every matched pair of parentheses, the two parentheses have the same type, and the number of parentheses between them in $x$ is even; (b) at most $\epsilon\cdot n$ of the excess parentheses in $s$ and $s'$ are unmatched.
    \end{enumerate}

\end{subclaim}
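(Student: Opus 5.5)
We prove Claim~\ref{claim:dyck_decision_soundness} by induction on $r$, following the structure of the correctness analysis of the \texttt{ResStringEq} Decision procedure in Section~\ref{subsubsec:ub:actual:correctness}. The base case $r=0$ is immediate. At level $r=0$ the procedure reads $s$ and $s'$ in full and applies Definition~\ref{def:excess_parentheses_approx_match} directly. If it accepts, the definition supplies a non-crossing, type-respecting, even-gap matching that leaves at most $3\cdot 0.1\cdot\epsilon\cdot n+|e_0(s)-e_1(s')|\le \epsilon n$ excess parentheses unmatched.

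For $r\ge1$ we first condition on successful estimates. Claim~\ref{claim:successful_estimates_excess} applied with failure $\eta/2$ makes all estimates of excess numbers accurate to within $\Delta$ simultaneously. Under this assumption we verify three properties of the segments $E^{s,h}_k$ and $E^{s',h'}_k$ chosen in Step~3. First, such endpoints exist; this follows from the footnote there. Second, each segment contains $L\pm O(\Delta)$ of the excess parentheses of $s$ (respectively $s'$). Third, at most $O(\Delta)$ of the parentheses that are excess in the segment fail to be excess in $s$. The constraint $\hat e_1(s_{>\ell_k})\le\Delta$ forces $e_1(s_{>\ell_k})\le 2\Delta$, so at most $2\Delta$ excess openings of the segment can be absorbed to its right. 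The argument has the same form as the endpoint analysis in Claim~\ref{claim:matching_intervals}.

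\emph{Completeness.} Suppose the excess sequences match up to slack $0.1\epsilon n$, with deletions $d_1$ at the inner start of $\mathcal E_0(s)$ and $d_2$ at the inner start of $\mathcal E_1(s')$. Taking $h=h'-d_2+d_1$ aligns the two sequences, and this $h$ lies in $[0,0.2\epsilon n]$. For that offset every corresponding pair of segments then matches up to slack $O(\Delta)\le 0.1\epsilon' n'$, once $\alpha_3$ is small enough. The even-gap condition is inherited, since parity of the number of parentheses between two fixed positions of $x$ does not depend on the segment. By induction each recursive call rejects with probability at most $\eta/(2N)$, and a union bound over $O(n/L)$ pairs completes this case.

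\emph{Soundness.} We argue the contrapositive: if no valid matching leaves at most $\epsilon n$ unmatched, then for every offset $h$ there are $\Omega(\epsilon n/L)$ bad pairs. This is the analogue of Claim~\ref{claim:many_bad_blocks_generalized}. We glue together the matchings guaranteed, via the inductive hypothesis, for the non-bad short pairs. The parentheses left unmatched come from four sources: the pairs with a long segment, the bad pairs, the $O(\Delta)$ spurious excess parentheses per segment, and the offsets together with the tail $|e_0(s)-e_1(s')|$. Their total is at most $\epsilon n$.

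The main obstacle is checking that the glued matching is globally valid. It must be non-crossing, and it must use only parentheses that are excess in $s$ and $s'$. Non-crossingness holds because segments of $s$ are taken right-to-left while segments of $s'$ are taken left-to-right, so matchings of consecutive pairs are nested. We additionally discard the $O(\Delta)$ non-excess parentheses at each segment boundary. After that, the birthday argument of Claim~\ref{claim:birthday_generalized} shows that some bad pair lies in sampled blocks with probability at least $1-\eta/(2N)$. A union bound over the $O(\epsilon n)$ offsets finishes the proof, exactly as in the ResStringEq case.
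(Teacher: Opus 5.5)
Your proposal is correct and follows the route the paper has in mind: the paper omits this proof, deferring to the \texttt{ResStringEq} correctness analysis and \cite[Sec.~4.2]{FMS18}, and you reproduce that structure. Specifically, you use induction on $r$, successful estimates giving segments with $L\pm O(\Delta)$ excess parentheses and only $O(\Delta)$ spurious ones at the inner boundary, and the aligning offset $h=h'-d_2+d_1$ for completeness; for soundness you use nesting-based gluing of the non-bad pairs, the birthday argument, and a union bound over offsets.

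Two small omissions, both handled exactly as in the \texttt{ResStringEq} case. First, your tally of unmatched parentheses should also count the at most $\epsilon'\cdot n'$ unmatched ones inside each non-bad pair, which sum to at most $\epsilon'\cdot n$. Second, you should verify the two promises for the recursive calls.
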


\subsection{The tester}

We finally present the non-adaptive tester for $\texttt{Dyck}_m$-\texttt{consistency} (recall that Lemma~\ref{lem:dyck_consistency_reduction} establishes a reduction from testing $\texttt{Dyck}_m$ to testing $\texttt{Dyck}_m$-\texttt{consistency}).
The tester that we will present uses recursive calls.
Therefore, aside from the usual inputs (i.e., $n\in\mathbb{N},\epsilon>0$ and query access to $x\in \Sigma_m^n$), it receives two additional parameters: (1) a round parameter $r\in \mathbb{N}$ indicating the remaining number of recursion rounds, initialized to $3$; and (2) a failure parameter $\eta > 0$ indicating the allowed failure probability, initialized to $1/3$.

\begin{algorithm}[an $O(n^{1/2+\delta})$-query non-adaptive tester for $\texttt{Dyck}_m$-\texttt{consistency}]\label{alg:dyck_tester}
    
On input $n\in \mathbb{N}$, $\epsilon>0$ and query access to $x\in \Sigma_m^n$, as well as parameters $r \in \mathbb{N}$ \textup{(}initialized to $3$\textup{)} and $\eta>0$ \textup{(}initialized to $1/3$\textup{)}, proceed as follows: 

\begin{itemize}[leftmargin=0.55cm]

\item If $r = 0$: Query the entire input string $x$, and accept if and only if $x$ is $\textup{\texttt{Dyck}}_m$-consistent.

\item Otherwise:

\begin{enumerate}[leftmargin=0.55cm]

    \item Let $b \,\eqdef\, n^{3/4}$, $\epsilon' \,\eqdef\, \alpha_1 \cdot \epsilon$,\, $\eta' \,\eqdef\, \eta/n$, $L \,\eqdef\, \Omega(\epsilon' \cdot b)$, and $\Delta \,\eqdef\, \alpha_2 \cdot \epsilon' \cdot L$, where $\alpha_1, \alpha_2>0$ are sufficiently small constants.
    
    \item \textup{\textsf{Query-Phase:}} 
    \begin{enumerate}
        \item Take $T = O\!\left(\frac{n^2}{\Delta^2}\log (n/\eta)\right)$ independent uniform samples from $[n]$, and query $x$ at the sampled positions.
        
        \item Partition $x$ into consecutive blocks of length $b$.
        Sample each block independently with probability $p = O\left(\sqrt{\frac{b\cdot\log(1/\eta)}{\epsilon^2 \cdot n}}\right)$.

        \item For each selected block $B$, run the Query procedure \textup{(}Algorithm~\ref{alg:query}\textup{)} with query access to the restriction of $x$ to block $B$, proximity parameter $\epsilon'$, failure parameter $\eta'$, 
        and round parameter $O(\log(1/\delta))$.
    \end{enumerate}

    \item \textup{\textsf{Decision-Phase:}} 
    \begin{enumerate}
        \item Using the estimation queries made in Step~2a, compute the estimates of the excess numbers of every consecutive substring of $x$, with additive deviation at most $\Delta$ and failure probability at most $\eta/2$, as guaranteed by Claim~\ref{claim:successful_estimates_excess}.
    
        \item For each pair of blocks $i<j$, compute the approximate matching intervals $\hat{I}_{i,j}$ and $\hat{I}_{j,i}$ as guaranteed by Claim~\ref{claim:matching_intervals}. 

        \item For each pair of blocks $i<j$ such that $\max\{|\hat{I}_{i,j}|, |\hat{I}_{j,i}|\} \geq L$, if both blocks $i$ and $j$ were selected in Step~2b, then run the excess parentheses matching procedure \textup{(}Algorithm~\ref{alg:decision_dyck}\textup{)} 
        with the same parameters used for the Query procedure in Step~2c, 
        the outputs of the Query procedure when run on the restrictions of $x$ to blocks $i$ and $j$, as well as the positions of the intervals $\hat{I}_{i,j}$ and $\hat{I}_{j,i}$ inside $x$.
        If any call rejects, then halt and reject. 
        
        \item Select $O(\epsilon^{-1} \cdot \log(1/\eta))$ uniformly random blocks.
        For each selected block $B$, recursively run the $\textup{\texttt{Dyck}}_m$-\textup{\texttt{consistency}} tester \textup{(}Algorithm~\ref{alg:dyck_tester}\textup{)} on the restriction of $x$ to $B$, using proximity parameter $\epsilon'$, 
        failure parameter $\eta'$ 
        and round parameter $r-1$.
        Accept if and only if all recursive calls accept.
    \end{enumerate}
    
\end{enumerate}

\end{itemize}

\end{algorithm}

\subsubsection{Query complexity} 
Let us  bound the query complexity of Algorithm~\ref{alg:dyck_tester}.
We first bound the expected number of queries made in a \emph{single} recursion round.
Notice that $\Delta = \Omega(\epsilon^2\cdot b)$, and therefore in Step~2a Algorithm~\ref{alg:dyck_tester} makes $O\left(\frac{n^2\cdot\log(n/\eta)}{\epsilon^4\cdot b^2}\right)$ queries.
In Step~2b the number of blocks sampled in expectation is $p\cdot \frac{n}{b} = O\Big(\sqrt{\frac{n\cdot \log(1/\eta)}{\epsilon^2\cdot b}}\Big)$.
Since each block is of length $b$, the expected number of queries made in Step~2c is $O\Big(\sqrt{\frac{n\cdot \log(1/\eta)}{\epsilon^2\cdot b}}\Big)\cdot Q(b, \epsilon', \eta')$, where $Q(b, \epsilon', \eta') = Q(b, \Omega(\epsilon), \Omega(\eta/n))$ is the expected query complexity of the Query procedure (Algorithm~\ref{alg:query}) invoked in Step 2c.
As established in Section~\ref{sec:query_complex_recursive}, $Q(b, \Omega(\epsilon), \Omega(\eta/n)) = \left(1/\epsilon\right)^{O(\log(1/\delta))}\cdot \log^{O(\log(1/\delta))}\left(n/\eta\right)\cdot b^{1/2 + \delta}$.
Since we set $b=n^{3/4}$, we get that the expected number of queries made in a {single recursion round} is at most
$
    O\Big(\frac{n^2\cdot\log(n/\eta)}{\epsilon^4\cdot b^2} + \sqrt{\frac{n\cdot \log(1/\eta)}{\epsilon^2\cdot b}}\cdot Q(b, \Omega(\epsilon), \Omega(\eta/n))\Big) 
    \leq \left(1/\epsilon\right)^{O(\log(1/\delta))}\cdot \log^{O(\log(1/\delta))}\left(n/\eta\right)\cdot n^{1/2 + \delta}.
$

Now, in each recursion round, Algorithm~\ref{alg:dyck_tester} makes $O(\epsilon^{-1} \cdot \log(1/\eta))$ recursive calls.
In the recursive calls, the input length shrinks from $n$ to $b = n^{3/4}$. 
After $3$ recursion rounds, the input length becomes $n^{(3/4)^3} < n^{1/2}$, and the input is read entirely. 
Note that throughout all 3 recursion rounds, the proximity parameter remains $\Omega(\epsilon)$ (where $\epsilon$ is the initial proximity parameter), and the failure parameter $\eta$ remains $1/\textrm{poly}(n)$. 
Thus, altogether, the expected query complexity of Algorithm~\ref{alg:dyck_tester} is $\left(1/\epsilon\right)^{O(\log(1/\delta))}\cdot \log^{O(\log(1/\delta))}(n)\cdot n^{1/2 + \delta}$.
Since this bound holds for every constant $\delta>0$, we may apply it with $\delta/2$ in place of $\delta$ and use $\log^{O(\log(1/\delta))}(n) = O(n^{\delta/2})$, giving us query complexity
$
    O(\left(1/\epsilon\right)^{O(\log(1/\delta))}\cdot n^{1/2 + \delta})
$.

\subsubsection{Correctness}

The correctness analysis of Algorithm~\ref{alg:dyck_tester} is very similar to that in~\cite{FMS18}. 
We will detail only the places where some modifications to the analysis in~\cite{FMS18} are needed, and indicate where the analysis proceeds exactly as in~\cite{FMS18}.

The completeness of Algorithm~\ref{alg:dyck_tester} follows straightforwardly from the completeness of the excess parentheses matching procedure, similarly to~\cite{FMS18}.
We therefore focus on the soundness analysis.
Let $x\in\Sigma_m^n$, and consider a partitioning of $x$ into consecutive blocks of length $b$. As in~\cite{FMS18}, we say that a parenthesis in $x$ is \textup{\textsf{locally excess}} if it is excess in its block but is not excess in $x$.
Equivalently, the set of locally excess parentheses consists of the excess parentheses that lie in the matching intervals $I_{i,j}$ and $I_{j,i}$, for all $i<j$ (see Claim~\ref{claim:matching_intervals}).
We next prove the following claim, which is analogous to~\cite[Lem.~3.6]{FMS18}.

\begin{subclaim}\label{claim:locally_excess_matching}
    Suppose that Algorithm~\ref{alg:dyck_tester} is invoked on a string $x\in\Sigma_m^n$ with proximity parameter $\epsilon$ and failure parameter $\eta$.
    If the algorithm accepts $x$ with probability greater than $\eta$, then there exists a non-crossing matching on the locally excess parentheses of $x$ such that
    (a) for every matched pair of parentheses, the two parentheses have the same type, and the number of parentheses in $x$ between them is even;
    (b) at most $0.1\cdot\epsilon\cdot n$ of the locally excess parentheses of $x$ are unmatched.
\end{subclaim}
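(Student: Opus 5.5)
We argue by contraposition, mirroring the soundness analysis of the \texttt{ResStringEq} tester. Fix $x$ and suppose that \emph{every} non-crossing matching of the locally excess parentheses satisfying (a) leaves more than $0.1\cdot\epsilon\cdot n$ of them unmatched. We show that Step~3c then rejects with probability at least $1-\eta$. Since the algorithm accepts only if Step~3c does not reject, this proves the claim. We condition on the excess-number estimates of Step~3a being successful, which fails with probability at most $\eta/2$ by Claim~\ref{claim:successful_estimates_excess}. Under this assumption, Claim~\ref{claim:matching_intervals} supplies approximate matching intervals $\hat I_{i,j}\subseteq I_{i,j}$ and $\hat I_{j,i}\subseteq I_{j,i}$ that lose only $O(\Delta)$ locally excess parentheses per pair. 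They also contain at most $O(\Delta)$ spurious excess parentheses, and these form a suffix of $\mathcal{E}_0(\hat I_{i,j})$ and a prefix of $\mathcal{E}_1(\hat I_{j,i})$.

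Next we build a global matching from local ones. Call an edge $(i,j)$ of the matching graph \emph{heavy} if $\max\{|\hat I_{i,j}|,|\hat I_{j,i}|\}\ge L$, and \emph{bad} if it is heavy and $\mathcal{E}_0(\hat I_{i,j})$, $\mathcal{E}_1(\hat I_{j,i})$ admit no non-crossing matching satisfying (a) with at most $\epsilon'\cdot(|\hat I_{i,j}|+|\hat I_{j,i}|)/2$ unmatched parentheses. For every heavy edge that is not bad, we take the guaranteed local matching. We then count the unmatched locally excess parentheses as follows.
\begin{itemize}
\item Inside good heavy edges there are at most $\epsilon'\cdot\sum(|\hat I_{i,j}|+|\hat I_{j,i}|)$ unmatched parentheses. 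The intervals are disjoint within each block, so this sum is $O(\alpha_1\epsilon n)$.
\item Light edges number at most $3n/b$ by Claim~\ref{claim:planar}, and each contributes at most $2L$. This gives $O(Ln/b)=O(\epsilon' n)$.
\item The $O(\Delta)$ boundary losses over $O(n/b)$ edges total $O(\alpha_2\epsilon' n)$.
\item Each bad edge contributes at most $2b$.
\end{itemize}
The local matchings live on disjoint interval pairs. By the non-crossing structure of $M(x)$, the edges of the planar matching graph are nested or disjoint, so their union is still non-crossing. The parity condition in (a) is inherited edge by edge. Hence, if there were fewer than $\beta\epsilon n/b$ bad edges for a small constant $\beta$, we would obtain a matching with at most $0.1\epsilon n$ unmatched parentheses, contradicting our assumption. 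So there are $\Omega(\epsilon n/b)$ bad edges.

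We then run a birthday argument. The matching graph is planar, so it has a vertex of degree at most $5$. Greedily removing such vertices yields $\Omega(\epsilon n/b)$ bad edges whose endpoints are pairwise distinct. Blocks are sampled independently with probability $p=C\sqrt{b\log(1/\eta)/(\epsilon^2 n)}$, so the probability that no such edge has both endpoints sampled is at most
\[(1-p^2)^{\Omega(\epsilon n/b)}\le \eta/4,\]
by the same computation as in Claim~\ref{claim:birthday}. The constant in $p$ absorbs the factor $\epsilon$.

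Finally, we show that a sampled bad edge causes rejection. For such an edge, Step~3c invokes Algorithm~\ref{alg:decision_dyck} on $(\hat I_{i,j},\hat I_{j,i})$. The spurious-prefix and suffix issue is handled by the boundary slack. We need the hypotheses of Claim~\ref{claim:dyck_decision_soundness}: $|e_0-e_1|\le 0.2\epsilon' n'$ follows since both counts are $\hat w\pm O(\Delta)$ and $\Delta=\alpha_2\epsilon' L\ll\epsilon' L$, and $n'=\Omega(\epsilon' b)$ follows from $L=\Omega(\epsilon' b)$. By its soundness item, the call rejects except with probability $\eta'=\eta/n$. The total failure probability is therefore at most $\eta/2+\eta/4+\eta/n\le\eta$.

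The main obstacle is the gluing step. We must verify that the local matchings, taken on sub-intervals and discarding $O(\Delta)$ boundary parentheses, combine into a non-crossing matching of \emph{all} locally excess parentheses, and that the parity of the number of parentheses between two matched ones is measured in $x$. This is why Definition~\ref{def:excess_parentheses_approx_match} counts parentheses in $x$. For this step I would follow \cite[Lem.~3.6]{FMS18}, using the nesting structure of $M(x)$.
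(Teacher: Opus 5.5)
Your outline follows the paper's route: condition on successful estimates, classify pairs by whether a valid local matching exists on $(\hat I_{i,j},\hat I_{j,i})$, glue the local matchings of good pairs using disjointness of the intervals and planarity, and finish with a birthday argument plus the soundness of Algorithm~\ref{alg:decision_dyck}. However, your extraction of vertex-disjoint bad edges does not work as written. Planarity bounds the \emph{minimum} degree, not the maximum. A star $K_{1,k}$ is planar with $k$ edges, yet it contains no two vertex-disjoint edges. Greedily picking an edge at a vertex of degree at most $5$ and deleting both endpoints can delete arbitrarily many edges at the other endpoint. This matters: if the bad edges formed a star, the probability that some bad edge has both endpoints sampled would be at most $p$, since the center must be sampled. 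So the birthday step needs a bound on how much bad-edge ``load'' a single block can carry, and planarity alone does not give one.

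The paper gets this bound from its choice of threshold. A pair is bad if every valid local matching leaves more than $\epsilon'\cdot b$ parentheses unmatched. This absolute threshold is affordable because Claim~\ref{claim:planar} gives only $O(n/b)$ edges. Both sides of a pair carry $w(i,j)\pm O(\Delta)$ excess parentheses, so each interval of a bad pair contains $\Omega(\epsilon' b)$ of them. The intervals inside a block are disjoint, so each block lies on $O(1/\epsilon)$ bad pairs. Then more than $\epsilon' n/b$ bad pairs yield $\Omega(\epsilon^2 n/b)$ disjoint ones, which is exactly what $p^2=C^2 b\log(1/\eta)/(\epsilon^2 n)$ requires.

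With your length-proportional threshold $\epsilon'(|\hat I_{i,j}|+|\hat I_{j,i}|)/2$ and heaviness $\geq L=\Theta(\epsilon' b)$, a bad edge is only guaranteed $\Omega(\epsilon' L)=\Omega(\epsilon^2 b)$ excess parentheses per side. The same degree argument then gives only $\Omega(\epsilon^3 n/b)$ disjoint bad edges, and a failure bound of $\eta^{\Omega(C^2\epsilon)}$, which is not at most $\eta/4$ for constant $C$.

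Two repairs work. One is to adopt the paper's absolute threshold. The other is to extract disjoint edges by \emph{weight}:
\begin{itemize}
\item Your counting actually shows that the bad edges carry $\Omega(\epsilon n)$ excess parentheses in total, not merely that there are $\Omega(\epsilon n/b)$ of them.
\item For a bad edge, the two sides differ by $O(\Delta)$, which is much smaller than either side. Hence the bad edges at any block carry $O(b)$ parentheses in total.
\item Greedily choosing an edge and deleting all bad edges at its endpoints therefore removes $O(b)$ weight per step. This leaves $\Omega(\epsilon n/b)$ disjoint bad edges, after which your birthday computation goes through.
\end{itemize}
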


From the above claim, the soundness of Algorithm~\ref{alg:dyck_tester} follows by the same argument as in~\cite[Sec.~3.2]{FMS18}.

\begin{subproof}[\textsf{\textup{\large Proof of Claim~\ref{claim:locally_excess_matching}:}}]
    Since the estimates of the excess numbers computed in Step~3a of Algorithm~\ref{alg:dyck_tester} are successful with probability at least $1-\eta/2$, and $x$ is accepted with probability greater than $\eta$, there exists a fixing of the estimates such that the estimates are successful and, under this fixing, $x$ is accepted with probability greater than $\eta/2$.
    Fix such estimates, and consider the resulting approximate matching intervals $(\hat{I}_{i,j},\hat{I}_{j,i})$, for every $i<j$.

    We call a pair of blocks $(i,j)$ \textsf{good} if there exists a non-crossing matching between the excess opening parentheses in the restriction of $x$ to $\hat{I}_{i,j}$ and the excess closing parentheses in the restriction of $x$ to $\hat{I}_{j,i}$ such that 
    (a) for every matched pair of parentheses, the two parentheses have the same type, and the number of parentheses in $x$ between them is even;
    and 
    (b) at most $\epsilon' \cdot b$ of these excess parentheses are unmatched. Otherwise, we call the pair \textsf{bad}. 

    Note that in every bad pair both approximate matching intervals contain at least $\Omega(\epsilon'\cdot b) = L$ excess parentheses: 
    Since the estimates are successful, the numbers of excess parentheses in the two intervals can differ by at most $O(\Delta)$ (by the guarantees of Claim~\ref{claim:matching_intervals}). 
    Thus, both intervals must contain at least $(\epsilon'\cdot b - O(\Delta))/2 = \Omega(\epsilon'\cdot b)$ parentheses or the pair would be trivially good.

    We next claim that there are at most $\epsilon'\cdot n/b$ bad pairs.
    Indeed, suppose to the contrary that there are more than $\epsilon'\cdot n/b = \Omega(\epsilon\cdot n/b)$ bad pairs. 
    Note that because each interval of a bad pair contains at least $\Omega(\epsilon'\cdot b)= \Omega(\epsilon\cdot b)$ parentheses, it holds that each block can participate in at most $O(1/\epsilon)$ bad pairs.
    Thus, there is a set $S$ of at least $\Omega\left(\epsilon^2\cdot n/b\right)$ bad pairs that are pair-wise \emph{disjoint} (i.e., each block appears in at most one pair in $S$).
    Setting the sampling probability in Step~2b (of Algorithm~\ref{alg:dyck_tester}) to $p = C\cdot\sqrt{\frac{b\cdot\log(1/\eta)}{\epsilon^2 \cdot n}}$, where $C$ is a sufficiently large constant, we get that the probability that no pair in $S$ has both of its blocks sampled is at most
    $$
        \left(1-p^2\right)^{\Omega\left(\epsilon^2\cdot n/b\right)}
        \leq e^{-\Omega\left(p^2\cdot\epsilon^2\cdot n/b\right)}
        = e^{-\Omega\left(C^2\cdot{\log(1/\eta)}\right)}
        \leq \eta/4.
    $$
    By Claim~\ref{claim:dyck_decision_soundness}, the excess parentheses matching procedure accepts a bad pair with probability at most $\eta' = \eta/n \leq \eta/4$.
    It follows that the tester will accept with probability at most $\eta/4+\eta/4 = \eta/2$, contradicting our choice of the fixed estimates. 
    Hence, there are at most $\epsilon'\cdot n/b$ bad pairs. Since every pair of blocks can contain at most $O(b)$ excess parentheses, the bad pairs contain at most $O(\epsilon'\cdot n)$ locally excess parentheses in total. 

    We combine the matchings guaranteed for all good pairs (by the definition of good) to create the desired matching on the locally excess parentheses of $x$. 
    Recall that by Claim~\ref{claim:matching_intervals} it holds that for every $i<j$, the approximate matching intervals $\hat{I}_{i,j}$ and $\hat{I}_{j,i}$ are contained in the respective (exact) matching intervals $I_{i,j}$ and $I_{j,i}$.
    Thus, the approximate intervals of different pairs are \emph{disjoint}, and hence combining the different matchings creates a legal matching (i.e., no parenthesis is matched to more than one parenthesis).
    Furthermore, since the matching graph is planar, the resulting matching is non-crossing.

    It remains to bound the number of locally excess parentheses that are not included in the resulting matching. 
    First, for each good pair $i<j$, at most $\epsilon'\cdot b$ of the excess parentheses in $\hat{I}_{i,j}$ and $\hat{I}_{j,i}$ are unmatched. 
    Furthermore, by Claim~\ref{claim:matching_intervals}, the (exact) matching intervals ${I}_{i,j}$ and ${I}_{j,i}$ contain at most $O(\Delta) = O(\epsilon'\cdot b)$ excess parentheses outside $\hat{I}_{i,j}$ and $\hat{I}_{j,i}$.
    Hence, for each good pair $i<j$, at most $O(\epsilon'\cdot b)$ of the excess parentheses in ${I}_{i,j}$ and ${I}_{j,i}$ are unmatched.  
    By Claim~\ref{claim:planar}, the matching graph has at most $O(n/b)$ edges.
    Therefore, altogether, the number of unmatched locally excess parentheses belonging to good pairs is at most $O(\epsilon'\cdot n)$.
    Together with the at most $O(\epsilon'\cdot n)$ locally excess parentheses belonging to bad pairs, the total number of unmatched locally excess parentheses is at most $O(\epsilon'\cdot n)$.
    Recalling that $\epsilon' = \alpha_1 \cdot \epsilon$, we set $\alpha_1$ sufficiently small so that the claim follows. 
\end{subproof}

\subsection{On the adaptive tester for the \texttt{Dyck} languages}

Recall that~\cite{FMS18} showed an \emph{adaptive} tester for the \texttt{Dyck} languages with query complexity $O(\left(1/\epsilon\right)^{\textup{\textrm{poly}}(1/\delta)} \cdot n^{2/5 + \delta})$.
Their tester uses a procedure called ``Substring $\epsilon$-matching'', which has essentially the same guarantees as our ``excess parentheses matching'' procedure (Algorithm~\ref{alg:decision_dyck}) when the latter procedure is applied to the output of our Query procedure (Algorithm~\ref{alg:query}).
The query complexity of the ``Substring $\epsilon$-matching'' procedure is $O(\left(1/\epsilon\right)^{\textup{\textrm{poly}}(1/\delta)} \cdot n^{1/2 + \delta})$.
Through a more careful analysis and parameter choice, the query complexity of our procedure has an improved dependence on $\epsilon$; namely, our procedure's query complexity is $O(\left(1/\epsilon\right)^{O(\log(1/\delta))} \cdot n^{1/2 + \delta})$.\footnote{
    Specifically, the improvement comes from the fact that in the procedure of~\cite{FMS18}, the recursive call uses proximity parameter $\Theta(\epsilon^2)$, whereas our analysis allows us to use proximity parameter $\Theta(\epsilon)$.
} 
By replacing the ``Substring $\epsilon$-matching'' procedure in the adaptive tester of~\cite{FMS18} with our procedure, we obtain:

\begin{theorem}\label{thm:upper_bound_dyck:adaptive}
    For every $m\in \mathbb{N}$, and for any \textup{(}arbitrarily small\textup{)} constant $\delta>0$, there exists an \textup{(}adaptive\textup{)} tester for $\textup{\texttt{Dyck}}_m$ with query complexity
    $O(\left(1/\epsilon\right)^{O(\log(1/\delta))} \cdot n^{2/5 + \delta})$.
\end{theorem}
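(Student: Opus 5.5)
The plan is to obtain Theorem~\ref{thm:upper_bound_dyck:adaptive} by modular substitution into the adaptive tester of~\cite{FMS18}. That tester calls ``Substring $\epsilon$-matching'' as a black box. We replace each such call with one invocation of the Query procedure (Algorithm~\ref{alg:query}) on each of the two relevant substrings, followed by Algorithm~\ref{alg:decision_dyck}. The other steps of~\cite{FMS18} are unchanged: the Lemma~\ref{lem:dyck_consistency_reduction} reduction to $\texttt{Dyck}_m$-\texttt{consistency}, the adaptive rank/excess estimation, and the choice of which matching intervals to compare. Adaptivity is harmless, since the Query procedure is simply run on intervals that the outer tester chose adaptively.

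For correctness, I would first check that the interface guarantees coincide. Claim~\ref{claim:dyck_decision_soundness} gives two properties. First, acceptance with probability $1-\eta$ whenever the excess sequences match up to boundary slack $0.1\epsilon n$. Second, if the procedure accepts with probability more than $\eta$, then there is a non-crossing, type-consistent, parity-consistent matching leaving at most $\epsilon n$ excess parentheses unmatched. These are exactly the completeness and soundness properties that~\cite{FMS18} uses from Substring $\epsilon$-matching.

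I would also verify that the outer tester calls the procedure only under the required promises: $|e_0(s)-e_1(s')|\le 0.2\epsilon n$ and $n=\Omega(\epsilon N)$. This should follow from its successful estimates with deviation $\Delta=\Theta(\epsilon L)$, via Claim~\ref{claim:matching_intervals}, in the same way as in the proof of Claim~\ref{claim:locally_excess_matching}. The failure probabilities are set to $1/\mathrm{poly}(n)$ and handled by a union bound over the polynomially many calls. The soundness argument of~\cite[Sec.~3]{FMS18} then goes through verbatim.

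For query complexity, the outer tester uses $\widetilde O(n^2/(\epsilon^2L^2))$ estimation queries plus $\mathrm{poly}(1/\epsilon)$ matching calls, each on strings of length $O(L/\epsilon)$. By the bound of Section~\ref{sec:query_complex_recursive}, each call costs $(1/\epsilon)^{O(\log(1/\delta))}\log^{O(\log 1/\delta)}(n)\,L^{1/2+\delta}$. Taking $L=n^{4/5}$ balances the two terms at $n^{2/5+O(\delta)}$, and replacing $\delta$ by a constant fraction of itself absorbs the polylogarithmic factors.

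The main obstacle is the $\epsilon$-dependence. I must make sure that the~\cite{FMS18} outer layer does not itself reintroduce a $\mathrm{poly}(1/\delta)$ exponent in $1/\epsilon$, and that every call uses proximity $\Theta(\epsilon)$ rather than $\Theta(\epsilon^2)$. I would trace through their outer tester and check that it is not recursive with a depth depending on $\delta$, so that it contributes only a $\mathrm{poly}(1/\epsilon)$ factor. Any recursion of the kind in Algorithm~\ref{alg:dyck_tester} has depth $O(1)$, so multiplying $O(1)$ levels of $\mathrm{poly}(1/\epsilon)$ keeps the total at $(1/\epsilon)^{O(\log(1/\delta))}$.
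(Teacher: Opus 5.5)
Your proposal is correct and takes essentially the paper's route: the paper obtains the theorem by replacing the Substring $\epsilon$-matching procedure in the adaptive tester of~\cite{FMS18} with the excess parentheses matching procedure (Algorithm~\ref{alg:decision_dyck}, run on the output of the Query procedure), and attributes the improved $\epsilon$-dependence to the recursive calls using proximity parameter $\Theta(\epsilon)$ instead of $\Theta(\epsilon^2)$. The extra checks you list are left implicit in the paper but are exactly what is needed to make the substitution rigorous: the promises of Claim~\ref{claim:dyck_decision_soundness}, the union bound over failure probabilities, the choice $L=n^{4/5}$, and the fact that the outer layer contributes only a $\mathrm{poly}(1/\epsilon)$ factor.
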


%% file: J_acknowledgments.tex
\section*{Acknowledgments}
\addcontentsline{toc}{section}{Acknowledgments}

I am grateful to my advisor, Oded Goldreich, for his guidance throughout the development of this work, and for very helpful feedback on its presentation.
I am grateful to Gil Aharoni for valuable suggestions on the presentation.
The results in this work were obtained with assistance from GPT 5.5 and 6.0.

%% file: K_equal_length_reduction.tex
\section{A reduction to the equal-length case in \texttt{ResStringEq}}\label{apdx:equal_length}

In Definition~\ref{def:rse}, the property \texttt{ResStringEq} is defined over pairs of strings $(s,s')$ of equal length. 
An alternative definition would allow $s$ and $s'$ to differ in length.
The purpose of this short appendix is to show that $\epsilon$-testing this more general definition reduces to $\epsilon/2$-testing the equal-length case, by padding the shorter string with star (i.e., `$*$') symbols.
Clearly, the padding preserves membership in \textup{\texttt{ResStringEq}}.
On the other hand, consider any modifications to the symbols of the padded pair that make the resulting residual strings equal.
Notice that we may assume that no star symbol is modified:
if a star symbol in one string is changed into a non-star symbol, we can instead replace the symbol matched with it in the other string by a star symbol.
Thus, the padding symbols are not modified, implying the same modifications can be made in the original pair to make its residual strings equal.
Since padding at most doubles the input length, an $\epsilon$-far pair remains $\epsilon/2$-far.

%% file: L_sd_fact.tex
\section{Proof of Fact~\ref{fact:stat_dist_adaptive}}\label{apdx:proof_of_stat_dist_fact}

This section provides a proof of Fact~\ref{fact:stat_dist_adaptive} (restated below), which generalizes Fact~\ref{fact:stat_dist}.

\begin{fact}[Fact~\ref{fact:stat_dist_adaptive}, restated]\label{fact:stat_dist_adaptive_restated}
    Let $X$ and $X'$ be two discrete random variables supported over a domain $D$, and let $\mathcal{E}$ and $\mathcal{E}'$ be two events over the probability spaces of $X$ and $X'$, respectively. If for every $x\in D$ it holds that $\Pr[X=x, \ovbar{\mathcal{E}}] = \Pr[X' = x, \ovbar{\mathcal{E}}']$, then $\Pr\left[\mathcal{E}\right] = \Pr\left[\mathcal{E}'\right]$ and $\Delta(X,X') \leq \Pr\left[\mathcal{E}\right]$.
\end{fact}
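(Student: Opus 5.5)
The plan is to prove the two conclusions separately, using only the hypothesis that $\Pr[X=x,\ovbar{\mathcal{E}}] = \Pr[X'=x,\ovbar{\mathcal{E}}']$ for every $x\in D$.

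\emph{Equality of the event probabilities.} Summing the hypothesis over all $x\in D$ gives $\Pr[\ovbar{\mathcal{E}}] = \Pr[\ovbar{\mathcal{E}}']$, since $X$ and $X'$ are discrete and supported on $D$. Taking complements yields $\Pr[\mathcal{E}] = \Pr[\mathcal{E}']$.

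\emph{Bound on the statistical distance.} For each $x\in D$, split by the law of total probability:
\begin{align*}
\Pr[X=x] &= \Pr[X=x,\ovbar{\mathcal{E}}] + \Pr[X=x,\mathcal{E}],\\
\Pr[X'=x] &= \Pr[X'=x,\ovbar{\mathcal{E}}'] + \Pr[X'=x,\mathcal{E}'].
\end{align*}
By the hypothesis, the first terms cancel when we subtract, so
\[
|\Pr[X=x]-\Pr[X'=x]| = |\Pr[X=x,\mathcal{E}] - \Pr[X'=x,\mathcal{E}']| \le \Pr[X=x,\mathcal{E}] + \Pr[X'=x,\mathcal{E}'].
\]
Summing over $x$ and multiplying by $1/2$ gives
\[
\Delta(X,X') \le \tfrac12\bigl(\Pr[\mathcal{E}]+\Pr[\mathcal{E}']\bigr) = \Pr[\mathcal{E}],
\]
where the last equality uses the first part.

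Neither step is a real obstacle. The only point needing care is that the sums over $D$ are legitimate, which holds because the variables are discrete. This also recovers Fact~\ref{fact:stat_dist} as the special case $\mathcal{E}=\mathcal{E}'$ with identical conditional distributions given $\ovbar{\mathcal{E}}$: if $\Pr[\ovbar{\mathcal{E}}]>0$, multiplying the conditional distributions by $\Pr[\ovbar{\mathcal{E}}]$ gives exactly the hypothesis; if $\Pr[\ovbar{\mathcal{E}}]=0$, the bound $\Delta(X,X')\le 1=\Pr[\mathcal{E}]$ is trivial.
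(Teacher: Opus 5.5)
Your proof is correct and follows essentially the same route as the paper's: you sum the hypothesis over $D$ to get $\Pr[\mathcal{E}]=\Pr[\mathcal{E}']$, then split each point probability, cancel the $\ovbar{\mathcal{E}}$ terms, apply the triangle inequality, and use the first part to conclude $\Delta(X,X')\le\Pr[\mathcal{E}]$. Your closing remark on recovering Fact~\ref{fact:stat_dist}, including the degenerate case $\Pr[\ovbar{\mathcal{E}}]=0$, is also correct.
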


\begin{proof}
    First, notice that $\Pr[\ovbar{\mathcal{E}}] = \sum_{x\in D}{\Pr[X = x, \ovbar{\mathcal{E}}]} = \sum_{x\in D}{\Pr[X' = x, \ovbar{\mathcal{E}}']} =\Pr[\ovbar{\mathcal{E}}']$, and thus $\Pr[\mathcal{E}] = \Pr[\mathcal{E}']$.
    Now, it holds that:
    \begin{align*}
        \Delta(X, X') &= \frac{1}{2}\cdot \sum_{x\in D} | \Pr[X=x] - \Pr[X'=x] | \\
                      &= \frac{1}{2}\cdot \sum_{x\in D} | \Pr[X=x, \mathcal{E}] + \Pr[X=x, \ovbar{\mathcal{E}}]  - (\Pr[X'=x, \mathcal{E}'] + \Pr[X'=x, \ovbar{\mathcal{E}}']) | \\
                      &= \frac{1}{2}\cdot \sum_{x\in D} | \Pr[X=x, \mathcal{E}] - \Pr[X'=x, \mathcal{E}'] | \\
                      &\leq \frac{1}{2}\cdot \sum_{x\in D} \left(\, \Pr[X=x, \mathcal{E}] + \Pr[X'=x, \mathcal{E}'] \,\right) \\
                      &= \frac{1}{2}\cdot \big( \Pr[\mathcal{E}] + \Pr[\mathcal{E}'] \,\big) 
    \end{align*}
    where the inequality is by the triangle inequality. 
    Since $\Pr[\mathcal{E}] = \Pr[\mathcal{E}']$, the claim follows.
\end{proof}